\title{Clique-Width for Graph Classes Closed under Complementation\thanks{This paper received support from EPSRC (EP/K025090/1 and EP/L020408/1) and the Leverhulme Trust (RPG-2016-258).}}
\author{Alexandre Blanch\'e\inst{1} \and Konrad K. Dabrowski\inst{2} \and Matthew Johnson\inst{2} \and Vadim~V.~Lozin\inst{3} \and Dani\"el Paulusma\inst{2} \and Viktor Zamaraev\inst{3}}
\institute{
\'Ecole normale sup\'erieure de Rennes,\\
D\'epartement Informatique et T\'el\'ecommunications,\\
Campus de Ker Lann, Avenue Robert Schuman, 35170 Bruz, France\\
\texttt{alexandre.blanche@ens-rennes.fr}\\
\and
School of Engineering and Computing Sciences, Durham University
Science Laboratories, South Road, Durham DH1 3LE, United Kingdom\\
\texttt{\{konrad.dabrowski,matthew.johnson2,daniel.paulusma\}@durham.ac.uk}\\
\and
Mathematics Institute, University of Warwick, Coventry CV4 7AL, United Kingdom\\
\texttt{\{v.lozin,v.zamaraev\}@warwick.ac.uk}}
\newcommand{\ssi}{\subseteq_i}
\newcommand{\si}{\supseteq_i}
\newcommand{\NP}{{\sf NP}}
\DeclareMathOperator{\cw}{cw}
\newcounter{ctrclaim}[theorem]
\newcounter{ctrcase}[theorem]
\newcounter{ctrsubcase}[ctrcase]
\newtheorem{oproblem}{Open Problem}
\newcommand\displaycase[1]{{\bf #1}}
\newcommand{\qedllncs}{\qed}
\newcommand\faketheorem[1]{{\bf #1}}
\newcommand{\clm}[1]{\setcounter{ctrcase}{0}\medskip\phantomsection\refstepcounter{ctrclaim}\noindent\displaycase{Claim \thectrclaim. }{\em #1}\\}
\newcommand{\clmnonewline}[1]{\setcounter{ctrcase}{0}\medskip\phantomsection\refstepcounter{ctrclaim}\noindent\displaycase{Claim \thectrclaim. }{\em #1}}
\newcommand{\shortclm}[1]{\setcounter{ctrcase}{0}\phantomsection\refstepcounter{ctrclaim}\noindent\displaycase{Claim \thectrclaim. }{\em #1}}
\newcommand{\thmcase}[1]{\medskip\phantomsection\refstepcounter{ctrcase}\noindent\displaycase{Case \thectrcase. }{\em #1}\\}
\begin{document}
\maketitle
\begin{abstract}
Clique-width is an important graph parameter due to its algorithmic and structural properties. A graph class is hereditary if it can be characterized by a (not necessarily finite) set~${\cal H}$ of forbidden induced subgraphs. 
We initiate a systematic study into the boundedness of clique-width of hereditary graph classes closed under complementation.
First, we extend the known classification for the $|{\cal H}|=1$ case by classifying the boundedness of clique-width for every set~${\cal H}$ of self-complementary graphs.
We then completely settle the $|{\cal H}|=2$ case.
In particular, we determine one new class of $(H,\overline{H})$-free graphs of bounded clique-width (as a side effect, this leaves only six classes of $(H_1,H_2)$-free graphs, for which it is not known whether their clique-width is bounded).
Once we have obtained the classification of the $|{\cal H}|=2$ case, we research the effect of forbidding self-complementary graphs on the boundedness of clique-width.
Surprisingly, we show that for a set~${\cal F}$ of self-complementary graphs on at least five vertices, the classification of the boundedness of clique-width for $(\{H,\overline{H}\}\cup {\cal F})$-free graphs coincides with the one for the $|{\cal H}|=2$ case if and only if~${\cal F}$ does not include the bull (the only non-empty self-complementary graphs on fewer than five vertices are~$P_1$ and~$P_4$, and $P_4$-free graphs have clique-width at most~$2$).
Finally, we discuss the consequences of our results for the {\sc Colouring} problem.
\end{abstract}

\section{Introduction}\label{sec:into}
Many graph-theoretic problems that are computationally hard for general graphs may still be solvable in polynomial time if the input graph
can be decomposed into large parts of ``similarly behaving'' vertices.
Such decompositions may lead to an algorithmic speed up and are often defined via some type of graph construction.
One particular type is to use vertex labels and to allow certain graph operations, which ensure that vertices labelled alike will always keep the same label and thus behave identically.
The clique-width~$\cw(G)$ of a graph~$G$ is the minimum number of different labels needed to construct~$G$ using four such operations (see Section~\ref{sec:prelim} for details).
Clique-width has been studied extensively both in algorithmic and structural graph theory.
The main reason for its popularity is that, indeed, many well-known \NP-hard problems~\cite{CMR00,EGW01,KR03b,Ra07}, such as {\sc Graph Colouring} and {\sc Hamilton Cycle}, become polynomial-time solvable on any graph class~${\cal G}$ of {\it bounded} clique-width, that is, for which there exists a constant~$c$, such that every graph in~${\cal G}$ has clique-width at most~$c$.
{\sc Graph Isomorphism} is also polynomial-time solvable on such graph classes~\cite{GS15}.
Having bounded clique-width is equivalent to having bounded rank-width~\cite{OS06} and having bounded NLC-width~\cite{Johansson98}, two other well-known width-parameters.
However, despite these close relationships, clique-width is a notoriously difficult graph parameter, and our understanding of it is still very limited.
For instance, no polynomial-time algorithms are known for computing the clique-width of very restricted graph classes, such as unit interval graphs, or for deciding whether a graph has clique-width at most~$4$.\footnote{It is known that computing clique-width is \NP-hard in general~\cite{FRRS09} and that deciding whether a graph has clique-width at most~$3$ is polynomial-time solvable~\cite{CHLRR12}.}
In order to get a better understanding of clique-width and to identify new ``islands of tractability'' for central \NP-hard problems, many graph classes of bounded and unbounded clique-width have been identified; see, for instance, the Information System on Graph Classes and their Inclusions~\cite{isgci}, which keeps a record of such graph classes.
In this paper we study the following research question:

\smallskip
\noindent
{\it What kinds of properties of a graph class ensure that its clique-width is bounded?}

\smallskip
\noindent
We refer to the surveys~\cite{Gu07,KLM09} for examples of such properties.
Here, we consider graph complements.
The complement~$\overline{G}$ of a graph~$G$ is the graph with vertex set~$V_G$ and edge set $\{uv\; |\; uv\notin E(G)\}$ and has clique-width $\cw(\overline{G})\leq 2\cw(G)$~\cite{CO00}.
This result implies that a graph class~${\cal G}$ has bounded clique-width if and only if the class consisting of all complements of graphs in~${\cal G}$ has bounded clique-width.
Due to this, we initiate a {\it systematic} study of the boundedness of clique-width for graph classes~${\cal G}$ {\it closed under complementation}, that is, for every graph~$G\in {\cal G}$, its complement~$\overline{G}$ also belongs to~${\cal G}$.

To get a handle on graph classes closed under complementation, we restrict ourselves to graph classes that are not only closed under complementation but also under vertex deletion.
This is a natural assumption, as deleting a vertex does not increase the clique-width of a graph.
A graph class closed under vertex deletion is said to be {\it hereditary} and can be characterized by a (not necessarily finite) set~${\cal H}$ of forbidden induced subgraphs.
Over the years many results on the (un)boundedness of clique-width of hereditary graph classes have appeared.
We briefly survey some of these results below.

\begin{sloppypar}
A hereditary graph class of graphs is monogenic or $H$-free if it can be characterized by one forbidden induced subgraph~$H$, and bigenic or $(H_1,H_2)$-free if it can be characterized by two forbidden induced subgraphs~$H_1$ and~$H_2$.
It is well known (see~\cite{DP15}) that a class of $H$-free graphs has bounded clique-width if and only if~$H$ is an induced subgraph of~$P_4$.\footnote{We refer to Section~\ref{sec:prelim} for all the notation used in this section.}
By combining known results~\cite{BL02,BDHP15,BELL06,BKM06,BLM04b,BLM04,BM02,DGP14,DHP0,DLRR12,MR99} with new results for bigenic graph classes, Dabrowski and Paulusma~\cite{DP15} classified the (un)boundedness of clique-width of $(H_1,H_2)$-free graphs for all but 13 pairs $(H_1,H_2)$ (up to an equivalence relation).
Afterwards, five new classes of $(H_1,H_2)$-free graphs were identified by Dross et al.~\cite{DDP15} and recently, another one was identified by Dabrowski et al.~\cite{DLP}.
This means that only seven cases $(H_1,H_2)$ remained open.
Other systematic studies were performed for $H$-free weakly chordal graphs~\cite{BDHP15}, $H$-free chordal graphs~\cite{BDHP15} (two open cases), $H$-free triangle-free graphs~\cite{DLP} (two open cases), $H$-free bipartite graphs~\cite{DP14}, $H$-free split graphs~\cite{BDHP15b} (two open cases), and ${\cal H}$-free graphs where~${\cal H}$ is any set of $1$-vertex extensions of the~$P_4$~\cite{BDHR05} or any set of graphs on at most four vertices~\cite{BELL06}.
Clique-width results or techniques for these graph classes impacted upon each other and could also be used for obtaining new results for bigenic graph classes.
\end{sloppypar}

\medskip
\noindent
{\bf Our Contribution.}
Recall that we investigate the clique-width of hereditary graph classes closed under complementation.
A graph that contains no induced subgraph isomorphic to a graph in a set~${\cal H}$ is said to be {\em ${\cal H}$-free}.
We first consider the $|{\cal H}|=1$ case.
The class of $H$-free graphs is closed under complementation if and only if~$H$ is a self-complementary graph, that is, $H=\overline{H}$.
Self-complementary graphs have been extensively studied; see~\cite{Fa99} for a survey.
From the aforementioned result for $P_4$-free graphs, we find that the only self-complementary graphs~$H$ for which the class of $H$-free graphs has bounded clique-width are $H=P_1$ and $H=P_4$.
In Section~\ref{s-self} we prove the following generalization of this result.

\begin{theorem}\label{t-self}
Let~${\cal H}$ be a
set of non-empty self-complementary graphs.
Then the class of ${\cal H}$-free graphs has bounded clique-width if and only if either $P_1 \in {\cal H}$ or $P_4 \in {\cal H}$.
\end{theorem}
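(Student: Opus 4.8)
The plan is to prove the two implications separately. The implication ``$P_1\in{\cal H}$ or $P_4\in{\cal H}$ $\Rightarrow$ bounded'' follows immediately from known results, whereas the converse, which I would establish in its contrapositive form, is where the content lies.

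First I would dispatch the easy direction. Suppose $P_1\in{\cal H}$ or $P_4\in{\cal H}$. If $P_1\in{\cal H}$, then every ${\cal H}$-free graph is $P_1$-free and hence has no vertices at all, so the class is trivial and its clique-width is certainly bounded. If $P_4\in{\cal H}$, then every ${\cal H}$-free graph is $P_4$-free; since clique-width does not increase when passing to an induced subgraph, the class of ${\cal H}$-free graphs inherits the bound $\cw\le 2$ that holds for all $P_4$-free graphs. In either case the clique-width is bounded.

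For the converse I would assume $P_1\notin{\cal H}$ and $P_4\notin{\cal H}$ and exhibit a subclass of the ${\cal H}$-free graphs that already has unbounded clique-width. The key observation is that, since every graph in ${\cal H}$ is non-empty and self-complementary, and the only non-empty self-complementary graphs on fewer than five vertices are $P_1$ and $P_4$, every $H\in{\cal H}$ has at least five vertices. The main step is then a short lemma: \emph{every self-complementary graph $H$ on at least five vertices is non-bipartite.} To prove it I would note that a bipartite graph on $n\ge 5$ vertices has a part of its bipartition of size at least three; that part is independent, so in the complement it induces a $K_3$, whence $\overline{H}$ contains a triangle while $H$ itself does not, contradicting $H\cong\overline{H}$.

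Granting this lemma, the conclusion follows quickly. Since induced subgraphs of bipartite graphs are bipartite and no $H\in{\cal H}$ is bipartite, no bipartite graph can contain an induced copy of any member of ${\cal H}$; that is, every bipartite graph is ${\cal H}$-free. As the class of bipartite graphs is well known to have unbounded clique-width (for instance, the $n\times n$ grids are bipartite and their clique-width grows with $n$), the larger class of ${\cal H}$-free graphs has unbounded clique-width as well. The only genuine obstacle here is the lemma linking self-complementarity with non-bipartiteness, whose proof is the elementary counting argument sketched above; the remaining conceptual step is simply the realisation that bipartiteness is exactly the right property to carve out an ${\cal H}$-free family of unbounded clique-width, after which everything reduces to bookkeeping.
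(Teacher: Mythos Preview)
Your proof is correct and takes a genuinely different route from the paper's. The paper proceeds via Lemma~\ref{lem:ramsey-for-self-comp}, which shows (using the split-graph characterisation) that every self-complementary graph that is not an induced subgraph of the bull must contain an induced $C_4$, $C_5$ or $K_4$; hence the class of $(\mbox{bull},C_4,C_5,K_4)$-free graphs sits inside the ${\cal H}$-free graphs, and unboundedness then follows from the general Lemma~\ref{lem:classS} (finite sets of forbidden graphs all outside~${\cal S}$). Your argument replaces both ingredients by a single elementary observation: a self-complementary graph on at least five vertices cannot be bipartite (because a bipartition forces an independent set of size three, hence a triangle in the complement, hence a triangle in the graph itself). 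This immediately embeds all bipartite graphs---in particular grids---into the ${\cal H}$-free class. Your approach is shorter and more self-contained for this theorem alone; the paper's approach has the advantage that Lemma~\ref{lem:ramsey-for-self-comp} is reused later in the proof of Theorem~\ref{t-main2}, so establishing it here amortises that cost.
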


We now consider the $|{\cal H}|=2$ case.
Let ${\cal H}=\{H_1,H_2\}$.
Due to Theorem~\ref{t-self} we may assume $H_2=\overline{H_1}$ and~$H_1$ is not self-complementary.
The class of $(2P_1+\nobreak P_3,\overline{2P_1+P_3})$-free graphs was one of the seven remaining
bigenic graph classes, and the only bigenic graph class closed under complementation, for which boundedness of clique-width was open.
We settle this case by proving in Section~\ref{s-main} that the clique-width of this class is bounded.
In the same section we combine this new result with known results to prove the following theorem, which, together with Theorem~\ref{t-self}, shows to what extent the property of being closed under complementation helps with bounding the clique-width for bigenic graph classes (see also \figurename~\ref{fig:H-co-H-bdd-cw}).

\begin{theorem}\label{t-main}
For a graph~$H$, the class of $(H,\overline{H})$-free graphs has bounded clique-width if and only if~$H$ or~$\overline{H}$ is an induced subgraph of $K_{1,3}$, $P_1+P_4$, $2P_1+P_3$ or~$sP_1$ for some $s\geq 1$.
\end{theorem}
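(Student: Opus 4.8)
The plan is to establish Theorem~\ref{t-main} as a specialisation of the (essentially complete) classification of clique-width boundedness for bigenic graph classes, augmented by the new boundedness result of Section~\ref{s-main}. The key observation is that $(H,\overline{H})$-free graphs form a bigenic class with $H_2=\overline{H_1}$, so both directions can be read off from the bigenic picture once we restrict attention to self-complementary pairs $\{H,\overline{H}\}$. Throughout I use that bounded clique-width is preserved under complementation (since $\cw(\overline{G})\le 2\cw(G)$) and under taking subclasses, so that I only need to argue about one representative of each pair $\{H,\overline{H}\}$ and, on the bounded side, only about the maximal graphs.

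For the \emph{if} direction, note that if $H\ssi G$ then $H$-free implies $G$-free and $\overline{H}\ssi\overline{G}$ implies $\overline{H}$-free implies $\overline{G}$-free, so the class of $(H,\overline{H})$-free graphs is a subclass of the class of $(G,\overline{G})$-free graphs. Hence it suffices to bound the four maximal cases $G\in\{K_{1,3},\,P_1+P_4,\,2P_1+P_3,\,sP_1\}$. The case $G=sP_1$ is handled by a Ramsey argument: a graph that is both $sP_1$-free and $K_s$-free (note $\overline{sP_1}=K_s$) has neither an independent set nor a clique of size $s$, so its order is bounded by $R(s,s)$ and its clique-width is trivially bounded. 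The case $G=2P_1+P_3$ is exactly the new result proved in Section~\ref{s-main}. For $G=K_{1,3}$ (with $\overline{K_{1,3}}=K_3+P_1$) and $G=P_1+P_4$ I would cite the known boundedness results for these bigenic classes, invoking complement-invariance of boundedness where needed.

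For the \emph{only if} direction I would argue that whenever neither $H$ nor $\overline{H}$ is an induced subgraph of any graph in the list, the class of $(H,\overline{H})$-free graphs has unbounded clique-width. The conceptually clean route is to invoke the existing bigenic classification directly: every pair $(H_1,H_2)$ with $H_2=\overline{H_1}$ not covered by the list is already known to yield unbounded clique-width, and the introduction records that the unique self-complementary pair left open by the bigenic classification is precisely $(2P_1+P_3,\overline{2P_1+P_3})$, now settled as \emph{bounded} by Section~\ref{s-main}; thus no open self-complementary pairs remain to obstruct the dichotomy. Operationally this amounts to a case analysis on the structure of $H$ (number of vertices, presence of edges and non-edges, connectedness of $H$ and of $\overline{H}$) that locates, in every non-listed $H$ or its complement, a small ``bad'' configuration forcing the class to contain a known minimal class of unbounded clique-width as a subclass. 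Combined with Theorem~\ref{t-self} for the self-complementary graphs $H=\overline{H}$, which contributes exactly $P_1\ssi sP_1$ and $P_4\ssi P_1+P_4$, this yields the stated equivalence.

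I expect the main obstacle to be the boundedness of the $(2P_1+P_3,\overline{2P_1+P_3})$ class, which is the genuinely new ingredient and is deferred to Section~\ref{s-main}; once that is in hand, the remaining work is essentially bookkeeping, namely checking that the four maximal bounded graphs together with their complements account for every self-complementary pair classified as bounded, and verifying that the case analysis on the unbounded side is exhaustive and matches the corresponding entries of the known bigenic classification.
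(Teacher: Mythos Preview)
Your proposal is correct and follows essentially the same route as the paper: both directions are read off from the bigenic classification (Theorem~\ref{thm:classification2}) together with the new boundedness result for $(2P_1+P_3,\overline{2P_1+P_3})$-free graphs. The one place where the paper is more concrete than your sketch is the unbounded direction: rather than an unspecified ``case analysis on the structure of~$H$'', the paper uses Lemma~\ref{lem:useful} to show that any $H\in{\cal S}$ not covered by the list must contain one of $K_{1,3}+P_1$, $2P_2$, $3P_1+P_2$, or $S_{1,1,2}$, reducing immediately to four explicit unbounded bigenic classes; also, your invocation of Theorem~\ref{t-self} is unnecessary, since the argument handles self-complementary~$H$ uniformly.
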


\begin{figure}
\begin{center}
\begin{tabular}{cccccccc}
\begin{minipage}{0.11\textwidth}
\centering
\scalebox{0.6}{
{\begin{tikzpicture}[scale=1,rotate=90]
\GraphInit[vstyle=Simple]
\SetVertexSimple[MinSize=6pt]
\Vertices{circle}{a,b,c}
\Vertex[x=0,y=0]{d}
\Edges(a,d,b)
\Edges(c,d)
\end{tikzpicture}}}
\end{minipage}
&
\begin{minipage}{0.11\textwidth}
\centering
\scalebox{0.6}{
{\begin{tikzpicture}[scale=1,rotate=90]
\GraphInit[vstyle=Simple]
\SetVertexSimple[MinSize=6pt]
\Vertices{circle}{a,b,c}
\Vertex[x=0,y=0]{d}
\Edges(a,b,c,a)
\end{tikzpicture}}}
\end{minipage}
&
\begin{minipage}{0.11\textwidth}
\centering
\scalebox{0.6}{
{\begin{tikzpicture}[scale=1,rotate=90]
\GraphInit[vstyle=Simple]
\SetVertexSimple[MinSize=6pt]
\Vertices{circle}{a,b,c,d,e}
\Edges(b,c,d,e)
\end{tikzpicture}}}
\end{minipage}
&
\begin{minipage}{0.11\textwidth}
\centering
\scalebox{0.6}{
{\begin{tikzpicture}[scale=1,rotate=90]
\GraphInit[vstyle=Simple]
\SetVertexSimple[MinSize=6pt]
\Vertices{circle}{a,b,c,d,e}
\Edges(b,c,d,e)
\Edges(b,a,c)
\Edges(d,a,e)
\end{tikzpicture}}}
\end{minipage}
&
\begin{minipage}{0.11\textwidth}
\centering
\scalebox{0.6}{
{\begin{tikzpicture}[scale=1,rotate=90]
\GraphInit[vstyle=Simple]
\SetVertexSimple[MinSize=6pt]
\Vertices{circle}{a,b,c,d,e}
\Edges(e,a,b)
\end{tikzpicture}}}
\end{minipage}
&
\begin{minipage}{0.11\textwidth}
\centering
\scalebox{0.6}{
{\begin{tikzpicture}[scale=1,rotate=90]
\GraphInit[vstyle=Simple]
\SetVertexSimple[MinSize=6pt]
\Vertices{circle}{a,b,c,d,e}
\Edges(e,a,b)
\Edges(b,c,d,e,b)
\Edges(b,d)
\Edges(c,e)
\end{tikzpicture}}}
\end{minipage}
&
\begin{minipage}{0.11\textwidth}
\centering
\scalebox{0.6}{
{\begin{tikzpicture}[scale=1,rotate=90]
\GraphInit[vstyle=Simple]
\SetVertexSimple[MinSize=6pt]
\Vertices{circle}{a,b,c,d,e}
\end{tikzpicture}}}
\end{minipage}
&
\begin{minipage}{0.11\textwidth}
\centering
\scalebox{0.6}{
{\begin{tikzpicture}[scale=1,rotate=90]
\GraphInit[vstyle=Simple]
\SetVertexSimple[MinSize=6pt]
\Vertices{circle}{a,b,c,d,e}
\Edges(a,b,c,d,e,a,c,e,b,d,a)
\end{tikzpicture}}}
\end{minipage}\\
\\
$K_{1,3}$ &
$\overline{K_{1,3}}$ &
$P_1+\nobreak P_4$ &
$\overline{P_1+\nobreak P_4}$ &
$2P_1+\nobreak P_3$ &
$\overline{2P_1+\nobreak P_3}$ &
$sP_1$ &
$\overline{sP_1}$\\
\end{tabular}
\end{center}
\caption{\label{fig:H-co-H-bdd-cw}Graphs~$H$ for which the clique-width of $(H,\overline{H})$-free graphs is bounded.
For~$sP_1$ and~$\overline{sP_1}$ the $s=5$ case is shown.}
\end{figure}
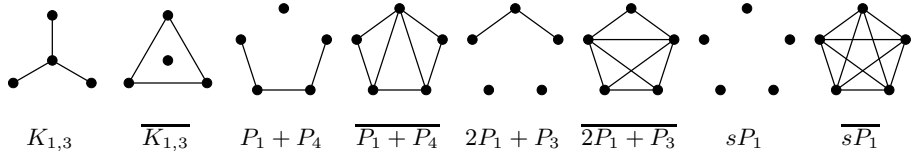

For the $|{\cal H}|=3$ case, where $\{H_1,H_2,H_3\}={\cal H}$, we observe that a class of $(H_1,H_2,H_3)$-free graphs is closed under complementation if and only if either every~$H_i$ is self-complementary, or one~$H_i$ is self-complementary and the other two graphs~$H_j$ and~$H_k$ are complements of each other.
By Theorem~\ref{t-self}, we only need to consider $(H_1,\overline{H_1},H_2)$-free graphs, where~$H_1$ is not self-complementary, $H_2$ is self-complementary, and neither~$H_1$ nor~$H_2$ is an induced subgraph of~$P_4$.
The next two smallest self-complementary graphs~$H_2$ are the~$C_5$ and the bull (see also \figurename~\ref{fig:small-self-comp}).
Observe that any self-complementary graph on~$n$ vertices must contain $\frac{1}{2}\binom{n}{2}$ edges and this number must be an integer, so $n=4q$ or $n=4q+1$ for some integer $q\geq 0$.
There are exactly ten non-isomorphic self-complementary graphs on eight vertices~\cite{Re63} and we depict these in Figure~\ref{fig:self-complementary-8-vertices}.

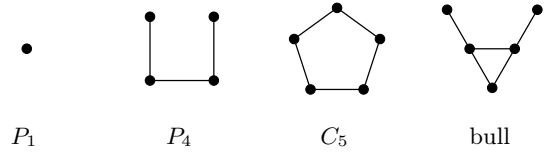
\begin{figure}
\begin{center}
\begin{tabular}{cccc}
\begin{minipage}{0.15\textwidth}
\centering
\scalebox{0.6}{
{\begin{tikzpicture}[scale=1,rotate=90]
\GraphInit[vstyle=Simple]
\SetVertexSimple[MinSize=6pt]
\Vertex[x=0,y=0]{a}
\end{tikzpicture}}}
\end{minipage}
&
\begin{minipage}{0.15\textwidth}
\centering
\scalebox{0.6}{
{\begin{tikzpicture}[scale=1,rotate=135]
\GraphInit[vstyle=Simple]
\SetVertexSimple[MinSize=6pt]
\Vertices{circle}{a,b,c,d}
\Edges(a,b,c,d)
\end{tikzpicture}}}
\end{minipage}
&
\begin{minipage}{0.15\textwidth}
\centering
\scalebox{0.6}{
{\begin{tikzpicture}[scale=1,rotate=90]
\GraphInit[vstyle=Simple]
\SetVertexSimple[MinSize=6pt]
\Vertices{circle}{a,b,c,d,e}
\Edges(a,b,c,d,e,a)
\end{tikzpicture}}}
\end{minipage}
&
\begin{minipage}{0.15\textwidth}
\centering
\scalebox{0.6}{
{\begin{tikzpicture}[scale=1,rotate=90]
\GraphInit[vstyle=Simple]
\SetVertexSimple[MinSize=6pt]
\Vertex[x=0,y=0]{a}
\Vertex[a=30,d=1]{b}
\Vertex[a=30,d=2]{c}
\Vertex[a=-30,d=1]{d}
\Vertex[a=-30,d=2]{e}
\Edges(c,b,a,d,e)
\Edges(b,d)
\end{tikzpicture}}}
\end{minipage}\\
\\
$P_1$ & $P_4$ & $C_5$ & bull
\end{tabular}
\end{center}
\caption{\label{fig:small-self-comp}The four non-empty self-complementary graphs on less than eight vertices~\cite{Re63}.}
\end{figure}

\begin{figure}
\begin{center}
{
\tikzstyle{vertex}=[circle,fill=black!100,text=white,inner sep=0.4mm,draw]

\begin{tabular}{ccccc}
\begin{minipage}{0.19\textwidth}
\begin{center}
\begin{tikzpicture}[scale=0.35]
\node[vertex] (1) at (0,0) {};
\node[vertex] (2) at (3,0) {};
\node[vertex] (3) at (0,3) {};
\node[vertex] (4) at (3,3) {};
\node[vertex] (5) at (1.5,-0.75) {};
\node[vertex] (6) at (-0.75,1.5) {};
\node[vertex] (7) at (3.75,1.5) {};
\node[vertex] (8) at (1.5,3.75) {};

\foreach \from/\to in {1/2,1/3,3/4,2/4,1/5,1/6,2/5,2/7,4/7,4/8,3/6,3/8,1/4,2/3}
\draw (\from) -- (\to);
\end{tikzpicture}
\end{center}
\end{minipage}
&
\begin{minipage}{0.19\textwidth}
\begin{center}
\begin{tikzpicture}[scale=0.75]
\node[vertex] (1) at (0,0) {};
\node[vertex] (2) at (1,0) {};
\node[vertex] (3) at (2,0) {};
\node[vertex] (4) at (3,0) {};
\node[vertex] (5) at (0,1) {};
\node[vertex] (6) at (1,1) {};
\node[vertex] (7) at (2,1) {};
\node[vertex] (8) at (3,1) {};

\foreach \from/\to in {1/2,2/3,3/4,5/1,5/2,5/3,6/2,7/3,8/2,8/3,8/4}
\draw (\from) -- (\to);

\draw (1.south) to [bend right] (4.south);
\draw (1.south) to [bend right] (3.south);
\draw (2.south) to [bend right] (4.south);
\end{tikzpicture}
\end{center}
\end{minipage}
&
\begin{minipage}{0.19\textwidth}
\begin{center}
\begin{tikzpicture}[scale=0.75]
\node[vertex] (1) at (0,0) {};
\node[vertex] (2) at (1,0) {};
\node[vertex] (3) at (2,0) {};
\node[vertex] (4) at (3,0) {};
\node[vertex] (5) at (0,1) {};
\node[vertex] (6) at (1,1) {};
\node[vertex] (7) at (2,1) {};
\node[vertex] (8) at (3,1) {};

\foreach \from/\to in {1/2,2/3,3/4,5/1,5/2,6/1,6/2,7/3,7/4,8/3,8/4}
\draw (\from) -- (\to);

\draw (1.south) to [bend right] (4.south);
\draw (1.south) to [bend right] (3.south);
\draw (2.south) to [bend right] (4.south);
\end{tikzpicture}
\end{center}
\end{minipage}
&
\begin{minipage}{0.19\textwidth}
\begin{center}
\begin{tikzpicture}[scale=0.65]
\node[vertex] (1) at (0,0) {};
\node[vertex] (2) at (2,0) {};
\node[vertex] (3) at (-0.5,-1) {};
\node[vertex] (4) at (1.5,-1) {};
\node[vertex] (5) at (0.2,-2) {};
\node[vertex] (6) at (2.2,-2) {};
\node[vertex] (7) at (-0.3,-3) {};
\node[vertex] (8) at (1.7,-3) {};

\foreach \from/\to in {1/2,1/3,2/4,3/4,5/6,7/8,5/7,6/8,7/8,3/5,4/6,1/5,2/6,3/7,4/8}
\draw (\from) -- (\to);
\end{tikzpicture}
\end{center}
\end{minipage}
&
\begin{minipage}{0.19\textwidth}
\begin{center}
\begin{tikzpicture}[scale=0.65]
\node[vertex] (1) at (0,0) {};
\node[vertex] (2) at (2,0) {};
\node[vertex] (3) at (0,-3) {};
\node[vertex] (4) at (2,-3) {};
\node[vertex] (5) at (-0.3,-1.5) {};
\node[vertex] (6) at (0.3,-1.5) {};
\node[vertex] (7) at (1.7,-1.5) {};
\node[vertex] (8) at (2.3,-1.5) {};

\foreach \from/\to in {1/2, 3/4, 1/5, 1/6, 3/5, 3/6, 2/7, 2/8, 4/7, 4/8, 1/4, 2/3, 5/6, 7/8}
\draw (\from) -- (\to);
\end{tikzpicture}
\end{center}
\end{minipage}\\
$X_1$ & $X_2$ & $X_3$ & $X_4$ & $X_5$ \\
\\

\begin{minipage}{0.19\textwidth}
\begin{center}
\begin{tikzpicture}[scale=0.65]
\node[vertex] (1) at (0,0) {};
\node[vertex] (2) at (2,0) {};
\node[vertex] (3) at (0,-3) {};
\node[vertex] (4) at (2,-3) {};
\node[vertex] (5) at (-0.3,-1.5) {};
\node[vertex] (6) at (0.3,-1.5) {};
\node[vertex] (7) at (1.7,-1.5) {};
\node[vertex] (8) at (2.3,-1.5) {};

\foreach \from/\to in {1/2, 1/5, 1/6, 3/5, 3/6, 2/7, 2/8, 4/7, 4/8, 1/4, 2/3, 5/6, 7/8, 6/7}
\draw (\from) -- (\to);
\end{tikzpicture}
\end{center}
\end{minipage}
&
\begin{minipage}{0.19\textwidth}
\begin{center}
\begin{tikzpicture}[scale=0.65]
\node[vertex] (1) at (0,0) {};
\node[vertex] (2) at (2,0) {};
\node[vertex] (3) at (0,-3) {};
\node[vertex] (4) at (2,-3) {};
\node[vertex] (5) at (-0.3,-1.5) {};
\node[vertex] (6) at (0.3,-1.5) {};
\node[vertex] (7) at (1.7,-1.5) {};
\node[vertex] (8) at (2.3,-1.5) {};

\foreach \from/\to in {1/2, 1/5, 1/6, 3/5, 3/6, 2/7, 2/8, 4/7, 4/8, 1/4, 2/3, 2/3, 1/3, 2/4, 6/7}
\draw (\from) -- (\to);
\end{tikzpicture}
\end{center}
\end{minipage}
&
\begin{minipage}{0.19\textwidth}
\begin{center}
\begin{tikzpicture}[scale=0.5]
\node[vertex] (1) at (0,0) {};
\node[vertex] (2) at (3,0) {};
\node[vertex] (3) at (0,3) {};
\node[vertex] (4) at (3,3) {};
\node[vertex] (5) at (1.5,0.75) {};
\node[vertex] (6) at (0.75,1.5) {};
\node[vertex] (7) at (2.25,1.5) {};
\node[vertex] (8) at (1.5,2.25) {};

\foreach \from/\to in {1/2,1/3,3/4,2/4,1/5,1/6,2/5,2/7,4/7,4/8,3/6,3/8,5/8,6/7}
\draw (\from) -- (\to);
\end{tikzpicture}
\end{center}
\end{minipage}
&
\begin{minipage}{0.19\textwidth}
\begin{center}
\begin{tikzpicture}[scale=0.5]
\node[vertex] (1) at (0,0) {};
\node[vertex] (2) at (3,0) {};
\node[vertex] (3) at (0,3) {};
\node[vertex] (4) at (3,3) {};
\node[vertex] (5) at (1.5,0.75) {};
\node[vertex] (6) at (0.75,1.5) {};
\node[vertex] (7) at (2.25,1.5) {};
\node[vertex] (8) at (1.5,2.25) {};

\foreach \from/\to in {1/2,1/3,3/4,2/4,1/5,1/6,2/5,2/7,4/7,4/8,3/6,3/8,6/7,2/3}
\draw (\from) -- (\to);
\end{tikzpicture}
\end{center}
\end{minipage}
&
\begin{minipage}{0.19\textwidth}
\begin{center}
\begin{tikzpicture}[scale=0.5]
\node[vertex] (1) at (0,0) {};
\node[vertex] (2) at (1,0) {};
\node[vertex] (3) at (2,0) {};
\node[vertex] (4) at (3,0) {};
\node[vertex] (5) at (4,1.5) {};
\node[vertex] (6) at (4,0.5) {};
\node[vertex] (7) at (4,-0.5) {};
\node[vertex] (8) at (4,-1.5) {};

\foreach \from/\to in {1/2,2/3,3/4,5/1,5/2,5/3,5/4,5/6,6/7,7/8,8/1,8/2,8/3,8/4}
\draw (\from) -- (\to);
\end{tikzpicture}
\end{center}
\end{minipage}\\
$X_6$ & $X_7$ & $X_8$ & $X_9$ & $X_{10}$
\end{tabular}
}
\end{center}
\caption{\label{fig:self-complementary-8-vertices}The ten self-complementary graphs on eight vertices~\cite{Re63}.}
\end{figure}
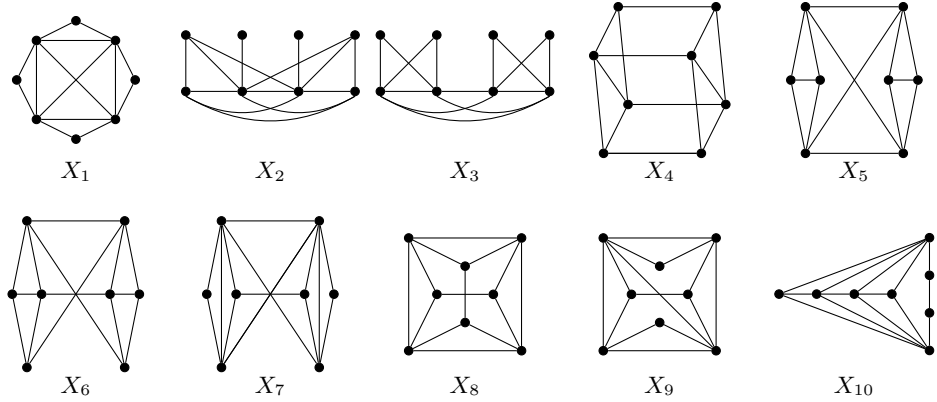

It is known that split graphs, or equivalently, $(2P_2,\overline{2P_2},C_5)$-free graphs have unbounded clique-width~\cite{MR99}.
In Section~\ref{s-main2} we determine three new hereditary graph classes of unbounded clique-width, which
imply that the class of $(H,\overline{H},C_5)$-free graphs has unbounded clique-width if
$H \in \{K_{1,3}+\nobreak P_1,2P_2,3P_1+\nobreak P_2,S_{1,1,2}\}$.
By combining this with known results, we discovered that the classification of boundedness of clique-width for $(H,\overline{H},C_5)$-free graphs coincides with the one of Theorem~\ref{t-main}.
This raised the question of whether the same is true for other sets of self-complementary graphs ${\cal F}\neq \{C_5\}$.
If ${\cal F}$ contains the bull, then the answer is negative: by Theorem~\ref{t-main},
both the class of $(S_{1,1,2},\overline{S_{1,1,2}})$-free graphs and the class of $(2P_2,C_4)$-free graphs have unbounded clique-width, but both the class of $(S_{1,1,2},\overline{S_{1,1,2}},\mbox{bull})$-free graphs and even the class of $(P_5,\overline{P_5},\mbox{bull})$-free graphs have bounded clique-width~\cite{BDHR05}.
However, also in Section~\ref{s-main2}, we prove that the bull is the {\em only} exception (apart from the trivial cases when $H'\in \{P_1,P_4\}$ which yield bounded clique-width of $(H,\overline{H},H')$-free graphs for any graph~$H$).

\begin{theorem}\label{t-main2}
Let~${\cal F}$ be a set of self-complementary graphs on at least five vertices not equal to the bull.
For a graph~$H$, the class of $(\{H,\overline{H}\} \cup {\cal F})$-free graphs has bounded clique-width if and only if~$H$ or~$\overline{H}$ is an induced subgraph of $K_{1,3}$, $P_1+P_4$, $2P_1+P_3$ or~$sP_1$ for some $s\geq 1$.
\end{theorem}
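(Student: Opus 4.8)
The plan is to deduce Theorem~\ref{t-main2} from Theorem~\ref{t-main} by showing that forbidding the self-complementary graphs in~${\cal F}$ affects neither side of the classification. The \emph{if} direction is immediate: adding forbidden subgraphs only shrinks the class, so the $(\{H,\overline{H}\}\cup{\cal F})$-free graphs form a subclass of the $(H,\overline{H})$-free graphs, and a subclass of a class of bounded clique-width again has bounded clique-width; thus the bound from Theorem~\ref{t-main} is inherited. Everything else is the \emph{only if} direction, which I would prove in contrapositive form: assuming neither~$H$ nor~$\overline{H}$ is an induced subgraph of $K_{1,3}$, $P_1+P_4$, $2P_1+P_3$ or~$sP_1$, show that $(\{H,\overline{H}\}\cup{\cal F})$-free graphs have unbounded clique-width for every~${\cal F}$ allowed by the hypothesis.

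First I would reduce to finitely many~$H$. As the induced-subgraph relation is preserved under complementation, $H_0\ssi H$ gives $\overline{H_0}\ssi\overline{H}$, so the $(H_0,\overline{H_0})$-free graphs are contained in the $(H,\overline{H})$-free graphs; hence it suffices to treat each \emph{minimal}~$H_0$ for which neither~$H_0$ nor~$\overline{H_0}$ embeds in the four graphs above. By the analysis behind Theorem~\ref{t-main}, one finds that, up to complementation, these minimal graphs are exactly $K_{1,3}+P_1$, $2P_2$, $3P_1+P_2$ and~$S_{1,1,2}$. For each such~$H_0$ I must construct a family of unbounded clique-width that is $(H_0,\overline{H_0})$-free and $F$-free for every $F\in{\cal F}$; such a family lies in the $(\{H_0,\overline{H_0}\}\cup{\cal F})$-free graphs and hence in the $(\{H,\overline{H}\}\cup{\cal F})$-free graphs. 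Since~${\cal F}$ may be any set of self-complementary graphs on at least five vertices other than the bull, it is enough to produce, for each~$H_0$, one family that avoids \emph{every} self-complementary graph on at least five vertices except the bull; this family then works for all admissible~${\cal F}$ at once.

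The engine for the avoidance condition is the remark that \emph{no self-complementary graph on $n\geq 5$ vertices is bipartite}: such a graph~$G$ would satisfy $\alpha(G)\geq\lceil n/2\rceil\geq 3$, so $\overline{G}\cong G$ would contain a triangle, contradicting bipartiteness. Hence any bipartite family automatically avoids all self-complementary graphs on at least five vertices (the $C_5$ included), and it is vacuously $\overline{H_0}$-free whenever $\overline{H_0}$ contains a triangle. The latter holds for $H_0\in\{K_{1,3}+P_1,\,3P_1+P_2,\,S_{1,1,2}\}$, so for these three cases I would exhibit a \emph{bipartite} family of unbounded clique-width that is merely $H_0$-free---precisely the kind of construction given by the new hereditary classes of Section~\ref{s-main2}, supported where convenient by the machinery for $H$-free bipartite graphs~\cite{DP14}. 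Bipartiteness then discharges the whole of~${\cal F}$ in one stroke.

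The remaining and hardest case is $H_0=2P_2$, where $\overline{H_0}=C_4$ is itself bipartite: $(2P_2,C_4)$-free bipartite graphs form a subclass of the chain graphs and so have bounded clique-width, and the bipartite trick collapses. Any witnessing family must therefore sit inside the $(2P_2,\overline{2P_2},C_5)$-free graphs, i.e.\ among the \emph{split} graphs, which do have unbounded clique-width~\cite{MR99}. The obstacle is that split graphs are not free of self-complementary graphs: for instance, the split graph built from a clique of size four and an independent set of size four whose cross edges form a~$C_8$ is a self-complementary graph on eight vertices. I expect the crux of the proof to be the verification that a \emph{specific} family of split graphs of unbounded clique-width contains no self-complementary graph on at least eight vertices as an induced subgraph. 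Since a self-complementary split graph on~$m$ vertices must have $\omega=\alpha=\lceil m/2\rceil$, and hence a balanced split with a constrained cross-adjacency, the plan is to pick the split family so that its cross-adjacency provably realises none of these patterns and to carry out this structural check by hand. It is exactly here that allowing the bull matters: the split families may contain induced bulls, which is harmless only because the bull is excluded from~${\cal F}$---matching the fact that adding the bull to~${\cal F}$ would make the statement false.
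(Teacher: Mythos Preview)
Your overall strategy is sound, and the split-graph analysis for $H_0=2P_2$ is on the right track and close to what the paper does (Theorem~\ref{thm:C4-2P_2-F-free-unbdd-cw}). However, two of your reductions break down.

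\medskip
\noindent\textbf{The list of minimals is incomplete.} You claim that, up to complementation, the minimal graphs~$H_0$ with neither~$H_0$ nor~$\overline{H_0}$ embedding in $K_{1,3}$, $P_1+\nobreak P_4$, $2P_1+\nobreak P_3$ or~$sP_1$ are exactly $K_{1,3}+\nobreak P_1$, $2P_2$, $3P_1+\nobreak P_2$ and~$S_{1,1,2}$. This is false: the~$C_5$ and the bull are also minimal (each is $2P_2$-free, $(3P_1+\nobreak P_2)$-free, $S_{1,1,2}$-free and $(K_{1,3}+\nobreak P_1)$-free, and self-complementary), and there are further minimals among graphs with both~$H$ and~$\overline{H}$ outside~${\cal S}$. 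Lemma~\ref{lem:useful} only gives the four graphs once one has \emph{already} assumed $H\in{\cal S}$. The paper handles $H,\overline{H}\notin{\cal S}$ separately: by Lemma~\ref{lem:ramsey-for-self-comp} every $F\in{\cal F}$ contains~$C_4$, $C_5$ or~$K_4$, hence $F\notin{\cal S}$, and Lemma~\ref{lem:classS} then gives unbounded clique-width for $(H,\overline{H},C_4,C_5,K_4)$-free graphs.

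\medskip
\noindent\textbf{The bipartite trick fails for two of the three cases where you invoke it.} For $H_0=3P_1+\nobreak P_2$: in a connected bipartite $(3P_1+\nobreak P_2)$-free graph with parts~$A,B$, any vertex $b\in B$ with a neighbour $a\in A$ has at most two non-neighbours in~$A$ (otherwise $a,b$ together with three such non-neighbours induce $3P_1+\nobreak P_2$). So after a bipartite complementation the maximum degree is at most~$2$, giving bounded clique-width. For $H_0=K_{1,3}+\nobreak P_1$: if $a\in A$ has three neighbours $b_1,b_2,b_3\in B$, then any $b_4\in B\setminus N(a)$ yields a $K_{1,3}+\nobreak P_1$, so~$a$ is complete to~$B$; and any $a'\in A$ non-adjacent to all of $b_1,b_2,b_3$ also yields one, so every $a'\in A$ has at most two non-neighbours in~$B$. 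A short counting argument then bounds the clique-width. Thus bipartite $H_0$-free families of unbounded clique-width do not exist for these two~$H_0$, and the constructions in Section~\ref{s-main2} that you allude to are in fact \emph{not} bipartite: Theorem~\ref{thm:4P1-co-3P1+P2-C5-unbdd-cw} uses three cliques, and Theorem~\ref{thm:claw-K4-diamond-C5-unbdd-cw} uses graphs with triangles but no~$C_4$, $C_5$, $K_4$, $K_{1,3}$ or diamond. Both families avoid every $F\in{\cal F}$ not because they are bipartite but because, by Lemma~\ref{lem:ramsey-for-self-comp}, each such~$F$ contains an induced~$C_4$, $C_5$ or~$K_4$.
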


\medskip
\noindent
{\bf Structural Consequences.}
Due to our result for $(2P_1+\nobreak P_3,\overline{2P_1+P_3})$-free graphs, we can update the summary of~\cite{DLP} for the clique-width of bigenic graph classes.

\begin{theorem}\label{thm:classification2}
Let~${\cal G}$ be a class of graphs defined by two forbidden induced subgraphs.
Then:
\begin{enumerate}
\item ${\cal G}$ has bounded clique-width if it is equivalent\footnote{Given four graphs $H_1,H_2,H_3,H_4$, the class of $(H_1,H_2)$-free graphs and the class of $(H_3,H_4)$-free graphs are {\em equivalent} if the unordered pair $H_3,H_4$ can be obtained from the unordered pair $H_1,H_2$ by some combination of the operations (i) complementing both graphs in the pair and (ii) if one of the graphs in the pair is~$K_3$, replacing it with $\overline{P_1+P_3}$ or vice versa.
If two classes are equivalent, then one of them has bounded clique-width if and only if the other one does (see~\cite{DP15}).}
to a class of $(H_1,H_2)$-free graphs such that one of the following holds:\\
\begin{enumerate}[(i)]
\item \label{thm:classification2:bdd:P4} $H_1$ or $H_2 \ssi P_4$;
\item \label{thm:classification2:bdd:ramsey} $H_1=sP_1$ and $H_2=K_t$ for some $s,t$;
\item \label{thm:classification2:bdd:P_1+P_3} $H_1 \ssi P_1+\nobreak P_3$ and $\overline{H_2} \ssi K_{1,3}+\nobreak 3P_1,\; K_{1,3}+\nobreak P_2,\;\allowbreak P_1+\nobreak P_2+\nobreak P_3,\;\allowbreak P_1+\nobreak P_5,\;\allowbreak P_1+\nobreak S_{1,1,2},\;\allowbreak P_2+\nobreak P_4,\;\allowbreak P_6,\; \allowbreak S_{1,1,3}$ or~$S_{1,2,2}$;
\item \label{thm:classification2:bdd:2P_1+P_2} $H_1 \ssi 2P_1+\nobreak P_2$ and $\overline{H_2}\ssi P_1+\nobreak 2P_2,\; 3P_1+\nobreak P_2$ or~$P_2+\nobreak P_3$;
\item \label{thm:classification2:bdd:P_1+P_4} $H_1 \subseteq_i P_1+\nobreak P_4$ and $\overline{H_2} \ssi P_1+\nobreak P_4$ or~$P_5$;
\item \label{thm:classification2:bdd:K_13} $H_1,\overline{H_2} \ssi K_{1,3}$;
\item \label{thm:classification2:bdd:2P1_P3} $H_1,\overline{H_2} \ssi 2P_1+\nobreak P_3$.\\
\end{enumerate}
\item ${\cal G}$ has unbounded clique-width if it is equivalent to a class of $(H_1,H_2)$-free graphs such that one of the following holds:
\begin{enumerate}[(i)]
\item \label{thm:classification2:unbdd:not-in-S} $H_1\not\in {\cal S}$ and $H_2 \not \in {\cal S}$;
\item \label{thm:classification2:unbdd:not-in-co-S} $\overline{H_1}\notin {\cal S}$ and $\overline{H_2} \not \in {\cal S}$;
\item \label{thm:classification2:unbdd:K_13or2P_2} $H_1 \si K_{1,3}$ or~$2P_2$ and $\overline{H_2} \si 4P_1$ or~$2P_2$;
\item \label{thm:classification2:unbdd:2P_1+P_2} $H_1 \si 2P_1+\nobreak P_2$ and $\overline{H_2} \si K_{1,3},\; 5P_1,\; P_2+\nobreak P_4$ or~$P_6$;
\item \label{thm:classification2:unbdd:3P_1} $H_1 \si 3P_1$ and $\overline{H_2} \si 2P_1+\nobreak 2P_2,\; 2P_1+\nobreak P_4,\; 4P_1+\nobreak P_2,\; 3P_2$ or~$2P_3$;
\item \label{thm:classification2:unbdd:4P_1} $H_1 \si 4P_1$ and $\overline{H_2} \si P_1 +\nobreak P_4$ or~$3P_1+\nobreak P_2$.
\end{enumerate}
\end{enumerate}
\end{theorem}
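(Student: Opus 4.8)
The plan is to treat this as a \emph{compilation} result: all but one of the listed conditions were already known to force bounded or unbounded clique-width, and the statement is obtained by adding a single new ingredient to the summary of~\cite{DLP}. The one genuinely new entry is case~1(\ref{thm:classification2:bdd:2P1_P3}). For it I would invoke the main result of Section~\ref{s-main}, namely that the class of $(2P_1+\nobreak P_3,\overline{2P_1+P_3})$-free graphs has bounded clique-width, and then push this down to every pair with $H_1,\overline{H_2}\ssi 2P_1+\nobreak P_3$.

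Before the bookkeeping I would record two transfer principles. First, boundedness of clique-width is monotone under the induced-subgraph order on the forbidden graphs: if $H_1\ssi H_1'$ and $H_2\ssi H_2'$, then the class of $(H_1,H_2)$-free graphs is contained in the class of $(H_1',H_2')$-free graphs, so boundedness of the latter forces boundedness of the former, while (this is why Part~2 is phrased with $\si$) unboundedness of the former forces unboundedness of the latter. Second, the equivalence relation of the footnote preserves boundedness by~\cite{DP15}: complementing both forbidden graphs changes clique-width only by the factor in $\cw(\overline{G})\leq 2\cw(G)$, and the $K_3$ versus $\overline{P_1+\nobreak P_3}$ swap preserves the relevant class up to this equivalence. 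Hence it suffices to verify boundedness (respectively unboundedness) for one maximal (respectively minimal) representative of each listed family and then close under these two operations.

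With these principles fixed, the argument is routine. For Part~1, cases~(\ref{thm:classification2:bdd:P4})--(\ref{thm:classification2:bdd:K_13}) are exactly the entries already recorded in the summary assembled from~\cite{BELL06} and the papers cited in the introduction, as collected in~\cite{DP15,DDP15,DLP}; for each I would cite its source for the maximal representative and invoke monotonicity. Case~(\ref{thm:classification2:bdd:2P1_P3}) I would argue directly: by Section~\ref{s-main} the class of $(2P_1+\nobreak P_3,\overline{2P_1+P_3})$-free graphs has bounded clique-width, and any $(H_1,H_2)$-free class with $H_1,\overline{H_2}\ssi 2P_1+\nobreak P_3$ is a subclass of it and hence also has bounded clique-width. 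Part~2 is inherited wholesale from~\cite{DLP} and the references therein: cite each minimal representative and apply monotonicity in the opposite direction.

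The main obstacle here is not mathematical depth but \emph{completeness and correct attribution}: one must check that every extremal pair appearing in the seven bounded and six unbounded families is either the new $2P_1+\nobreak P_3$ result or is genuinely covered by an existing theorem, and that the induced-subgraph closures together with the equivalence operation are applied consistently so that no pair is misclassified. All the substantive work — the only place a new graph class is analysed — is deferred to Section~\ref{s-main}; this theorem merely packages that analysis together with the prior literature into one up-to-date statement, and in particular the added case reduces the number of unclassified bigenic pairs from seven to six.
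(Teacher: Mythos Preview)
Your proposal is correct and matches the paper's treatment exactly: the paper presents Theorem~\ref{thm:classification2} as an update of the summary in~\cite{DLP}, with the sole new ingredient being case~1(\ref{thm:classification2:bdd:2P1_P3}), which follows from Theorem~\ref{thm:2P1P3-co-2P1P3-bounded-cw} via the monotonicity you describe. The paper gives no separate proof of Theorem~\ref{thm:classification2} beyond this remark.
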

Theorem~\ref{thm:classification2} leaves us with the following six
open cases.
\begin{oproblem}\label{oprob:twographs}
Does the class of $(H_1,H_2)$-free graphs have bounded or unbounded clique-width when:\\
\begin{enumerate}[(i)]
\item \label{oprob:twographs:3P_1} $H_1=3P_1$ and $\overline{H_2} \in \{P_1+\nobreak S_{1,1,3},
\allowbreak S_{1,2,3}\}$;
\item\label{oprob:twographs:2P_1+P_2} $H_1=2P_1+\nobreak P_2$ and $\overline{H_2} \in \{P_1+\nobreak P_2+\nobreak P_3,\allowbreak P_1+\nobreak P_5\}$;
\item \label{oprob:twographs:P_1+P_4} $H_1=P_1+\nobreak P_4$ and $\overline{H_2} \in \{P_1+\nobreak 2P_2,\allowbreak P_2+\nobreak P_3\}$.
\end{enumerate}
\end{oproblem}

\medskip
\noindent
{\bf Algorithmic Consequences.} 
We show that the class of $(2P_1+P_3,\overline{2P_1+P_3})$-free graphs has bounded clique-width.
As mentioned, this implies that {\sc Colouring} is polynomial-time solvable for this graph class.
This result was used by Blanch\'e et al. to prove the following theorem.

\begin{sloppypar}
\begin{theorem}[\cite{BDJP16}]\label{t-mainmain}
Let $H,\overline{H} \notin \{(s+\nobreak 1)P_1+\nobreak P_3, sP_1+\nobreak P_4\; |\; s\geq 2\}$.
Then {\sc Colouring} is polynomial-time solvable for $(H,\overline{H})$-free graphs if~$H$ or~$\overline{H}$ is an induced subgraph of $K_{1,3}$,
$P_1+\nobreak P_4$,
$2P_1+\nobreak P_3$,
$P_2+\nobreak P_3$,
$P_5$, or $sP_1+\nobreak P_2$ for some $s\geq 0$
and it is \NP-complete otherwise.
\end{theorem}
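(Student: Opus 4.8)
The plan is to prove the dichotomy in two directions, throughout exploiting that the class of $(H,\overline{H})$-free graphs is self-complementary, so we may freely interchange $H$ and $\overline{H}$.

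\emph{Tractability.} Suppose $H$ or $\overline{H}$ is an induced subgraph of one of $K_{1,3}$, $P_1+P_4$, $2P_1+P_3$, $P_2+P_3$, $P_5$ or $sP_1+P_2$; call this graph $F$. Since $H\ssi F$ forces the class of $(H,\overline{H})$-free graphs to be contained in the class of $(F,\overline{F})$-free graphs, and since {\sc Colouring} is inherited by subclasses, it suffices to solve {\sc Colouring} in polynomial time on each class of $(F,\overline{F})$-free graphs. For $F\in\{K_{1,3},P_1+P_4,2P_1+P_3\}$ (which also covers $F=sP_1$) this is immediate from Theorem~\ref{t-main}: these classes have bounded clique-width, and {\sc Colouring} is polynomial-time solvable on any class of bounded clique-width. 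In particular the newly established bound for $(2P_1+P_3,\overline{2P_1+P_3})$-free graphs is exactly what is needed for $F=2P_1+P_3$. For $F=P_5$ we use that $(P_5,\overline{P_5})$-free graphs form a subclass of $P_5$-free graphs, on which {\sc Colouring} is known to be polynomial-time solvable. The remaining cases $F=P_2+P_3$ and $F=sP_1+P_2$ are not covered by bounded clique-width for large $s$; here the second forbidden graph is essential (note $\overline{sP_1+P_2}=K_{s+2}-e$), and we appeal to structural polynomial-time algorithms for {\sc Colouring} on $H$-free graphs with $H$ a small linear forest, further constrained by the dense obstruction $\overline{F}$.

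\emph{Hardness.} Conversely, assume that neither $H$ nor $\overline{H}$ is an induced subgraph of any graph in the list, and, by hypothesis, that $H,\overline{H}\notin\{(s+1)P_1+P_3,sP_1+P_4\mid s\ge 2\}$; we show {\sc Colouring} is \NP-complete. Membership in \NP\ is clear, so only hardness is at issue, and the argument is monotone: if $H'\ssi H$ and {\sc Colouring} is \NP-complete on $(H',\overline{H'})$-free graphs, then it is \NP-complete on the superclass of $(H,\overline{H})$-free graphs. It therefore suffices to identify finitely many minimal hard ``seed'' pairs whose upward closure under $\ssi$, intersected with the region under consideration, covers all remaining $H$. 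We organise the seeds according to the set~${\cal S}$ from Theorem~\ref{thm:classification2} (graphs each of whose components is a path or a subdivided claw). If $H\notin{\cal S}$ and $\overline{H}\notin{\cal S}$, then each of $H,\overline{H}$ contains a cycle, a vertex of degree at least four, or two branch vertices; in every such case we produce, by a Ramsey-type argument, \NP-complete {\sc Colouring} instances that avoid $H$ and $\overline{H}$ simultaneously (for instance graphs of large girth and bounded degree, or line graphs of cubic bipartite graphs), checking that neither $H$ nor $\overline{H}$ embeds. If exactly one of $H,\overline{H}$ lies in ${\cal S}$, say $H\in{\cal S}$, then $H$ is a disjoint union of paths and subdivided claws; removing the polynomial subgraphs of the list and the open family pins $H$ down to a short list of candidate linear forests and subdivided claws, and for each we exhibit an explicit \NP-complete subclass of $(H,\overline{H})$-free graphs by a tailored reduction.

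The main obstacle is the hardness direction, specifically the simultaneous avoidance of $H$ and $\overline{H}$. Forbidding a single graph leaves {\sc Colouring} \NP-complete for almost all choices (the claw-free graphs alone already contain all line graphs), so the entire difficulty lies in engineering hard instances whose induced subgraphs are tame on the graph \emph{and} its complement at once. This is an inherently Ramsey-flavoured balancing problem: sparsifying a gadget to destroy $H$ tends to create dense complements containing $\overline{H}$, and vice versa. The most delicate point is matching the boundary exactly along the families $sP_1+P_3$ and $sP_1+P_4$, where small $s$ is polynomial, intermediate $s$ is left open, and the remaining graphs must be shown hard; separating these regimes calls for case-specific gadgets rather than a single uniform construction. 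By contrast, the tractability direction is comparatively routine once Theorem~\ref{t-main} and the polynomial-time algorithm for {\sc Colouring} on $P_5$-free graphs are available.
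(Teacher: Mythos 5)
Your proposal cannot be matched against an in-paper proof, because the paper does not prove Theorem~\ref{t-mainmain} at all: it is imported verbatim from the companion paper~\cite{BDJP16}, and the only contribution of the present paper to it is Theorem~\ref{thm:2P1P3-co-2P1P3-bounded-cw} (boundedness of clique-width for $(2P_1+\nobreak P_3,\overline{2P_1+P_3})$-free graphs), which~\cite{BDJP16} uses as an ingredient. Judged on its own, your text is a roadmap rather than a proof: every step that carries real difficulty is deferred without content. On the tractable side, the cases $F=P_2+\nobreak P_3$ and $F=sP_1+\nobreak P_2$ for large~$s$ concern classes of unbounded clique-width (by Theorem~\ref{t-main}, e.g.\ already for $3P_1+\nobreak P_2$), so ``appeal to structural polynomial-time algorithms'' names the theorem to be proved rather than proving it. On the hardness side, ``a Ramsey-type argument'' and ``a tailored reduction'' likewise stand in for the actual constructions; the seeds your case analysis would need (for instance $(P_6,\overline{P_6})$-free graphs, or $(P_1+\nobreak P_2+\nobreak P_3,\overline{P_1+P_2+P_3})$-free graphs) each require bespoke gadgets that simultaneously avoid a sparse graph and its dense complement, and these occupy the bulk of~\cite{BDJP16}. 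Your monotonicity observation (hardness for $(H',\overline{H'})$-free graphs transfers upward when $H'\ssi H$) and your use of the hypothesis excluding $(s+\nobreak 1)P_1+\nobreak P_3$ and $sP_1+\nobreak P_4$ are fine, but they only organise the work; they do not do it.

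There is also one outright error. For $F=P_5$ you assert that {\sc Colouring} is polynomial-time solvable on $P_5$-free graphs and deduce the $(P_5,\overline{P_5})$ case by inclusion. This is false: by the dichotomy of Kr\'al', Kratochv\'{\i}l, Tuza and Woeginger, {\sc Colouring} on $H$-free graphs is polynomial-time solvable only when $H\ssi P_4$ or $H\ssi P_1+\nobreak P_3$, and it is \NP-complete on $P_5$-free graphs (only $k$-{\sc Colouring} for each fixed~$k$ is polynomial there). So for $F=P_5$, exactly as for $P_2+\nobreak P_3$ and $sP_1+\nobreak P_2$, the second forbidden graph is essential, and the polynomial-time algorithm for $(P_5,\overline{P_5})$-free graphs is itself a nontrivial known result (due to Ho\`ang and Lazzarato) that must be invoked or reproved, not obtained by restriction from the $P_5$-free case.
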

\end{sloppypar}
Comparing Theorems~\ref{t-main} and~\ref{t-mainmain} shows that there are graph classes of unbounded clique-width that are closed under complementation, but for which {\sc Colouring} is still polynomial-time solvable.
Nevertheless, on many graph classes, polynomial-time solvability of \NP-hard problems stems from the underlying property of having bounded clique-width.
This is also illustrated by our paper, and for {\sc Colouring} we identify, by updating the summary of~\cite{DDP15} (see also~\cite{GJPS}) or by a direct reduction from Open Problem~\ref{oprob:twographs}, the following 12 classes of $(H_1,H_2)$-free graphs, for which {\sc Colouring} could still potentially be solved in polynomial time by showing that their clique-width is bounded.

\begin{oproblem}\label{oprob:twographs2}
Is {\sc Colouring} polynomial-time solvable for $(H_1,H_2)$-free graphs when:\\
\begin{enumerate}[(i)]
\item $\overline{H_1}\in \{3P_1,P_1+P_3\}$ and $H_2\in \{P_1+S_{1,1,3},S_{1,2,3}\}$;\\[-10pt]
\item $H_1=2P_1+\nobreak P_2$ and $\overline{H_2} \in \{P_1+\nobreak P_2+\nobreak P_3, \allowbreak 
P_1+\nobreak P_5\}$;\\[-10pt]
\item $H_1=\overline{2P_1+\nobreak P_2}$ and $H_2 \in \{P_1+\nobreak P_2+\nobreak P_3,\allowbreak 
P_1+\nobreak P_5\}$;\\[-10pt]
\item $H_1=P_1+P_4$ and $\overline{H_2} \in \{P_1+2P_2,P_2+P_3\}$;\\[-10pt]
\item $\overline{H_1}=P_1+P_4$ and ${H_2} \in \{P_1+2P_2,P_2+P_3\}$.
\end{enumerate}
\end{oproblem}

\medskip
\noindent
{\bf Future Work.} Apart from settling the classification of boundedness of clique-width for $(H_1,H_2)$-free graphs by addressing Open Problem~\ref{oprob:twographs}, we aim to continue our study of boundedness of clique-width for graph classes closed under complementation.
In particular, to complete the classification for 
${\cal H}$-free graphs when 
$|{\cal H}|=3$, we still need to determine those graphs~$H$ for which $(H,\overline{H},\mbox{bull})$-free graphs have bounded clique-width (there are several cases left).
It may also be worthwhile to consider the consequences of our research for the boundedness of variants of clique-width, such as linear clique-width~\cite{HMP12} and power-bounded clique-width~\cite{BGMS14}.

\section{Preliminaries}\label{sec:prelim}
Throughout our paper we consider only finite, undirected graphs without multiple edges or self-loops.
Below we define further graph terminology.

Given two graphs~$G$ and~$H$, an {\em isomorphism} from~$G$ to~$H$ is a bijection $f:V(G) \rightarrow V(H)$ such that $uv \in E(G)$ if and only if $f(u)f(v) \in E(H)$.
If such an isomorphism exists, we say that~$G$ and~$H$ are {\em isomorphic}.

The {\em disjoint union} $(V(G)\cup V(H), E(G)\cup E(H))$ of two vertex-disjoint graphs~$G$ and~$H$ is denoted by~$G+\nobreak H$ and the disjoint union of~$r$ copies of a graph~$G$ is denoted by~$rG$.
The {\em complement} of a graph~$G$, denoted by~$\overline{G}$, has vertex set $V(\overline{G})=\nobreak V(G)$ and an edge between two distinct vertices if and only if these two vertices are not adjacent in~$G$.
A graph~$G$ is {\em self-complementary} if~$G$ is isomorphic to~$\overline{G}$.
For a subset $S\subseteq V(G)$, we let~$G[S]$ denote the subgraph of~$G$ {\em induced} by~$S$, which has vertex set~$S$ and edge set $\{uv\; |\; u,v\in S, uv\in E(G)\}$.
If $S=\{s_1,\ldots,s_r\}$ then, to simplify notation, we may also write $G[s_1,\ldots,s_r]$ instead of $G[\{s_1,\ldots,s_r\}]$.
We use $G \setminus S$ to denote the graph obtained from~$G$ by deleting every vertex in~$S$, that is, $G \setminus S = G[V(G)\setminus S]$; if $S=\{v\}$, we may write $G \setminus v$ instead.
We write $G'\ssi G$ to indicate that~$G'$ is an induced subgraph of~$G$.

A graph $G=(V,G)$ is {\em empty} if $V=E=\emptyset$, otherwise it is non-empty.
The graphs $C_r$, $K_r$, $K_{1,r-1}$ and~$P_r$ denote the cycle, complete graph, star and path on~$r$ vertices, respectively.
The graphs~$K_3$ and~$K_{1,3}$ are also called the {\em triangle} and {\em claw}, respectively.
A graph~$G$ is a {\em linear forest} if every component of~$G$ is a path (on at least one vertex).
The graph~$S_{h,i,j}$, for $1\leq h\leq i\leq j$, denotes the {\em subdivided claw}, that is, the tree that has only one vertex~$x$ of degree~$3$ and exactly three leaves, which are of distance~$h$,~$i$ and~$j$ from~$x$, respectively.
Observe that $S_{1,1,1}=K_{1,3}$.
We let ${\cal S}$ be the class of graphs each connected component of which is either a subdivided claw or a path on at least one vertex.

For a set of graphs~${\cal H}$, a graph~$G$ is {\em ${\cal H}$-free} (or $({\cal H})$-free) if it has no induced subgraph isomorphic to a graph in~${\cal H}$.
If ${\cal H}=\{H_1,\ldots,H_p\}$ for some integer~$p$, then we may write $(H_1,\ldots,H_p)$-free instead of $(\{H_1,\ldots,H_p\})$-free, or, if $p=1$, we may simply write $H_1$-free.

For a graph $G=(V,E)$, the set $N(u)=\{v\in V\; |\; uv\in E\}$ denotes the {\em neighbourhood} of $u\in V$.
A component in~$G$ is {\em trivial} if it contains exactly one vertex, otherwise it is non-trivial.
A graph is {\em bipartite} if its vertex set can be partitioned into two (possibly empty) independent sets.
A graph is {\em split} if its vertex set can be partitioned into a (possibly empty) independent set and a (possibly empty) clique.
Split graphs have been characterized as follows.

\begin{lemma}[\cite{FH77}]\label{lem:split-forb-graphs}
A graph~$G$ is split if and only if it is $(2P_2,C_4,C_5)$-free.
\end{lemma}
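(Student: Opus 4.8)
The plan is to prove the two implications separately. The forward direction (split implies $(2P_2,C_4,C_5)$-free) amounts to checking that none of the three graphs is itself split, which is a short finite verification. The reverse direction (forbidding these three graphs forces a split partition) is the substantive part, and I expect the case analysis there to be the main obstacle.

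For the forward direction, suppose $G$ is split with partition $V(G)=K\cup I$, where $K$ induces a clique and $I$ an independent set, and suppose for contradiction that $G$ contains one of the forbidden graphs as an induced subgraph. Each of $2P_2$, $C_4$, $C_5$ has clique number and independence number equal to $2$, so $K$ meets any such four- or five-vertex obstruction in at most two vertices and $I$ in at most two. For $C_5$ this already gives $|K|+|I|\le 4<5$, a contradiction. For $2P_2$ (with edges $ab$ and $cd$) the at-most-two vertices in $K$ must induce an edge, hence equal $\{a,b\}$ or $\{c,d\}$; but then the remaining two vertices lie in the independent set $I$ while inducing the other edge, a contradiction. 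The argument for $C_4$ (with non-edges $ac$ and $bd$) is analogous: the two vertices forced into $I$ form one of the diagonals, so the other two lie in $K$ and must be adjacent, contradicting that they form the missing diagonal.

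For the reverse direction, let $G$ be $(2P_2,C_4,C_5)$-free, let $K$ be a clique of \emph{maximum} size, and set $I:=V(G)\setminus K$. I claim $I$ is independent, which gives the split partition. Suppose instead that $xy\in E(G)$ with $x,y\in I$. Maximality of $K$ guarantees that $x$ has a non-neighbour in $K$ and likewise $y$. I would partition $K$ according to adjacency to the pair $\{x,y\}$ into $K_{xy}$ (adjacent to both), $K_x$, $K_y$ (adjacent to exactly one), and $K_\emptyset$ (adjacent to neither). Two configurations are immediately excluded: if $K_\emptyset$ contains two vertices $b,b'$, then $\{b,b',x,y\}$ induces a $2P_2$; and if $K_x$ and $K_y$ are both non-empty, picking $a\in K_x$ and $c\in K_y$ yields the induced $C_4$ on $\{x,y,a,c\}$ (edges $xy,xa,yc,ac$; non-edges $xc,ya$). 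Hence $|K_\emptyset|\le 1$ and, by the $x$--$y$ symmetry, we may assume $K_y=\emptyset$; combined with the maximality-forced non-neighbours this pins us down to the single remaining configuration in which $K_\emptyset=\{b\}$ and $K_x\neq\emptyset$.

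The delicate final step is to rule out this last configuration, and I expect this to be the crux of the whole proof. Here the four-vertex set $\{b,a,x,y\}$ (for $a\in K_x$) only induces a $P_4$, so no forbidden subgraph appears on four vertices; the contradiction must instead come either from a fifth vertex or from a clique-exchange argument. The plan is to exploit that $(K_{xy}\cup K_x)\cup\{x\}$ is again a maximum clique, and to choose $K$ from the outset so as to also minimise the number of edges inside $I$; comparing $I$ before and after swapping $x$ into the clique and $b$ out of it should either decrease this count (contradicting minimality) or expose a fifth vertex completing an induced $C_5$ through $b$ and $y$. Carefully organising this exchange so that exactly one of $2P_2$, $C_4$, $C_5$ is always produced, with no configuration slipping through, is the part that requires the most care.
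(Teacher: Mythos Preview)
The paper does not give its own proof of this lemma; it is quoted verbatim from F\"oldes--Hammer with the citation \cite{FH77} and used as a black box. So there is nothing in the paper to compare your argument against, and what follows is an assessment of the proposal on its own terms.

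Your forward direction is clean and correct. In the reverse direction the set-up is right, and the reductions $|K_\emptyset|\le 1$ and (w.l.o.g.) $K_y=\emptyset$ are fine. One small slip: the conclusion $K_x\neq\emptyset$ does \emph{not} come from ``maximality-forced non-neighbours'' (those only give $K_\emptyset\neq\emptyset$); it comes from the observation that if $K_x=\emptyset$ then $K_{xy}\cup\{x,y\}$ is a clique strictly larger than $K$. Make that explicit.

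Your exchange/minimality plan for the remaining configuration is the right idea and does close the argument, but you stop exactly where the work is. Here is how it finishes. Choose $K$ among maximum cliques so that $G[I]$ has fewest edges. Swapping $b$ out and $x$ in gives another maximum clique; comparing edge-counts in the two complements forces $b$ to have a neighbour $z\in I\setminus\{x\}$, and $z\neq y$ since $b$ is non-adjacent to $y$. Now split on the adjacencies of $z$ to $x$ and $y$:
\begin{itemize}
\item $z$ non-adjacent to both: $\{x,y,b,z\}$ induces $2P_2$.
\item $z$ adjacent to $x$: if some $w\in K$ is adjacent to $x$ but not $z$, then $\{x,w,b,z\}$ induces $C_4$ (edges $xw,wb,bz,zx$; non-edges $xb,wz$). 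Otherwise every neighbour of $x$ in $K$ is also a neighbour of $z$; together with $bz\in E$ this makes $z$ complete to $K$, contradicting maximality.
\item $z$ non-adjacent to $x$ but adjacent to $y$: for any $a\in K_x$, if $az\notin E$ then $\{a,x,y,z,b\}$ induces $C_5$ (the path $a\!-\!x\!-\!y\!-\!z\!-\!b\!-\!a$), while if $az\in E$ then $\{a,x,y,z\}$ induces $C_4$.
\end{itemize}
Each branch yields one of the three forbidden subgraphs, so $G[I]$ is edgeless and $(K,I)$ is a split partition. Filling in this short case analysis is all that is missing from your proposal.
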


Let~$X$ be a set of vertices in a graph $G=(V,E)$.
A vertex $y\in V\setminus X$ is {\em complete} to~$X$ if it is adjacent to every vertex of~$X$ and {\em anti-complete} to~$X$ if it is non-adjacent to every vertex of~$X$.
Similarly, a set of vertices $Y\subseteq V\setminus X$ is {\em complete} (resp. {\em anti-complete}) to~$X$ if every vertex in~$Y$ is complete (resp. anti-complete) to~$X$.
We say that the edges between two disjoint sets of vertices~$X$ and~$Y$ form a {\em matching} (resp. {\em co-matching}) if each vertex in~$X$ has at most one neighbour (resp. non-neighbour) in~$Y$ and vice versa.
A vertex $y\in V\setminus X$ {\em distinguishes}~$X$ if~$y$ has both a neighbour and a non-neighbour in~$X$.
The set~$X$ is a {\em module} of~$G$ if no vertex in $V\setminus X$ distinguishes~$X$.
A module~$X$ is {\em non-trivial} if $1<|X|<|V|$, otherwise it is {\em trivial}.
A graph is {\em prime} if it has only trivial modules.

To help reduce the amount of case analysis needed to prove Theorems~\ref{t-main} and~\ref{t-main2}, we prove the following lemma.

\begin{lemma}\label{lem:useful}
Let~$H \in {\cal S}$.
Then~$H$ is $(K_{1,3}+\nobreak P_1,2P_2,3P_1+\nobreak P_2,S_{1,1,2})$-free if and only if~$H$ is an induced subgraph of $K_{1,3}$, $P_1+P_4$, $2P_1+P_3$ or~$sP_1$ for some $s\geq 1$.
\end{lemma}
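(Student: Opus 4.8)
The plan is to prove both directions directly, dispatching the ``if'' direction by a hereditary-closure argument and devoting the bulk of the work to the ``only if'' direction through a structural case analysis on the components of~$H$. Throughout, write ${\cal F}=\{K_{1,3}+P_1,2P_2,3P_1+P_2,S_{1,1,2}\}$ for the set of forbidden graphs.

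For the ``if'' direction, I would use that being ${\cal F}$-free is a hereditary property, so it suffices to verify that each of the four maximal graphs $K_{1,3}$, $P_1+P_4$, $2P_1+P_3$ and $sP_1$ is itself ${\cal F}$-free; every induced subgraph then inherits this. Each verification is a direct inspection: $K_{1,3}$, $P_1+P_4$ and $2P_1+P_3$ each have at most five vertices, none contains two vertex-disjoint edges with no connecting edge (ruling out $2P_2$), none has the degree-$3$ vertex needed for $K_{1,3}+P_1$ or $S_{1,1,2}$ on the right number of vertices, and none realises the single-edge-plus-three-isolated pattern of $3P_1+P_2$; finally $sP_1$ has no edge at all and so avoids all four forbidden graphs, each of which contains an edge.

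For the ``only if'' direction, I assume $H\in{\cal S}$ is ${\cal F}$-free and build up its structure by eliminating one forbidden graph at a time. First, $S_{1,1,2}$-freeness constrains the component types: since $S_{1,1,2}\ssi S_{h,i,j}$ whenever $j\geq 2$, every subdivided-claw component must be $S_{1,1,1}=K_{1,3}$, so each component of~$H$ is a path $P_r$ or a $K_{1,3}$. Next, $(K_{1,3}+P_1)$-freeness forces that if any component is a $K_{1,3}$ then $H$ has no other vertex (a vertex from any other component together with the $K_{1,3}$ would induce $K_{1,3}+P_1$), giving $H=K_{1,3}$, the first outcome. In the remaining case every component is a path, so $H$ is a linear forest. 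Two components each containing an edge would yield two vertex-disjoint induced edges, i.e.\ $2P_2$; hence at most one component is non-trivial and $H=P_m+tP_1$ for some $m\geq 1$, $t\geq 0$. If $m=1$ then $H=sP_1$; otherwise $m\geq 2$, and since $2P_2\ssi P_5$, $2P_2$-freeness gives $m\leq 4$. To bound~$t$ I would invoke $(3P_1+P_2)$-freeness: choosing an edge of $P_m$ together with $t$ isolated vertices (and, when $m=4$, the far path endpoint, which is non-adjacent to that edge) produces an induced $3P_1+P_2$ once $t$ is large enough. The resulting thresholds are $t\leq 2$ when $m\in\{2,3\}$, so that $H\ssi 2P_1+P_3$, and $t\leq 1$ when $m=4$, so that $H\ssi P_1+P_4$; this exhausts all cases.

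I expect the only mildly delicate point to be the final $(3P_1+P_2)$ bookkeeping: for each value $m\in\{2,3,4\}$ one must correctly count how many vertices of $H$ are simultaneously non-adjacent to a chosen edge and pairwise non-adjacent, since this determines the sharp threshold on~$t$ and hence whether $H$ embeds into $2P_1+P_3$ or into $P_1+P_4$. Everything else reduces cleanly to the three structural observations that $S_{1,1,2}\ssi S_{h,i,j}$ for $j\geq 2$, that $2P_2\ssi P_5$, and that two non-trivial components of a linear forest induce $2P_2$.
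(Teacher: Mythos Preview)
Your proposal is correct and follows essentially the same route as the paper's proof: both directions are handled identically in substance, with the ``only if'' direction reducing first to the case $H=K_{1,3}$ versus $H$ a linear forest, then writing $H=P_m+tP_1$ and using $2P_2$-freeness to bound~$m\le 4$ and $(3P_1+P_2)$-freeness to bound~$t$. The only cosmetic difference is that the paper handles the non-linear-forest case by considering a one-vertex extension of an induced $K_{1,3}$ (which must be $K_{1,3}+P_1$ or $S_{1,1,2}$ since $H\in\mathcal{S}$), whereas you eliminate $S_{1,1,2}$ and $K_{1,3}+P_1$ sequentially; the conclusions and thresholds on~$t$ are identical.
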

\begin{proof}
Let~$H \in {\cal S}$.
First suppose~$H$ is an induced subgraph of $K_{1,3}$, $P_1+P_4$, $2P_1+P_3$ or~$sP_1$ for some $s\geq 1$.
It is readily seen that~$H$ is $(K_{1,3}+\nobreak P_1,2P_2,3P_1+\nobreak P_2,S_{1,1,2})$-free.

Now suppose that~$H$ is $(K_{1,3}+\nobreak P_1,2P_2,3P_1+\nobreak P_2,S_{1,1,2})$-free.
If~$H$ is not a linear forest then since $H \in {\cal S}$, it contains an induced subgraph isomorphic to~$K_{1,3}$.
We may assume that~$H$ is not an induced subgraph of~$K_{1,3}$, otherwise we are done.
In this case~$H$ contains an induced subgraph that is a one-vertex extension of~$K_{1,3}$.
Since~$H \in {\cal S}$, this means that~$H$ contains $K_{1,3}+\nobreak P_1$ or~$S_{1,1,2}$ as an induced subgraph, a contradiction.
We may therefore assume that~$H$ is a linear forest.

Since~$H$ is a linear forest, it is isomorphic to $P_{i_1}+\nobreak P_{i_2}+\nobreak\cdots+\nobreak P_{i_k}$ for some positive integers $i_1 \geq i_2 \geq \cdots \geq i_k$.
We may assume that $i_1 \geq 2$, otherwise $H=sP_1$ for some~$s \geq 1$.
Since~$H$ is $2P_2$-free, it follows that $i_1 \leq 4$ and, if $k \geq 2$, then $i_2 \leq 1$, so~$H$ has exactly one non-trivial component and that component is isomorphic to $P_2$, $P_3$ or~$P_4$.
So $H=P_s+tP_1$ for some $s \in \{2,3,4\}$ and $t \geq 0$.
If $s=4$ then $t \leq 1$, since~$H$ is $(3P_1+\nobreak P_2)$-free, in which case~$H$ is an induced subgraph of $P_1+\nobreak P_4$ and we are done.
If $s\in \{2,3\}$ then $t \leq 2$, since~$H$ is $(3P_1+\nobreak P_2)$-free, in which case~$H$ is an induced subgraph of $2P_1+\nobreak P_3$ and we are done.
This completes the proof.\qedllncs
\end{proof}

\subsection{Clique-width}\label{sec:clique-width}
The {\em clique-width} of a graph~$G$, denoted by~$\cw(G)$, is the minimum number of labels needed to construct~$G$ by using the following four operations:
\begin{enumerate}
\item creating a new graph consisting of a single vertex~$v$ with label~$i$;
\item taking the disjoint union of two labelled graphs~$G_1$ and~$G_2$;
\item joining each vertex with label~$i$ to each vertex with label~$j$ ($i\neq j$);
\item renaming label~$i$ to~$j$.
\end{enumerate}

\noindent
A class of graphs~${\cal G}$ has bounded clique-width if there is a constant~$c$ such that the clique-width of every graph in~${\cal G}$ is at most~$c$; otherwise the clique-width of~${\cal G}$ is {\em unbounded}.

Let~$G$ be a graph.
We define the following operations.
For an induced subgraph $G'\ssi G$, the {\em subgraph complementation} operation (acting on~$G$ with respect to~$G'$) replaces every edge present in~$G'$ by a non-edge, and vice versa.
Similarly, for two disjoint vertex subsets~$S$ and~$T$ in~$G$, the {\em bipartite complementation} operation with respect to~$S$ and~$T$ acts on~$G$ by replacing every edge with one end-vertex in~$S$ and the other one in~$T$ by a non-edge and vice versa.
We now state some useful facts about how the above operations (and some other ones) influence the clique-width of a graph.
We will use these facts throughout the paper.
Let $k\geq 0$ be a constant and let~$\gamma$ be some graph operation.
We say that a graph class~${\cal G'}$ is {\em $(k,\gamma)$-obtained} from a graph class~${\cal G}$ if the following two conditions hold:
\begin{enumerate}
\item every graph in~${\cal G'}$ is obtained from a graph in~${\cal G}$ by performing~$\gamma$ at most~$k$ times, and
\item for every $G\in {\cal G}$ there exists at least one graph in~${\cal G'}$ obtained from~$G$ by performing~$\gamma$ at most~$k$ times.
\end{enumerate}

\noindent
We say that~$\gamma$ {\em preserves} boundedness of clique-width if for any finite constant~$k$ and any graph class~${\cal G}$, any graph class~${\cal G}'$ that is $(k,\gamma)$-obtained from~${\cal G}$ has bounded clique-width if and only if~${\cal G}$ has bounded clique-width.

\begin{enumerate}[\bf F{a}ct 1.]
\item \label{fact:del-vert}Vertex deletion preserves boundedness of clique-width~\cite{LR04}.\\[-1em]

\item \label{fact:comp}Subgraph complementation preserves boundedness of clique-width~\cite{KLM09}.\\[-1em]

\item \label{fact:bip}Bipartite complementation preserves boundedness of clique-width~\cite{KLM09}.\\[-1em]

\end{enumerate}
We need the following lemmas on clique-width, the first one of which is easy to show.

\begin{lemma}\label{lem:atmost-2}
The clique-width of a graph of maximum degree at most~$2$ is at most~$4$.
\end{lemma}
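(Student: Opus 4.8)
The plan is to prove that a graph $G$ with maximum degree at most $2$ has clique-width at most $4$ by exploiting the fact that every component of such a graph is either a path or a cycle. The first step is to reduce to the connected case: since clique-width is the maximum clique-width over connected components (the disjoint union operation introduces no new labels beyond what each component needs, and we can reuse labels by renaming), it suffices to show that every path and every cycle has clique-width at most $4$. Indeed, paths have clique-width at most $3$ and cycles at most $4$, so handling cycles is the binding constraint.

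The core of the argument is an explicit labelling scheme that builds a path or cycle one vertex at a time. First I would handle a path $P_n = v_1 v_2 \cdots v_n$. The idea is to maintain the invariant that at each stage the current ``active'' endpoint carries a distinguished label, say label $1$, while every other vertex already placed carries a fixed label, say label $2$. To extend the path, create a new vertex with label $3$, take the disjoint union, join label $1$ to label $3$ (adding the single new edge), then rename label $1$ to $2$ (freezing the old endpoint) and rename label $3$ to $1$ (making the new vertex the active endpoint). This uses only three labels and shows $\cw(P_n) \leq 3$.

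For a cycle $C_n = v_1 v_2 \cdots v_n v_1$ the only extra difficulty is closing the cycle by adding the edge $v_n v_1$, and this is exactly where the fourth label is needed. I would reserve a fourth label to keep track of the \emph{first} vertex $v_1$ separately throughout the construction: build the path $v_1 v_2 \cdots v_n$ as above but give $v_1$ its own private label $4$ (so $v_1$ keeps label $4$ and is never renamed), use labels $1,2,3$ to grow the path from $v_2$ onward as before, and finally, once $v_n$ is the active endpoint with label $1$, perform the join operation between label $1$ and label $4$ to insert the closing edge $v_n v_1$. This yields $\cw(C_n) \leq 4$. Combining the two cases with the reduction to components gives the bound for all graphs of maximum degree at most $2$.

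I do not anticipate a serious obstacle here, as the statement is explicitly flagged as ``easy to show''; the only point requiring mild care is bookkeeping the labels so that the closing edge of a cycle can be added without prematurely joining $v_1$ to interior vertices, which is precisely what the dedicated label $4$ for $v_1$ accomplishes. One should also check the small cases ($C_3$, $C_4$, and short paths) directly, but these are immediate. Thus the main work is simply presenting the labelling scheme cleanly.
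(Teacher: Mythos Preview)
Your proposal is correct and is the standard argument for this well-known fact. The paper does not actually give a proof of this lemma at all---it simply states it with the remark that it ``is easy to show''---so your explicit construction (three labels for paths, a fourth reserved label to close a cycle) is exactly the kind of argument the paper leaves to the reader.
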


\begin{lemma}[\cite{DP15}]\label{lem:P4-free-bdd-cw}
Let~$H$ be a graph.
The class of $H$-free graphs has bounded clique-width if and only~if $H \ssi P_4$.
\end{lemma}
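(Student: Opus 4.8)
The plan is to prove the two implications separately, after first recasting the condition $H \not\ssi P_4$ in terms of minimal forbidden induced subgraphs. A short finite check identifies these obstructions as exactly $3P_1$, $K_3$, $2P_2$, $C_4$ and $C_5$: on at most four vertices every graph other than $3P_1$, $K_3$, $2P_2$ and $C_4$ embeds into $P_4$ (on three vertices only $3P_1$ and $K_3$ fail, and on four vertices the only non-$P_4$ graphs avoiding both $3P_1$ and $K_3$ are $2P_2$ and $C_4$), while on five vertices the unique graph all of whose $4$-vertex induced subgraphs are $P_4$ is $C_5$, and no minimal obstruction can have more than five vertices. Thus $H \not\ssi P_4$ if and only if $H$ contains one of $3P_1$, $K_3$, $2P_2$, $C_4$, $C_5$ as an induced subgraph, and I would use this reformulation throughout.

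For the \emph{if} direction, suppose $H \ssi P_4$. Then every $H$-free graph is also $P_4$-free, since an induced $P_4$ would contain an induced copy of $H$. As the $P_4$-free graphs are exactly the cographs, which are well known to have clique-width at most~$2$, the class of $H$-free graphs has bounded clique-width.

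For the \emph{only if} direction, suppose $H \not\ssi P_4$ and fix an induced subgraph $F \in \{3P_1, K_3, 2P_2, C_4, C_5\}$ of $H$. Since $F \ssi H$, every $F$-free graph is $H$-free, so the class of $F$-free graphs is contained in the class of $H$-free graphs; it therefore suffices to exhibit, for the relevant $F$, a single $F$-free class of unbounded clique-width, as this forces unboundedness for the larger class. I would supply two such classes. First, for $F \in \{2P_2, C_4, C_5\}$ the class of split graphs works: by Lemma~\ref{lem:split-forb-graphs} split graphs are $(2P_2, C_4, C_5)$-free, and they have unbounded clique-width~\cite{MR99}. Second, for $F = K_3$ the class of $n \times n$ grids is triangle-free (being bipartite) and is known to have unbounded clique-width. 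Finally, the case $F = 3P_1$ reduces to the previous one by complementation: the complements of these grids are $\overline{K_3} = 3P_1$-free, and their clique-width is unbounded because the clique-widths of a graph and its complement differ by at most a factor of two, as $\cw(\overline{G}) \leq 2\cw(G)$~\cite{CO00}.

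The genuinely hard content here is imported rather than created: the unboundedness of clique-width for split graphs and for grids are the deep ingredients, and I would cite them rather than reprove them. The only real work in the argument is the obstruction analysis together with the elementary observation that passing to an $F$-free subclass preserves unboundedness because $H$-freeness is hereditary. The one step to get exactly right is the finite characterization — in particular not to overlook $C_5$, since it is precisely this fifth obstruction that makes the split-graph description of Lemma~\ref{lem:split-forb-graphs} (as $(2P_2, C_4, C_5)$-free) simultaneously dispose of three of the five cases.
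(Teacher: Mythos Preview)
The paper does not supply its own proof of this lemma; it is quoted as a known result with citation~\cite{DP15}. Your argument is correct. The obstruction list $\{3P_1, K_3, 2P_2, C_4, C_5\}$ is right (the Ramsey bound $R(3,3)=6$ indeed caps the search at five vertices, and $C_5$ is the unique five-vertex graph all of whose four-vertex induced subgraphs are~$P_4$), and your three witness families---cographs for the bounded direction, split graphs for $2P_2$, $C_4$, $C_5$, and grids together with their complements for $K_3$ and $3P_1$---are all valid.

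One observation worth making is that the paper's own toolkit already packages the unbounded direction more uniformly. Lemma~\ref{lem:classS} says that if the single forbidden graph $H$ lies outside~$\mathcal S$ then $H$-free graphs have unbounded clique-width; this immediately handles $K_3$, $C_4$ and $C_5$. For the remaining two obstructions, $3P_1$ and $2P_2$, one has $\overline{3P_1}=K_3\notin\mathcal S$ and $\overline{2P_2}=C_4\notin\mathcal S$, so Lemma~\ref{lem:classS} applied to $\overline{H}$-free graphs together with Fact~\ref{fact:comp} finishes these cases as well. This route avoids naming grids explicitly and keeps everything inside results the paper already states, whereas your approach has the virtue of exhibiting concrete witness classes and being independent of the (itself nontrivial) Lemma~\ref{lem:classS}.
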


\begin{lemma}[\cite{LR06}]\label{lem:classS}
Let $\{H_1,\ldots,H_p\}$ be a finite set of graphs.
If $H_i\notin {\cal S}$ for all $i \in \{1,\ldots,p\}$ then the class of $(H_1,\ldots,H_p)$-free
graphs has unbounded~clique-width.
\end{lemma}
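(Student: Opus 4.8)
The plan is to exhibit, inside the class of $(H_1,\dots,H_p)$-free graphs, a single infinite family of unbounded clique-width; this suffices, since any graph class containing a subclass of unbounded clique-width has unbounded clique-width itself. The family I would use consists of \emph{subdivided walls}. Recall that the $k\times k$ (hexagonal) wall is subcubic and has treewidth $\Theta(k)$, so the walls form a family of bounded maximum degree and unbounded treewidth. I would rely on the known fact that, for graphs of bounded maximum degree, bounded clique-width is equivalent to bounded treewidth (Gurski--Wanke); equivalently, any family of subcubic graphs of unbounded treewidth already has unbounded clique-width. The only role of the subdivision is to control which \emph{small} graphs can occur as induced subgraphs.

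Concretely, set $N=\max_i|V(H_i)|$ and let $W_k$ denote the graph obtained from the $k\times k$ wall by subdividing every edge exactly $N$ times (replacing it by a path with $N$ internal vertices). First I would verify that $\{W_k\}_{k\ge 1}$ has unbounded clique-width: each $W_k$ is subcubic, and since the $k\times k$ wall is a minor of $W_k$ (contract each subdivided edge) and treewidth is minor-monotone, $\tw(W_k)\ge\tw(\text{$k\times k$ wall})\to\infty$; bounded degree together with unbounded treewidth then forces unbounded clique-width. The heart of the argument is the following claim: \emph{every induced subgraph $G$ of $W_k$ with $|V(G)|\le N$ belongs to ${\cal S}$.} Granting this, since each $H_i\notin{\cal S}$ while $|V(H_i)|\le N$, no $H_i$ can embed as an induced subgraph of any $W_k$; hence every $W_k$ is $(H_1,\dots,H_p)$-free, the whole family lies in the class, and we are done.

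To prove the claim I would argue as follows. As an induced subgraph of a subcubic graph, $G$ is subcubic. Any cycle of $W_k$ traverses at least one fully subdivided edge and so has length at least $N+1>N$; since $|V(G)|\le N$, this means $G$ contains no cycle and is therefore a forest. A vertex of degree $3$ in $G$ has three neighbours in $W_k$, hence degree exactly $3$ there, so it must be one of the original (non-subdivision) wall vertices, as every subdivision vertex has degree $2$. Now any two distinct original wall vertices lie at distance at least $N+1$ in $W_k$, because every path between them crosses at least one subdivided edge of length $N+1$; as distances in an induced subgraph are at least the ambient distances, two degree-$3$ vertices lying in a common component of $G$ would be at distance at least $N+1$ in $W_k$, contradicting that in a connected induced subgraph on at most $N$ vertices their distance is at most $N-1$. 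Hence each component of $G$ has at most one vertex of degree $3$; being a subcubic tree, such a component is a path (no degree-$3$ vertex) or a subdivided claw $S_{h,i,j}$ (one degree-$3$ vertex with three pendant legs), so $G\in{\cal S}$.

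The step I expect to be the main obstacle is not the combinatorics of the claim but the external input it leans on: the equivalence of bounded clique-width and bounded treewidth for bounded-degree graphs, which is what converts "unbounded treewidth" into "unbounded clique-width" for the subdivided walls. If one prefers to avoid invoking it, the alternative is to cite directly that (subdivided) walls have unbounded clique-width. The one genuinely delicate point in the self-contained part is the distance bookkeeping in the claim: one must choose the number of subdivisions ($N$) large enough, relative to the sizes of the forbidden graphs, that the original wall vertices are spread so far apart that no small induced forest can carry two branch vertices in a single component, which is exactly what rules out the double-claw obstructions that separate ${\cal S}$ from general subcubic forests.
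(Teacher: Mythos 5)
Your proof is correct, and it is essentially the argument behind this lemma in its cited source: the paper does not reprove the statement but imports it from~\cite{LR06}, and the paper's Lemma~\ref{lem:wall} (unbounded clique-width of $k$-subdivided walls) is exactly the engine you reconstruct via the treewidth/bounded-degree route, so you could shortcut the Gurski--Wanke step by citing that lemma directly. Your self-contained part---choosing $N=\max_i|V(H_i)|$ subdivisions, ruling out cycles and two branch vertices per component by distance counting, and concluding that every induced subgraph on at most $N$ vertices of the $N$-subdivided wall lies in ${\cal S}$---is sound as written.
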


\begin{lemma}[\cite{CO00}]\label{lem:prime}
Let~$G$ be a graph and let~${\cal P}$ be the set of all induced subgraphs of~$G$ that are prime.
Then $\cw(G)=\max_{H \in {\cal P}}\cw(H)$.
\end{lemma}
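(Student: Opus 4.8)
The plan is to establish the two inequalities $\cw(G)\ge \max_{H\in{\cal P}}\cw(H)$ and $\cw(G)\le \max_{H\in{\cal P}}\cw(H)$ separately, writing $M:=\max_{H\in{\cal P}}\cw(H)$. The first is immediate and does not really use primality: since clique-width does not increase under vertex deletion (Fact~\ref{fact:del-vert}), equivalently since any $k$-expression for $G$ restricts to a $k$-expression for an induced subgraph by discarding the operations that introduce deleted vertices, every induced subgraph $H$ of $G$ satisfies $\cw(H)\le\cw(G)$. Applying this to the members of ${\cal P}$ gives $M\le\cw(G)$. Everything then reduces to proving $\cw(G)\le M$.

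For the upper bound I would induct on $|V(G)|$ via the modular decomposition of $G$. If $G$ is itself prime then $G\in{\cal P}$ and there is nothing to prove, so assume $G$ is not prime and let $M_1,\dots,M_k$ with $k\ge 2$ be the maximal strong modules of $G$, with quotient graph $Q$ obtained by contracting each $M_i$ to a single vertex; then $G$ is the substitution of the induced subgraphs $G[M_1],\dots,G[M_k]$ into the vertices of $Q$. Choosing one representative from each $M_i$ exhibits $Q$ as an induced subgraph of $G$ (any two modules are either completely joined or completely non-adjacent, so representatives reproduce the adjacencies of $Q$). By Gallai's modular decomposition theorem exactly one of three cases occurs: $G$ is disconnected and $Q$ is edgeless, so $\cw(Q)=1\le M$; $\overline{G}$ is disconnected and $Q$ is complete, so $\cw(Q)=2$, which is legitimate because $G$ then has an edge, hence contains the prime graph $K_2$ and $M\ge\cw(K_2)=2$; or both $G$ and $\overline{G}$ are connected, in which case $Q$ is prime on $k\ge 4$ vertices, so the representative copy lies in ${\cal P}$ and $\cw(Q)\le M$. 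In every case each $G[M_i]$ has strictly fewer vertices than $G$ and all of its prime induced subgraphs are prime induced subgraphs of $G$, so by induction $\cw(G[M_i])\le M$.

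The crux, and the step I expect to be the main obstacle, is a composition lemma: if $G$ is the substitution of $G_1,\dots,G_k$ into the vertices of $Q$, then $\cw(G)\le\max(\cw(Q),\cw(G_1),\dots,\cw(G_k))$. Granting it, the induction closes, since $\cw(G)\le\max(\cw(Q),\max_i\cw(G[M_i]))\le M$. Its proof is an expression-splicing argument. Take an optimal $c$-expression for $Q$ with $c=\cw(Q)$; in it each vertex $i$ is created by a leaf that introduces a vertex with some label $\ell$. Since $M_i$ is a module, every vertex of $G_i$ must attach to the rest of $G$ exactly as $i$ does, so I would replace that leaf by a complete expression that builds $G_i$ and then renames all of its labels to the single label $\ell$. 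Because a clique-width expression is evaluated bottom-up, this sub-expression is fully evaluated before $G_i$ is combined (via disjoint union and joins) with the remainder of the construction; hence the only labels ever simultaneously in use are either the $c$ labels of the $Q$-expression or the $\cw(G_i)$ internal labels used while building a single $G_i$. The total number of labels therefore never exceeds $\max(\cw(Q),\max_i\cw(G_i))$, and the one point requiring care is exactly that the internal labels of each $G_i$ may be reused without clashing with those of $Q$, which the bottom-up evaluation order guarantees.

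Finally I would note that the degenerate cases are absorbed by the paper's convention that a graph on at most two vertices is prime, having only trivial modules. This makes $K_2$ prime with $\cw(K_2)=2$, which is what forces $M\ge 2$ as soon as $G$ has an edge and so validates the complete-quotient case; and it makes the formula correct for cographs, where ${\cal P}$ consists only of $K_1$, $K_2$ and $\overline{K_2}$, giving $M\in\{1,2\}$ in agreement with the clique-width of a cograph (cf.\ Lemma~\ref{lem:P4-free-bdd-cw}). Thus no separate base-case analysis beyond $|V(G)|\le 1$ is needed, and the induction yields $\cw(G)=M$.
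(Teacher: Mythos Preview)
The paper does not give its own proof of this lemma; it is stated as a citation to Courcelle and Olariu~\cite{CO00} and used as a black box. There is therefore no paper proof to compare against.

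Your argument is the standard one and is essentially the proof in~\cite{CO00}: the easy inequality via hereditariness, induction on $|V(G)|$ through the modular decomposition, and the substitution/composition lemma $\cw(G)\le\max(\cw(Q),\max_i\cw(G_i))$ established by splicing expressions for the $G_i$ into the leaves of an optimal expression for~$Q$. Your handling of the three quotient cases (edgeless, complete, prime) and the observation that $K_2$ is prime so $M\ge 2$ whenever $G$ has an edge are exactly what is needed. The one place where your justification of the composition lemma is slightly loose is the phrase ``the only labels ever simultaneously in use are either the $c$ labels of the $Q$-expression or the $\cw(G_i)$ internal labels''; strictly, both may be present simultaneously after a disjoint union, but this is harmless because by that point every $G_j$ already combined has had all its labels renamed to a single $Q$-label, so the internal $G_i$-labels used at a leaf cannot interact with anything via a later join. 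With that minor clarification the proof is complete.
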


\section{The Proof of Theorem~\ref{t-self}}\label{s-self}

We first prove the following lemma, which we will also use in the proof of Theorem~\ref{t-main2}.
\begin{lemma}\label{lem:ramsey-for-self-comp}
If~$G$ is a $(C_4,C_5,K_4)$-free self-complementary graph then~$G$ is an induced subgraph of the bull.
\end{lemma}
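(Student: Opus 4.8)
The plan is to use the fact that a self-complementary graph has exactly half of all possible edges, and to combine the three forbidden subgraphs $C_4$, $C_5$ and $K_4$ to tightly constrain the structure. First I would observe that since $G$ is self-complementary, $\overline{G}$ is also $(C_4,C_5,K_4)$-free; because $\overline{C_4}=2P_1+\nobreak\ldots$ is $2P_2$ and $\overline{K_4}=4P_1$, forbidding $C_4$ in both $G$ and $\overline{G}$ means $G$ is $(C_4,\overline{C_4})$-free, i.e.\ $(C_4,2P_2)$-free, and forbidding $K_4$ in both means $G$ is $(K_4,4P_1)$-free, i.e.\ $G$ has no $K_4$ and no independent set of size $4$. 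The key leverage is the latter: by Ramsey-type reasoning, a graph with no $K_4$ and no $\overline{K_4}=4P_1$ has bounded size, since $R(4,4)=18$, so $|V(G)|\leq 17$. I would then sharpen this using $C_5$-freeness and self-complementarity to pin $|V(G)|$ down to a very small value.

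The main computational step is to show $|V(G)|$ is at most $5$ (indeed exactly the bull's five vertices or fewer). Here I would use the Ramsey bound together with the constraint that a self-complementary graph on $n$ vertices needs $n\equiv 0$ or $1 \pmod 4$, so the candidate orders are $n\in\{1,4,5,8,12,16,\dots\}$, all at most $17$. The cleaner route is to invoke the classical fact that $(K_4,4P_1)$-free graphs — equivalently graphs with $\omega(G)\leq 3$ and $\alpha(G)\leq 3$ — have at most $R(4,4)-1=17$ vertices, and then to further restrict using $C_5$-freeness. Since $C_5$ is itself self-complementary and forbidden, and the bull is the target, I would argue that among the admissible small orders, self-complementarity forces $n\leq 5$: a direct check rules out $n=8$ (and higher) because any $(C_4,C_5,K_4)$-free graph on $8$ or more vertices would, combined with the same constraints on $\overline{G}$, violate the clique/independence bounds or force a forbidden induced $C_4$, $C_5$ or $K_4$.

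Once $|V(G)|\leq 5$, the problem becomes a finite case check. I would enumerate the self-complementary graphs on $1,4,5$ vertices (the only admissible orders that are $0$ or $1\pmod 4$ and at most $5$): by the earlier figure these are exactly $P_1$, $P_4$, $C_5$ and the bull. Since $C_5$ is explicitly forbidden, and $P_1$, $P_4$ are induced subgraphs of the bull, the only $(C_4,C_5,K_4)$-free self-complementary graphs are induced subgraphs of the bull, which is precisely the claim. Throughout I would use that the bull is indeed $(C_4,C_5,K_4)$-free so it is not itself excluded.

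The hard part will be establishing the upper bound on $|V(G)|$ cleanly rather than by brute force: the naive Ramsey bound of $17$ is far too weak, and bridging from "at most $17$ vertices" down to "at most $5$ vertices" is where the real work lies. I expect the cleanest argument to exploit that $G$ being self-complementary and $(C_4,K_4)$-free makes both $G$ and $\overline{G}$ simultaneously triangle-light and sparse, so a short structural or counting argument (comparing edge counts against the degree constraints imposed by excluding $C_4$ and $K_4$) should eliminate all orders $n\geq 8$ directly, leaving only the trivial finite check. If a slick counting argument does not materialize, the fallback is to verify by hand that no self-complementary graph on $8$ vertices from the known list of ten (Figure~\ref{fig:self-complementary-8-vertices}) is $(C_4,C_5,K_4)$-free, which terminates the argument at the next admissible order.
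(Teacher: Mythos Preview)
You have assembled all the right ingredients but missed the one observation that collapses the argument. You correctly note that a self-complementary $(C_4,C_5,K_4)$-free graph $G$ is automatically $(2P_2,C_5,4P_1)$-free as well, since $\overline{C_4}=2P_2$, $\overline{C_5}=C_5$ and $\overline{K_4}=4P_1$. The step you are missing is that $(C_4,2P_2,C_5)$-free is precisely the F\"oldes--Hammer characterization of split graphs (this is Lemma~\ref{lem:split-forb-graphs} in the paper). Hence $V(G)$ partitions into a clique $C$ and an independent set $I$; now $K_4$-freeness forces $|C|\le 3$ and $4P_1$-freeness forces $|I|\le 3$, so $|V(G)|\le 6$. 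Since the only non-empty self-complementary graphs on fewer than eight vertices are $P_1$, $P_4$, $C_5$ and the bull, you are done. This is exactly the paper's proof, and it is a one-paragraph argument.

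By contrast, your Ramsey route via $R(4,4)=18$ gives only $|V(G)|\le 17$, and your proposed fallback --- checking the ten self-complementary graphs on eight vertices --- does \emph{not} terminate the argument: it says nothing about the admissible orders $n\in\{9,12,13,16,17\}$, since a self-complementary graph on (say) nine vertices need not contain a self-complementary induced subgraph on eight vertices. So as written the proposal has a genuine gap at the ``bridge from 17 to 5'' step, and the hoped-for counting argument is left unspecified. The split-graph observation is exactly the ``slick'' replacement you were looking for.
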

\begin{proof}
Suppose, for contradiction, that~$G$ is a $(C_4,C_5,K_4)$-free self-complementary graph on~$n$ vertices that is not an induced subgraph of the bull.
Since~$G$ is~$C_5$-free and is not an induced subgraph of the bull, it is not equal to~$P_1$, $P_4$, $C_5$ or the bull.
As these are the only non-empty self-complementary graph on less than eight vertices (see \figurename~\ref{fig:small-self-comp}), $G$ must have at least eight vertices.
Since~$G$ is $C_4$-free and self-complementary, it is also $2P_2$-free, so it is $(C_4,C_5,2P_2)$-free.
Then, by Lemma~\ref{lem:split-forb-graphs}, $G$ must be a split graph, so its vertex set can be partitioned into a clique~$C$ and an independent set~$I$.
Since~$G$ is $K_4$-free and self-complementary, it is also $4P_1$-free.
Therefore $|C|,|I| \leq 3$, so~$G$ has at most six vertices, a contradiction.
This completes the proof.\qedllncs
\end{proof}

We are now ready to prove Theorem~\ref{t-self}. 
Note that this theorem holds even if~${\cal H}$ is infinite.

\medskip
\newpage
\noindent
\faketheorem{Theorem~\ref{t-self} (restated).}
{\it Let~${\cal H}$ be a
set of non-empty self-complementary graphs.
Then the class of ${\cal H}$-free graphs has bounded clique-width if and only if either $P_1 \in {\cal H}$ or $P_4 \in {\cal H}$.}

\begin{proof}
Suppose there is a graph $H \in {\cal F} \cap \{P_1,P_4\}$.
Then the class of ${\cal F}$-free graphs is a subclass of the class of $P_4$-free graphs, which have bounded clique-width by Lemma~\ref{lem:P4-free-bdd-cw}.
Now suppose that ${\cal F} \cap \{P_1,P_4\}=\emptyset$.
The only non-empty self-complementary graphs on at most five vertices that are not equal to~$P_1$ and~$P_4$ are the bull and the~$C_5$ (see \figurename~\ref{fig:small-self-comp}).
By Lemma~\ref{lem:ramsey-for-self-comp}, it follows that every graph in~${\cal F}$ contains an induced subgraph isomorphic to the bull, $C_4$, $C_5$ or~$K_4$.
Therefore the class of ${\cal F}$-free graphs contains the class of $(\mbox{bull},C_4,C_5,K_4)$-free graphs, which has unbounded clique-width by Lemma~\ref{lem:classS}.\qedllncs
\end{proof}

\section{The Proof of Theorem~\ref{t-main}}\label{s-main}

In this section we prove Theorem~\ref{t-main}.
As stated in Section~\ref{sec:into}, to do this, we need to prove that $(2P_1+\nobreak P_3,\allowbreak\overline{2P_1+P_3})$-free graphs have bounded clique-width.
Open Problem~\ref{oprob:twographs} tells us that no other cases remain to be solved.
We will prove that $(2P_1+\nobreak P_3,\allowbreak\overline{2P_1+P_3})$-free graphs have bounded clique-width in the following way.
We first prove three useful structural lemmas, namely Lemmas~\ref{lem:matching-comatching}--\ref{l-victor}; we will use these lemmas repeatedly throughout the proof.
Next, we prove Lemmas~\ref{lem:2P1+P_3-co-2P_1+P_3-free-c5-non-free} and~\ref{lem:2P1+P_3-co-2P_1+P_3-free-c6-non-free}, which state that if a $(2P_1+\nobreak P_3,\allowbreak\overline{2P_1+P_3})$-free graph~$G$ contains an induced~$C_5$ or~$C_6$, respectively, then~$G$ has bounded clique-width.
We do this by partitioning the vertices outside this cycle into sets, depending on their neighbourhood in the cycle.
We then analyse the edges within these sets and between pairs of such sets.
After a lengthy case analysis, we find that~$G$ has bounded clique-width in both these cases.
By Fact~\ref{fact:comp} it only remains to analyse $(2P_1+\nobreak P_3,\allowbreak\overline{2P_1+P_3})$-free graphs that are also $(C_5,C_6,\overline{C_6})$-free.
Next, in Lemma~\ref{lem:2P1P3-co-2P1P3-C6-co-C6-prime-graphs}, we show that if such graphs are prime, then they are either $K_7$-free or $\overline{K_7}$-free.
In Lemma~\ref{lem:2P1P3-co-2P1P3-Kk-free-bdd-cw} we use the fact that $(2P_1+\nobreak P_3,\allowbreak\overline{2P_1+P_3})$-free graphs are $\chi$-bounded to deal with the case where a graph in the class is $K_7$-free.
Finally, we combine all these results together in the proof of Theorem~\ref{t-main}.

\medskip
\noindent
We start by proving the aforementioned structural lemmas.
Recall that if~$X$ and~$Y$ are disjoint sets of vertices in a graph, we say that the edges between these two sets form a matching if each vertex in~$X$ has at most one neighbour in~$Y$ and vice versa
(if each vertex has exactly one such neighbour, we say that the matching is {\em perfect}).
Similarly, the edges between these sets form a co-matching if each vertex in~$X$ has at most one non-neighbour in~$Y$ and vice versa.
Also note that when describing a set as being a clique or an independent set, we allow the case where this set is empty.

\begin{lemma}\label{lem:matching-comatching}
Let~$G$ be a $(2P_1+\nobreak P_3,\allowbreak\overline{2P_1+P_3})$-free graph whose vertex set can be partitioned into two sets~$X$ and~$Y$, each of which is a clique or an independent set.
Then by deleting at most one vertex from each of~$X$ and~$Y$, it is possible to obtain subsets such that the edges between them form a matching or a co-matching.
\end{lemma}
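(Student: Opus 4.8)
The plan is to distinguish four cases according to whether each of $X$ and $Y$ is a clique or an independent set, and then to halve their number using complementation. The class of $(2P_1+P_3,\overline{2P_1+P_3})$-free graphs is closed under complementation, and replacing $G$ by $\overline{G}$ keeps the partition $(X,Y)$, turns every clique into an independent set and vice versa, and turns a set of cross-edges forming a matching into one forming a co-matching. Hence the case where $X,Y$ are both cliques reduces to the case where they are both independent, and the case where $X$ is independent and $Y$ is a clique reduces to the case where $X$ is a clique and $Y$ is independent. It therefore suffices to treat (a) the case where $G$ is bipartite and (b) the case where $G$ is split (with clique $X$ and independent set $Y$).

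The core of the argument is a degree dichotomy: I would show that every vertex is either \emph{low}, meaning it has at most one neighbour on the opposite side, or \emph{high}, meaning it has at most one non-neighbour on the opposite side, with nothing in between. In configuration (a) this is forced directly: if $x\in X$ has two neighbours $y_1,y_2\in Y$, then $G[y_1,x,y_2]$ is an induced $P_3$, and any two vertices anticomplete to $\{y_1,x,y_2\}$ would be a $2P_1$ completing a forbidden $2P_1+P_3$. Applying this to the non-neighbours of $x$ in $Y$, which are pairwise non-adjacent and anticomplete to this $P_3$, shows that $x$ has at most one non-neighbour and is therefore high; thus a vertex with at least two neighbours on the other side is automatically high. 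In configuration (b) the same dichotomy should hold, and I would establish it by the same scheme, now exhibiting the induced $P_3$ for the neighbour bound and an induced $\overline{2P_1+P_3}$ for the non-neighbour bound, using that two vertices of the clique $X$ are adjacent while two vertices of the independent set $Y$ are not.

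To finish, I would count low vertices. If no vertex is high then every cross-degree is at most one and the cross-edges already form a matching. Otherwise fix a high vertex $x_0\in X$, which has at least $|Y|-1$ neighbours. If $x',x''\in X$ were both low, then each misses all but at most one neighbour of $x_0$, so (unless $|Y|$ is very small) one can pick $y_1,y_2\in N(x_0)$ anticomplete to both; then $G[y_1,x_0,y_2]$ is an induced $P_3$ and $\{x',x''\}$ is a $2P_1$ anticomplete to it, a contradiction. Hence a side containing a high vertex contains at most one low vertex, and symmetrically for $Y$. If $X$ has a high vertex but $Y$ has none, then every vertex of $Y$ is low; since each neighbour of $x_0$ in $Y$ then has $x_0$ as its only neighbour, every vertex of $X$ other than $x_0$ is low as well, so $|X|\le 2$ and the claim follows by inspection. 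In the remaining case both sides have a high vertex and at most one low vertex, so deleting the unique low vertex (if present) from each side leaves only high vertices, each still missing at most one vertex on the other side after that single deletion; the surviving cross-edges then form a co-matching.

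The hard part will be configuration (b): there the two vertices playing the role of the $2P_1$, or of the non-adjacent pair inside $\overline{2P_1+P_3}$, may lie on opposite sides of the partition, so establishing the dichotomy and the ``at most one low vertex per side'' bound requires juggling both forbidden subgraphs at once and keeping careful track of which same-side pairs are edges and which are not. A secondary nuisance is that the extremal counting behind the low-vertex bound degrades when a part has only a few vertices; those boundary cases carry no induced obstruction and must instead be disposed of directly by deleting one vertex from the small part.
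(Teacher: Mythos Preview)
Your high/low dichotomy is exactly the paper's Claim~1 (``full/empty''), and your complementation reduction to the bipartite and split configurations is a clean simplification the paper does not make. The finishing argument for case~(a) is correct and in fact tidier than the paper's: when both sides are independent, two low vertices $x',x''$ together with a high $x_0$ and two suitable $y_1,y_2$ really do induce $2P_1+P_3$, so each side has at most one low vertex.

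The gap is in case~(b). Your key structural claim---``a side containing a high vertex contains at most one low vertex''---is simply false when that side is the clique. Take $X=\{x_0,x_1,x_2,x_3\}$ a clique, $Y$ a large independent set, $x_0$ complete to $Y$, and $x_1,x_2,x_3$ anti-complete to $Y$. This graph is $(2P_1+P_3,\overline{2P_1+P_3})$-free, yet $X$ has one high vertex and three low ones. Your attempted obstruction $G[x',x'',y_1,x_0,y_2]$ fails here because $x',x''$ are adjacent to $x_0$ in the clique, so the would-be $2P_1$ is not anticomplete to the $P_3$. Your subsequent case analysis (``\dots so $|X|\le 2$'') then collapses, since it relies on the false bound. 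The difficulty you anticipate in case~(b) is real, but it is not merely about small parts or juggling both forbidden graphs: the structural claim itself needs to change.

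What does hold in case~(b)---and what rescues the approach---is the weaker statement that \emph{no side has both $\ge 2$ high and $\ge 2$ low vertices} (for large opposite side). On the clique side, two high $x_0,x_3$ and two low $x_1,x_2$ yield $\overline{2P_1+P_3}$ on $\{x_0,x_3,x_1,x_2,y\}$ for a common neighbour $y$ of $x_0,x_3$ missing $x_1,x_2$; on the independent side, two high and two low yield $2P_1+P_3$ as in case~(a). With this weaker dichotomy you must then argue that one can consistently delete either the unique high or the unique low from each side; the mixed case (``$X$ has $\le 1$ high but $Y$ has $\le 1$ low'') needs a short counting argument forcing one side to be tiny. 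This is essentially how the paper proceeds (its Case~1 is precisely ``not both $\ge 2$ full and $\ge 2$ empty on some side''), followed by a more delicate analysis of the residual mixed configuration.
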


\begin{proof}
\setcounter{ctrclaim}{0}
Given two disjoint sets of vertices, we say that with respect to these sets, a vertex is \emph{full} if it is adjacent to all but at most one vertex in the other set, and it is \emph{empty} if it is adjacent to at most one vertex in the other set.
If every vertex in the two sets is full, then the edges between the two sets form a co-matching, and if every vertex in the two sets is empty, then the edges between them form a matching.

\clm{\label{clm:2p1p3-full-or-empty}Each vertex in~$X$ and~$Y$ is either full or empty.}
If a vertex in, say, $X$ is neither full nor empty, then it has two neighbours and two non-neighbours in~$Y$, and these five vertices induce a $2P_1+\nobreak P_3$ if~$Y$ is an independent set, or a $\overline{2P_1+P_3}$ if~$Y$ is a clique.
This completes the proof of the claim.

\medskip
\noindent
To prove the lemma, we must show that, after discarding at most one vertex from each of~$X$ and~$Y$, we have a pair of sets such that every vertex is full or every vertex is empty with respect to this pair.
We note that if a vertex is full (or empty) with respect to~$X$ and~$Y$, then is also full (or empty) with respect to any pair of subsets of~$X$ and~$Y$ respectively, so if we establish or assume fullness (or emptiness) before discarding a vertex, then it still holds afterwards.

We consider a number of cases.

\thmcase{\label{case:2p1p3-trivial}Neither~$X$ nor~$Y$ contains two full vertices, or neither~$X$ nor~$Y$ contains two empty vertices.}
By deleting at most one vertex from each of~$X$ and~$Y$, we can obtain a pair of sets where either every vertex is full or every vertex is empty.
This completes the proof of Case~\ref{case:2p1p3-trivial}.

\thmcase{\label{case:2p1p3-small}$|X| \leq 2$ or $|Y| \leq 2$.}
By symmetry we may assume that $|X| \leq 2$.
If~$X$ is empty or contains exactly one vertex, the lemma is immediate, so we may assume that~$X$ contains exactly two vertices, say~$x$ and~$x'$.
Consider the pair of sets~$\{x\}$ and~$Y$.
Every vertex in~$Y$ is both full and empty with respect to~$\{x\}$ and~$Y$, and, by Claim~\ref{clm:2p1p3-full-or-empty}, $x$ is either full or empty with respect to~$\{x\}$ and~$Y$.
This completes the proof of Case~\ref{case:2p1p3-small}.

\thmcase{\label{case:2p1p3-big}There are vertices $x_1, x_2 \in X$ and $y_1 \in Y$ such that~$x_1$ and~$x_2$ are complete to~$Y \setminus \{y_1\}$.}
In this case, every vertex in $Y \setminus \{y_1\}$ is adjacent to both~$x_1$ and~$x_2$, so it cannot be empty with respect to~$X$ and~$Y$.
By Claim~\ref{clm:2p1p3-full-or-empty}, it follows that every vertex in $Y \setminus \{y_1\}$ is full.
We may assume that $|Y| \geq 3$ (otherwise we apply Case~\ref{case:2p1p3-small}).
Let~$y_2$ and~$y_3$ be vertices in $Y \setminus \{y_1\}$.
As~$y_2$ and~$y_3$ are both full with respect to~$X$ and $Y \setminus \{y_1\}$, all but at most two vertices of~$X$ are adjacent to both~$y_2$ and~$y_3$.
Note that if a vertex~$x$ is adjacent to both~$y_2$ and~$y_3$ then it must be full with respect to~$X$ and $Y \setminus \{y_1\}$.
If at most one vertex of~$X$ is empty with respect to~$X$ and $Y \setminus \{y_1\}$ then by discarding this vertex (if it exists) from~$X$ and discarding~$y_1$ from~$Y$, we are done.

So we may assume that~$X$ contains exactly two vertices~$x_3$ and~$x_4$ that are not full with respect to~$X$ and $Y \setminus \{y_1\}$ and thus are empty.
Suppose that $|Y| \geq 4$.
Then there are three full vertices in $Y \setminus \{y_1\}$ that must each be adjacent to at least one of~$x_3$ and~$x_4$.
Thus at least one of~$x_3$ and~$x_4$ has at least two neighbours in $Y \setminus \{y_1\}$ contradicting the fact that they are both empty with respect to~$X$ and $Y \setminus \{y_1\}$.

Thus we may now assume that $|Y|=3$, so $Y \setminus \{y_1\} = \{y_2, y_3\}$.
By assumption, $x_3$ and~$x_4$ are not full with respect to~$X$ and~$\{y_2,y_3\}$, so they must have two non-neighbours in $\{y_2, y_3\}$ i.e. they must be anti-complete to $\{y_2,y_3\}$.
Thus~$y_2$ has two non-neighbours in~$X$, so it is empty with respect to~$X$ and~$Y$.
Since $|X| \ge 4$, this means that~$y_2$ is not full with respect to~$X$ and~$Y$, a contradiction.
This completes the proof of Case~\ref{case:2p1p3-big}.

\medskip
\noindent
We note that if, in Case~\ref{case:2p1p3-big}, we swap~$X$ and~$Y$, or write anti-complete instead of complete, we obtain further cases with essentially the same proof.
We now assume that neither these cases, nor Cases~\ref{case:2p1p3-trivial} and~\ref{case:2p1p3-small}, hold.

\clm{\label{clm:2p1p3-distinct}If there are two full vertices $x_1, x_2 \in X$, then they have distinct non-neighbours in~$Y$.
If there are two empty vertices $x_1, x_2 \in X$, then they have distinct neighbours in~$Y$.}
We prove the first statement (the second follows by symmetry).
If~$x_1$ and~$x_2$ are both complete to~$Y$, then Case~\ref{case:2p1p3-big} would apply with any vertex in~$Y$ chosen as~$y_1$.
Suppose instead that~$y_1$ is the unique non-neighbour of~$x_1$.
Then~$x_2$ must have a non-neighbour in~$Y$ that is different from~$y_1$, otherwise Case~\ref{case:2p1p3-big} would apply.
This completes the proof of the claim.

\clm{\label{clm:2p1p3-two-and-two}There are at least two empty vertices in~$X$ and at least two full vertices in~$Y$ or vice versa.}
As Case~\ref{case:2p1p3-trivial} does not apply, we know that one of~$X$ and~$Y$ contains two empty vertices, and one of~$X$ and~$Y$ contains two full vertices.
We are done unless these two properties belong to the same set.
So let us suppose that, without loss of generality, it is~$X$ that contains two empty vertices and two full vertices, which we may assume are distinct (if a vertex in~$X$ is both full and empty, then $|Y| \leq 2$ and Case~\ref{case:2p1p3-small} applies).
By Claim~\ref{clm:2p1p3-distinct}, the two empty vertices of~$X$ have distinct neighbours~$y_1$ and~$y_2$ in~$Y$.
If~$y_1$ and~$y_2$ are both full, we are done.
If, say, $y_1$ is empty, then, as it is adjacent to one of the empty vertices in~$X$, it cannot be adjacent to either of the full vertices in~$X$, contradicting Claim~\ref{clm:2p1p3-distinct}.
This completes the proof of Claim~\ref{clm:2p1p3-two-and-two}.

\medskip
\noindent
We immediately use Claim~\ref{clm:2p1p3-two-and-two}.
Let us assume, without loss of generality, that $x_1, x_2 \in X$ are empty and $y_1, y_2 \in Y$ are full with respect to~$X$ and~$Y$.
Moreover, by Claim~\ref{clm:2p1p3-distinct}, we may assume that~$y_1$ is the unique neighbour of~$x_1$ and~$y_2$ is the unique neighbour of~$x_2$ (so~$x_1$ is the unique non-neighbour of~$y_2$ and~$x_2$ is the unique non-neighbour of~$y_1$).
Thus every vertex of $X \setminus \{x_1, x_2\}$ is complete to $\{y_1, y_2\}$, and therefore, by Claim~\ref{clm:2p1p3-full-or-empty}, full with respect to~$X$ and~$Y$.
Similarly, every vertex of $Y \setminus \{y_1, y_2\}$ is anti-complete to $\{x_1, x_2\}$, and therefore empty with respect to~$X$ and~$Y$.

If $|X|=3$, then every vertex in $\{x_1, x_2\}$ and~$Y$ is empty with respect to $\{x_1, x_2\}$ and~$Y$.
Otherwise we can find distinct vertices $x_3, x_4$ in $X \setminus \{x_1, x_2\}$ which we know are both full and both complete to $\{y_1, y_2\}$.
Hence, by Claim~\ref{clm:2p1p3-distinct}, there are distinct vertices $y_3, y_4$ in $Y \setminus \{y_1, y_2\}$ such that~$y_3$ is the unique non-neighbour of~$x_3$ and~$y_4$ is the unique non-neighbour of~$x_4$.
If~$X$ and~$Y$ are independent sets then $G[x_1,y_4,y_2,x_4,y_3]$ is a $2P_1+\nobreak P_3$.
If~$X$ is an independent set and~$Y$ is a clique then $G[y_1,y_2,y_3,x_3,x_4]$ is a $\overline{2P_1+P_3}$.
If~$X$ is a clique and~$Y$ is an independent set then $G[x_3,x_4,x_2,y_1,y_2]$ is a $\overline{2P_1+P_3}$.
Finally if~$X$ and~$Y$ are cliques then $G[x_3,y_1,y_2,x_1,y_4]$ is a $\overline{2P_1+\nobreak P_3}$.
This contradiction completes the proof.\qedllncs
\end{proof}

\begin{lemma}\label{lem:comp-anti}
Let~$G$ be a $(2P_1+\nobreak P_3,\allowbreak\overline{2P_1+P_3})$-free graph whose vertex set can be partitioned into a clique~$X$ and an independent set~$Y$.
Then by deleting at most three vertices from each of~$X$ and~$Y$, it is possible to obtain subsets that are either complete or anti-complete to each other.
\end{lemma}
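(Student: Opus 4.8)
The plan is to reduce, via Lemma~\ref{lem:matching-comatching}, to the case of a \emph{matching} between the two parts, and then destroy that matching with a handful of further deletions. First I would apply Lemma~\ref{lem:matching-comatching} to~$X$ and~$Y$: deleting at most one vertex from each, I obtain subsets $X'\subseteq X$ and $Y'\subseteq Y$ (still a clique and an independent set, respectively) such that the edges between~$X'$ and~$Y'$ form either a matching or a co-matching. Since the class of $(2P_1+P_3,\overline{2P_1+P_3})$-free graphs is closed under complementation, and since passing to~$\overline{G}$ interchanges cliques with independent sets, matchings with co-matchings, and ``complete'' with ``anti-complete'', it suffices to treat the matching case. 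Concretely, I would isolate the following claim: \emph{if the edges between a clique~$A$ and an independent set~$B$ of a $(2P_1+P_3,\overline{2P_1+P_3})$-free graph form a matching, then deleting at most two vertices from each of~$A$ and~$B$ yields subsets that are anti-complete to each other.} Applying this to $(A,B)=(X',Y')$ in~$G$ settles the matching case (giving anti-complete), and applying it to $(A,B)=(Y',X')$ in~$\overline{G}$ settles the co-matching case (giving, back in~$G$, complete). In both cases the total number of deletions from each original part is at most $1+2=3$, as required.

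To prove the claim, fix such a matching, say with edges $a_1b_1,\dots,a_kb_k$. Here the phrase ``the edges form a matching'' means that \emph{every} cross-edge is one of these, so each vertex of $A\setminus\{a_1,\dots,a_k\}$ is anti-complete to~$B$, and symmetrically on the~$B$-side. The first step is to bound~$k$, and I claim $k\le 3$. Indeed, if $k\ge 4$, then the only neighbours of~$b_3$ and~$b_4$ are~$a_3$ and~$a_4$, so $b_3$ and $b_4$ are non-adjacent to each other and to each of $a_1,a_2,b_1$; meanwhile $b_1a_1,a_1a_2\in E(G)$ and $b_1a_2\notin E(G)$, where $a_1a_2$ is an edge because~$A$ is a clique. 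Hence $\{b_3,b_4,b_1,a_1,a_2\}$ induces a $2P_1+P_3$, a contradiction. (This is the one place where both hypotheses, $A$ a clique and $B$ independent, are used.)

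Once $k\le 3$, the deletions are immediate and uniform in~$k$: delete $a_1,\dots,a_{k-1}$ from~$A$ and delete $b_k$ from~$B$. After these deletions the remaining $A$-side is $\{a_k\}$ together with the vertices of~$A$ that were anti-complete to~$B$; the latter have no neighbours left in~$B$, and the unique neighbour~$b_k$ of~$a_k$ has been removed, so the whole remaining $A$-side is anti-complete to the remaining $B$-side $B\setminus\{b_k\}$. This spends $k-1\le 2$ deletions on~$A$ and exactly one on~$B$, proving the claim and hence the lemma.

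The hard part is not any single calculation but getting the accounting to land on exactly three: the naive route of deleting all $k$ matched vertices from one side would cost up to $1+3=4$ deletions there, so the key trick is to split the work, keeping one matched endpoint~$a_k$ alive and instead deleting its partner~$b_k$ on the other side. I would also take care over the complementation reduction, checking that deletions in~$\overline{G}$ correspond to the same vertices in~$G$ and that the clique/independent and complete/anti-complete roles align, so that the single asymmetric claim above covers both the matching and the co-matching outputs of Lemma~\ref{lem:matching-comatching}. The bound $k\le 3$ and the observation that unmatched vertices are genuinely anti-complete to the opposite side are short once these conventions are fixed.
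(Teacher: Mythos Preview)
Your proof is correct and follows essentially the same approach as the paper: apply Lemma~\ref{lem:matching-comatching}, reduce the co-matching case to the matching case by complementation, bound the number of matching edges by~$3$ via a $2P_1+P_3$ obstruction, and then split the remaining deletions between the two sides. The paper's write-up is terser (it deletes $x_1,x_2,y_3$ directly and uses $G[y_1,y_2,y_3,x_3,x_4]$ as the forbidden subgraph), but the ideas and the accounting are the same.
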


\begin{proof}
Let~$G$ be such a graph.
By Lemma~\ref{lem:matching-comatching}, by deleting at most one vertex from each of~$X$ and~$Y$, we may reduce to the case where the edges between~$X$ and~$Y$ form a matching or a co-matching.
(Note that after this we may delete at most two further vertices from each of~$X$ and~$Y$.)
Complementing the graph if necessary (in which case we also swap~$X$ and~$Y$), we may assume that the edges between~$X$ and~$Y$ form a matching.
Let $x_1y_1,\ldots,x_iy_i$ be the edges between~$X$ and~$Y$, with $x_1,\ldots,x_i \in X$ and $y_1,\ldots,y_i \in Y$.
If $i \geq 4$ then $G[y_1,y_2,y_3,x_3,x_4]$ is a $2P_1+\nobreak P_3$, a contradiction.
We may therefore assume that $i \leq 3$.
Deleting the vertices $x_1,x_2,y_3$ (if they are present) completes the proof.\qedllncs
\end{proof}

\begin{lemma}\label{l-victor}
The class of those $(2P_1+\nobreak P_3,\overline{2P_1+P_3})$-free graphs whose vertex set can be partitioned into at most three cliques and at most three independent sets has bounded clique-width.
\end{lemma}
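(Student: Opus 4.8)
The plan is to reduce the problem to a constant number of applications of bipartite complementation, exploiting Fact~\ref{fact:bip} together with Lemma~\ref{lem:comp-anti}. Suppose $G$ is a $(2P_1+\nobreak P_3,\overline{2P_1+P_3})$-free graph whose vertex set is partitioned into cliques $C_1,C_2,C_3$ and independent sets $I_1,I_2,I_3$. I would first invoke Lemma~\ref{lem:comp-anti} on each of the $3\times 3=9$ clique/independent-set pairs $(C_a,I_b)$: for each such pair, by deleting at most three vertices from $C_a$ and at most three from $I_b$ we obtain subsets that are either complete or anti-complete to each other. The key observation is that for a \emph{fixed} clique $C_a$, the three independent sets $I_1,I_2,I_3$ are handled by deleting at most three vertices from $C_a$ for each of them, so at most $9$ vertices total are deleted from $C_a$; symmetrically at most $9$ vertices are deleted from each $I_b$. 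Since we are deleting only a bounded number of vertices in total (at most $54$), by Fact~\ref{fact:del-vert} it suffices to bound the clique-width of the graph $G'$ obtained after these deletions.

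Next I would use Fact~\ref{fact:bip} to simplify the edges between the (reduced) cliques and independent sets. After the deletions above, every reduced pair $(C_a,I_b)$ is either complete or anti-complete to each other. For each of the at most $9$ pairs that are complete, I apply one bipartite complementation to turn ``complete'' into ``anti-complete''; this is a bounded number ($\leq 9$) of bipartite complementations, so by Fact~\ref{fact:bip} it preserves boundedness of clique-width. After these operations, every clique $C_a$ is anti-complete to every independent set $I_b$. Thus the resulting graph $G''$ decomposes, up to no edges between the two parts, into the subgraph induced by $C_1\cup C_2\cup C_3$ and the subgraph induced by $I_1\cup I_2\cup I_3$.

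It then remains to bound the clique-width of each of these two pieces. The piece on $I_1\cup I_2\cup I_3$ is a graph whose vertex set is partitioned into three independent sets, i.e. a tripartite graph, so it has maximum \emph{independence structure} but arbitrary edges between the $I_b$'s; however, applying Lemma~\ref{lem:matching-comatching} to each of the three pairs $(I_a,I_b)$ shows that after deleting at most one vertex from each set the edges between any two of them form a matching or a co-matching, and a graph in which each pair of three classes is joined by a matching or co-matching has bounded clique-width (after at most $3$ further bipartite complementations to turn co-matchings into matchings, we obtain a graph of maximum degree at most~$2$ within each such structure, to which Lemma~\ref{lem:atmost-2} applies, combined again with Facts~\ref{fact:del-vert} and~\ref{fact:bip}). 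The piece on $C_1\cup C_2\cup C_3$ is handled symmetrically by complementation, since the complement of a union of three cliques is tripartite and the class is closed under complementation by Fact~\ref{fact:comp}. Finally, I recombine the two bounded-clique-width pieces: since they are anti-complete in $G''$, the disjoint-union operation shows $\cw(G'')$ is bounded, and unwinding the bounded number of vertex deletions and bipartite complementations via Facts~\ref{fact:del-vert}--\ref{fact:bip} bounds $\cw(G)$.

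The main obstacle I anticipate is the bookkeeping of how edges \emph{between} the three independent sets (and, dually, the three cliques) behave: Lemma~\ref{lem:comp-anti} only controls clique/independent-set pairs, so the within-tripartite edges are not directly regularized and require the separate matching/co-matching argument via Lemma~\ref{lem:matching-comatching}. Care is needed to verify that the matchings and co-matchings across the three pairs of independent sets can be simultaneously reduced to a bounded-degree graph without the deletions or bipartite complementations on one pair destroying the structure established on another; keeping all deletion counts uniformly bounded (independent of $|V(G)|$) is the crux that makes Facts~\ref{fact:del-vert} and~\ref{fact:bip} applicable.
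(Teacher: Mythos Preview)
Your proposal is correct and follows essentially the same approach as the paper's proof. Both arguments use Lemma~\ref{lem:comp-anti} to regularise the nine clique/independent-set pairs (at the cost of boundedly many vertex deletions), then Lemma~\ref{lem:matching-comatching} to force matching/co-matching structure between same-type pairs, and finally bipartite complementations plus (subgraph) complementation of the cliques to arrive at a graph of maximum degree at most~$2$, handled by Lemma~\ref{lem:atmost-2}. The only cosmetic difference is that you first separate $G$ into the clique part $C_1\cup C_2\cup C_3$ and the independent part $I_1\cup I_2\cup I_3$ and treat them in turn (invoking global complementation for the clique part), whereas the paper performs all the deletions and complementations in a single pass and complements each $K^i$ individually; the underlying operations and counts are the same. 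Your final worry about interference between the pairs is harmless: deleting vertices for one pair can only improve (never destroy) a matching/co-matching already established on another pair, so the bounded deletion count survives.
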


\begin{proof}
Let~$G$ be a $(2P_1+\nobreak P_3,\overline{2P_1+P_3})$-free graph whose vertex set can be partitioned into at three (possibly empty) cliques~$K^1$, $K^2$, $K^3$ and three (possibly empty) independent sets~$I^1$, $I^2$, $I^3$.
By Lemma~\ref{lem:comp-anti} and Fact~\ref{fact:del-vert}, we may delete at most $2\times 3\times 3\times 3=54$ vertices, after which every~$K^i$ is either complete or anti-complete to every~$I^j$.
If two such sets are complete to each other, then by Fact~\ref{fact:bip}, we may apply a bipartite complementation between them.
Now, by Lemma~\ref{lem:matching-comatching} and Fact~\ref{fact:del-vert}, we may delete at most $2 \times 2 \times \binom{3}{2} = 12$ vertices, after which the edges between any two cliques $K^i$, $K^j$ and the edges between any two independent sets $I^i$, $I^j$ either form a matching or a co-matching.
If the edges form a co-matching, then by Fact~\ref{fact:bip} we may apply a bipartite complementation between these sets.
Finally, by Fact~\ref{fact:comp}, we may complement every clique~$K^i$.
The resulting graph has maximum degree at most~$2$, and therefore has clique-width at most~$4$ by Lemma~\ref{lem:atmost-2}.
It follows that~$G$ has bounded clique-width.\qedllncs
\end{proof}

\begin{lemma}\label{lem:2P1+P_3-co-2P_1+P_3-free-c5-non-free}
The class of $(2P_1+\nobreak P_3,\allowbreak \overline{2P_1+P_3})$-free graphs containing an induced~$C_5$ has bounded clique-width.
\end{lemma}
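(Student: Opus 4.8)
The plan is to fix an induced cycle $C = v_1 v_2 v_3 v_4 v_5$ in $G$ (indices read cyclically) and to partition the remaining vertex set $W = V(G)\setminus V(C)$ into the at most $2^5 = 32$ classes $V_S$, where $V_S$ is the set of vertices of $W$ whose neighbourhood in $C$ is exactly $\{v_i : i\in S\}$ for $S\subseteq\{1,\ldots,5\}$. Two symmetries keep the casework finite. First, the dihedral automorphism group of $C$ (order~$10$) permutes the classes, so only one representative $S$ from each orbit need be treated. Second, since $\overline{C_5}\cong C_5$, complementing $G$ again produces a $(2P_1+\nobreak P_3,\overline{2P_1+P_3})$-free graph carrying a distinguished induced $C_5$, under which $V_S$ corresponds (after the induced relabelling of the cycle) to $V_{\overline{S}}$; as complementation preserves boundedness of clique-width by Fact~\ref{fact:comp}, we may pass to $\overline{G}$ whenever convenient. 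Together these give a symmetry group of order~$20$ acting on the $32$ types.

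First I would determine the internal structure of each class. The mechanism is that any three consecutive cycle vertices $v_i v_{i+1} v_{i+2}$ induce a $P_3$ whose endpoints are non-adjacent. Thus if $V_{\emptyset}$ contained two non-adjacent vertices $u,u'$, then $u,u',v_1,v_2,v_3$ would induce a $2P_1+\nobreak P_3$, so $V_{\emptyset}$ is a clique; dually (via the passage to $\overline{G}$) $V_{\{1,2,3,4,5\}}$ is an independent set, the corresponding forbidden configuration being $\{u,u',v_1,v_2,v_4\}\cong\overline{2P_1+P_3}$ since $v_1v_2,v_4$ induce $P_1+\nobreak P_2$. Short arguments of the same flavour, each using a suitably placed induced $P_3$ (or its complement) avoiding the relevant neighbourhood, classify every type: one shows that each nonempty class is a clique or an independent set, after deleting a bounded number of exceptional vertices (free by Fact~\ref{fact:del-vert}), and that only a bounded number of classes are nonempty. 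After this step $W$ is partitioned into a bounded collection of cliques and independent sets.

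Next I would analyse the edges between pairs of classes and drive the configuration into the scope of the earlier lemmas. For a pair of classes that is complete or anti-complete I would apply a bipartite complementation (Fact~\ref{fact:bip}); for a pair that is neither, Lemma~\ref{lem:matching-comatching} reduces it, after deleting at most one vertex from each side, to a matching or a co-matching, the latter being cleared by a further bipartite complementation. Vertices distinguishing a class are shown to be few and are deleted. The target is to reach, after deleting the five cycle vertices and boundedly many further vertices (Fact~\ref{fact:del-vert}), an induced subgraph of $G$ that is still $(2P_1+\nobreak P_3,\overline{2P_1+P_3})$-free and is partitioned into at most three cliques and at most three independent sets, so that Lemma~\ref{l-victor} finishes the argument; in the residual cases where the reductions leave a graph of maximum degree at most~$2$ one appeals instead to Lemma~\ref{lem:atmost-2}. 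Since only boundedly many operations and deletions are used, Facts~\ref{fact:del-vert}--\ref{fact:bip} then yield that $G$ itself has bounded clique-width.

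The hard part will be the inter-class adjacency analysis together with the bound on the number of surviving classes. Establishing, for every admissible pair of nonempty classes, that the bipartite graph between them is complete, anti-complete, a matching or a co-matching up to a constant number of deletions — and in particular bounding the number of distinguishing vertices — is where essentially all of the casework lives, and it is precisely here that Lemma~\ref{lem:matching-comatching} is invoked repeatedly. The order-$20$ symmetry is what makes this finite, but one must still verify that so few classes survive that the end state actually meets the three-clique, three-independent-set hypothesis of Lemma~\ref{l-victor}, and that the accumulated number of deleted vertices, subgraph complementations and bipartite complementations remains bounded by an absolute constant, since only then do the preservation facts deliver boundedness of clique-width for the whole class.
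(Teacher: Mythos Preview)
Your overall strategy matches the paper's: partition $V(G)\setminus V(C)$ into the classes $V_S$, exploit the order-$20$ symmetry coming from the dihedral action on $C$ together with self-complementarity of $C_5$, analyse each class and each pair of classes, and repeatedly invoke Lemmas~\ref{lem:matching-comatching}, \ref{lem:comp-anti}, \ref{l-victor} together with Facts~\ref{fact:del-vert}--\ref{fact:bip}. So at the level of architecture you are aligned with the paper.

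There is, however, a genuine gap in your internal-structure step. You assert that ``each nonempty class is a clique or an independent set, after deleting a bounded number of exceptional vertices''. This is false for the classes $V_{i,i+2}$ (and dually $V_{i,i+1,i+2}$). From the forbidden subgraphs one only obtains that $G[V_{i,i+2}]$ is $P_3$-free, i.e.\ an arbitrary disjoint union of cliques, and no bounded vertex deletion will force such a set to become a single clique or a single independent set. The paper spends the bulk of the proof (its Claims~12--26) precisely on these two orbits: it first establishes how $V_{i,i+2}$ interacts with every other large class, then shows that if $V_{i,i+2}$ is neither a clique nor independent it can be \emph{disconnected} from the rest of $G$ via bipartite complementations (not by bounded deletion), and only after that reduction may one assume each $V_{i,i+2}$ is a clique or independent. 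A further round is then needed to eliminate these classes entirely.

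Relatedly, your stated endgame---reducing globally to at most three cliques and three independent sets and applying Lemma~\ref{l-victor} once---does not go through. Even after all classes have been forced to be cliques or independent, too many classes can coexist for Lemma~\ref{l-victor} to apply directly; the paper instead separates the clique classes from the independent classes (using Lemma~\ref{lem:comp-anti} and bipartite complementation), and then performs a further case analysis within each side, applying Lemma~\ref{l-victor} or Lemma~\ref{lem:atmost-2} only in the terminal sub-cases. Your outline would need to incorporate both of these mechanisms---separation of the awkward $P_3$-free classes, and the clique/independent split of the remaining classes---rather than aiming for a single global reduction.
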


\begin{proof}
\setcounter{ctrclaim}{0}
Suppose~$G$ is a $(2P_1+\nobreak P_3,\allowbreak \overline{2P_1+P_3})$-free graph containing an induced cycle~$C$ on five vertices, say $v_1,\ldots,v_5$ in that order.
For $S \subseteq \{1,\ldots,5\}$, let~$V_S$ be the set of vertices $x \in V(G) \setminus V(C)$ such that $N(x)\cap V(C)=\{v_i \;|\; i \in S\}$.
We say that a set~$V_S$ is {\em large} if it contains at least five vertices, otherwise it is {\em small}.

To ease notation, in the following claims, subscripts on vertex sets should be interpreted modulo~$5$ 
and whenever possible we will write~$V_i$ instead of~$V_{\{i\}}$ and~$V_{i,j}$ instead of~$V_{\{i,j\}}$ and so on.

\clm{\label{clm2:V_S-large-or-empty}We may assume that for $S \subseteq \{1,2,3,4,5\}$, the set~$V_S$ is either large or empty.}
If a set~$V_S$ is small, but not empty, then by Fact~\ref{fact:del-vert}, we may delete all vertices of this set.
If later in our proof we delete vertices in some set~$V_S$ and in doing so make a large set~$V_S$ become small, we may immediately delete the remaining vertices in~$V_S$.
The above arguments involve deleting a total of at most $2^5 \times 4$ vertices.
By Fact~\ref{fact:del-vert}, the claim follows.

\clm{\label{clm2:V0V1V12-clique}For $i \in \{1,2,3,4,5\}$, $V_\emptyset \cup V_i \cup V_{i+1} \cup V_{i,i+1}$ is a clique.}
Indeed, if $x,y \in V_\emptyset \cup V_1 \cup V_2 \cup V_{1,2}$ are non-adjacent then $G[x,y,v_3,v_4,v_5]$ is a $2P_1+\nobreak P_3$, a contradiction.
The claim follows by symmetry.

\clm{\label{clm2:V13-P3-free}For $i \in \{1,2,3,4,5\}$, $G[V_{i,i+2}]$ is $P_3$-free.}
Indeed, if $G[V_{1,3}]$ contains an induced~$P_3$, say on vertices $x,y,z$, then $G[v_2,v_4,x,y,z]$ is a $2P_1+\nobreak P_3$, a contradiction.
The claim follows by symmetry.

\medskip
\noindent
Note that since~$G$ is a $(2P_1+\nobreak P_3,\allowbreak \overline{2P_1+P_3})$-free graph containing a~$C_5$, it follows that~$\overline{G}$ is also a $(2P_1+\nobreak P_3,\allowbreak \overline{2P_1+P_3})$-free graph containing a~$C_5$, namely on the vertices $v_1,v_3,v_5,v_2,v_4$, in that order.
Let $w_1=v_1$, $w_2=v_3$, $w_3=v_5$, $w_4=v_2$ and $w_5=v_4$.
For $S \subseteq \{1,2,3,4,5\}$, we say that a vertex~$x$ not in~$C$ belongs to~$W_S$ if $N(x) \cap V(C) = \{w_i \; | \; i \in S\}$ {\em in the graph~$\overline{G}$}.
We define the function $\sigma: \{1,2,3,4,5\} \rightarrow \{1,2,3,4,5\}$ as follows: $\sigma(1)=1$, $\sigma(3)=2$, $\sigma(5)=3$, $\sigma(2)=4$ and $\sigma(4)=5$.
Now for $S,T \subseteq \{1,2,3,4,5\}$, $x \in V_S$ if and only if $x \in W_T$ where $T=\{1,2,3,4,5\} \setminus \{\sigma(i) \; | \; i \in S\}$.
Therefore, we may assume that any claims proved for a set~$V_S$ in~$G$ also hold for the set~$W_S$ in~$\overline{G}$.

For convenience we provide Table~\ref{tbl:WT-VS-corr}, which lists the correspondence between the sets~$W_T$ and the sets~$V_S$.
\begin{table}
\begin{center}
\begin{tabular}{c|c|c|c|c}
$V_1 = W_{2,3,4,5}$ &
$V_2 = W_{1,2,3,5}$ &
$V_3 = W_{1,3,4,5}$ &
$V_4 = W_{1,2,3,4}$ &
$V_5 = W_{1,2,4,5}$\\

$V_{1,2} = W_{2,3,5}$ &
$V_{2,3} = W_{1,3,5}$ &
$V_{3,4} = W_{1,3,4}$ &
$V_{4,5} = W_{1,2,4}$ &
$V_{1,5} = W_{2,4,5}$\\

$V_{1,3} = W_{3,4,5}$ &
$V_{2,4} = W_{1,2,3}$ &
$V_{3,5} = W_{1,4,5}$ &
$V_{1,4} = W_{2,3,4}$ &
$V_{2,5} = W_{1,2,5}$\\

$V_{1,2,3} = W_{3,5}$ &
$V_{2,3,4} = W_{1,3}$ &
$V_{3,4,5} = W_{1,4}$ &
$V_{1,4,5} = W_{2,4}$ &
$V_{1,2,5} = W_{2,5}$\\

$V_{1,2,4} = W_{2,3}$ &
$V_{2,3,5} = W_{1,5}$ &
$V_{1,3,4} = W_{3,4}$ &
$V_{2,4,5} = W_{1,2}$ &
$V_{1,3,5} = W_{4,5}$\\

$V_{1,2,3,4} = W_{3}$ &
$V_{1,2,3,5} = W_{5}$ &
$V_{1,2,4,5} = W_{2}$ &
$V_{1,3,4,5} = W_{4}$ &
$V_{2,3,4,5} = W_{1}$\\

$V_\emptyset = W_{1,2,3,4,5}$ & $V_{1,2,3,4,5} = W_\emptyset$
\end{tabular}
\end{center}
\caption{\label{tbl:WT-VS-corr}The correspondence between the sets~$W_T$ and the sets~$V_S$.}
\end{table}

We therefore get the following two corollaries of Claims~\ref{clm2:V0V1V12-clique} and~\ref{clm2:V13-P3-free}, respectively.
We include the argument for the first corollary to demonstrate how this ``casting to the complement'' argument works.

\clm{\label{clm2:V124V1234V12345-indep}For $i \in \{1,2,3,4,5\}$, $V_{i,i+1,i+3} \cup V_{i,i+1,i+2,i+3} \cup V_{i,i+1,i+3,i+4} \cup V_{1,2,3,4,5}$ is an independent set.}
Indeed, for $i=1$, $V_{i,i+1,i+3} \cup V_{i,i+1,i+2,i+3} \cup V_{i,i+1,i+3,i+4} \cup V_{1,2,3,4,5}$ is $V_{1,2,4} \cup V_{1,2,3,4} \cup V_{1,2,4,5} \cup V_{1,2,3,4,5}$, which is equal to $W_{2,3} \cup W_3 \cup W_2 \cup W_\emptyset$.
By Claim~\ref{clm2:V0V1V12-clique}, $W_{2,3} \cup W_3 \cup W_2 \cup W_\emptyset$ is an independent set.
The claim follows by complementing and symmetry.

\medskip
\shortclm{\label{clm2:V123-P1+P2-free}For $i \in \{1,2,3,4,5\}$, $G[V_{i,i+1,i+2}]$ is $(P_1+\nobreak P_2)$-free.}

\clm{\label{clm2:clique-indep-trivial}We may assume that for distinct $S,T \subseteq \{1,2,3,4,5\}$ if~$V_S$ is an independent set and~$V_T$ is a clique then~$V_S$ is either complete or anti-complete to~$V_T$.}
Let $S,T \subseteq \{1,\ldots,5\}$ be distinct.
If~$V_S$ is an independent set and~$V_T$ is a clique, then by Lemma~\ref{lem:comp-anti}, we may delete at most three vertices from each of these sets, such that in the resulting graph, $V_S$ will be complete or anti-complete to~$V_T$.
Doing this for every pair of independent set~$V_S$ and a clique~$V_T$ we delete at most $\binom{2^5}{2}\times 2 \times 3$ vertices from~$G$.
The claim follows by Fact~\ref{fact:del-vert}.

\clm{\label{clm2:indep-comatching}We may assume that for distinct $S,T \subseteq \{1,2,3,4,5\}$, if~$V_S$ and~$V_T$ are both independent sets then the edges between~$V_S$ and~$V_T$ form a co-matching.}
Let $S,T \subseteq \{1,\ldots,5\}$ be distinct.
We may assume that~$V_S$ and~$V_T$ are not empty, in which case they must both be large, i.e. $|V_S|,|V_T| \geq 5$.
If~$V_S$ and~$V_T$ are both independent sets, then by Lemma~\ref{lem:matching-comatching}, we may delete at most one vertex from each of these sets, such that in the resulting graph, the edges between~$V_S$ and~$V_T$ form a matching or a co-matching.
Note that after this modification we only have the weaker bound $|V_S|,|V_T| \geq 4$ in the resulting graph.
Suppose, for contradiction, that the edges between~$V_S$ and~$V_T$ form a matching.
Without loss of generality assume there is an $i \in T \setminus S$.
Since $|V_S| \geq 4$, there must be vertices $x,x'\in V_S$.
Since each vertex in~$V_S$ has at most one neighbour in~$V_T$ and $|V_T| \geq 4$, there must be vertices $y,y' \in V_T$ that are non-adjacent to both~$x$ and~$x'$.
Then $G[x,x',y,v_i,y']$ is a $2P_1+\nobreak P_3$, a contradiction.
Therefore the edges between~$V_S$ and~$V_T$ must indeed form a co-matching.
The claim follows by Fact~\ref{fact:del-vert}.

\medskip
\noindent
In many cases, we can prove a stronger claim:

\clm{\label{clm2:indeps-no-i-comp}For distinct $S,T \subseteq \{1,2,3,4,5\}$, if~$V_S$ and~$V_T$ are both independent and there is an $i \in \{1,2,3,4,5\}$ with $i \notin V_S$ and $i \notin V_T$ then~$V_S$ is complete to~$V_T$.}
Let $S,T \subseteq \{1,\ldots,5\}$ be distinct and suppose there is an $i \in \{1,2,3,4,5\}$ with $i \notin V_S$ and $i \notin V_T$.
We may assume~$V_S$ and~$V_T$ are not empty, so they must be large.
By Claim~\ref{clm2:indep-comatching}, we may assume that the edges between~$V_S$ and~$V_T$ form a co-matching.
Suppose, for contradiction that $x \in V_S$ is non-adjacent to $y \in V_T$.
Since~$V_S$ is large, there must be vertices $x',x'' \in V_S \setminus \{x\}$ and these vertices must be adjacent to~$y$.
Now $G[v_i,x,x',y,x'']$ is a $2P_1+\nobreak P_3$, a contradiction.
Therefore~$V_S$ must be complete to~$V_T$.
The claim follows.

\medskip
\noindent
Casting to the complement we get the following as a corollary to the above two claims.

\medskip
\shortclm{\label{clm2:clique-matching}We may assume that for distinct $S,T \subseteq \{1,2,3,4,5\}$, if~$V_S$ and~$V_T$ are both cliques then the edges between~$V_S$ and~$V_T$ form a matching.}\\
\shortclm{\label{clm2:cliques-i-anti}For distinct $S,T \subseteq \{1,2,3,4,5\}$, if~$V_S$ and~$V_T$ are both cliques and there is an $i \in \{1,2,3,4,5\}$ with $i \in V_S$ and $i \in V_T$ then~$V_S$ is anti-complete to~$V_T$.}

\clm{\label{clm2:V_S-V_T-contain12-empty}For $i \in \{1,\ldots,5\}$, if $\{i,i+1\} \subseteq S \cap T$ and $T \neq S$ then either~$V_S$ or~$V_T$ is empty.}
Suppose~$S$ and~$T$ are as described above, but~$V_S$ and~$V_T$ are both non-empty.
By Claim~\ref{clm2:V_S-large-or-empty}, $V_S$ and~$V_T$ must be large.
Without loss of generality, we may assume that $1,2 \in S \cap T$ and $3 \in T \setminus S$ or $S=\{1,2\}$, $T=\{1,2,4\}$.

First consider the case where $1,2 \in S \cap T$ and $3 \in T \setminus S$.
If $x \in V_S$ and $y \in V_T$ are adjacent then $G[y,v_2,v_1,v_3,x]$ is a $\overline{2P_1+P_3}$, a contradiction.
Therefore~$V_S$ is anti-complete to~$V_T$.
Suppose $x,x' \in V_S$ and $y,y' \in V_T$.
If~$x$ is adjacent to~$x'$ then $G[v_1,v_2,x,y,x']$ is a $\overline{2P_1+P_3}$, a contradiction.
If~$y$ is adjacent to~$y'$ then $G[v_1,v_2,y,x,y']$ is a $\overline{2P_1+P_3}$, a contradiction.
Therefore~$x$ must be non-adjacent to~$x'$ and~$y$ must be non-adjacent to~$y'$.
This means that $G[x,x',y,v_3,y']$ is a $2P_1+\nobreak P_3$, a contradiction.

Now consider the case where $S=\{1,2\}$, $T=\{1,2,4\}$.
Then $V_{1,2}$ is a clique and~$V_{1,2,4}$ is an independent set, by Claims~\ref{clm2:V0V1V12-clique} and~\ref{clm2:V124V1234V12345-indep}, respectively.
By Claim~\ref{clm2:clique-indep-trivial}, $V_{1,2}$ must be complete or anti-complete to~$V_{1,2,4}$.
Suppose $x,x' \in V_{1,2}$ and $y,y' \in V_{1,2,4}$.
If~$V_{1,2}$ is anti-complete to~$V_{1,2,4}$ then $G[v_1,v_2,x,y,x']$ is a $\overline{2P_1+P_3}$.
If~$V_{1,2}$ is complete to~$V_{1,2,4}$ then $G[v_3,v_5,y,x,y']$ is a $2P_1+\nobreak P_3$.
This is a contradiction.
The claim follows by symmetry.

\medskip
\noindent
Casting Claim~\ref{clm2:V_S-V_T-contain12-empty} to the complement, we obtain the following corollary.

\medskip
\shortclm{\label{clm2:V_S-V_T-not-contain13-empty}For $i \in \{1,\ldots,5\}$, if $\{i,i+2\} \cap (S \cup T)=\emptyset$ and $T \neq S$ then~$V_S$ or~$V_T$ is empty.}

\medskip
\noindent
We now give a brief outline of the remainder of the proof.
First, in Claims~\ref{clm2:V13-V2-V13-indep-clique}-\ref{clm2:V13-V1345-V13-indep-comp} we will analyse the edges between different sets~$V_S$ and sets of the form~$V_{i,i+2}$.
Next, in Claim~\ref{clm2:V13-indep-or-clique} we will consider the case where a set~$V_{i,i+2}$ is neither a clique nor an independent set.
We will then assume that this case does not hold, in which case every set of the form~$V_{i,i+2}$ is either a clique or an independent set.
Casting to the complement, we will get the same conclusion for all sets of the form~$V_{i,i+1,i+2}$.
Combined with Claims~\ref{clm2:V0V1V12-clique} and~\ref{clm2:V124V1234V12345-indep}, this means that every set~$V_S$ is either a clique or an independent set.
By Fact~\ref{fact:del-vert}, we may delete the vertices $v_1,\ldots,v_5$ of the original cycle.
By Claim~\ref{clm2:clique-indep-trivial}, if~$V_S$ is a clique and~$V_T$ is an independent set, then applying at most one bipartite complementation (which we may do by Fact~\ref{fact:bip}) we can remove all edges between~$V_S$ and~$V_T$.
It is therefore sufficient to consider the case where all sets~$V_S$ are cliques or all sets~$V_T$ are independent.
If there are at most three large cliques and at most three large independent sets, then by Lemma~\ref{l-victor} we can bound the clique-width of the graph induced on these sets.
In the proof of Claim~\ref{clm2:V13-indep} we consider the situation where a set of the form~$V_{i,i+2}$ is a large clique.
Having dealt with this case, we may assume that every set of the form~$V_{i,i+2}$ is an independent set (so, casting to the complement, every set of the form~$V_{i,i+1,i+2}$ is a clique) and we deal with this case in Claim~\ref{clm2:V13-empty}.
Finally, we deal with the case where all sets of the form~$V_{i,i+2}$ and~$V_{i,i+1,i+2}$ are empty.

\begin{sloppypar}
\clm{\label{clm2:V13-V2-V13-indep-clique}For $i \in \{1,2,3,4,5\}$, if~$V_{i,i+2}$ and~$V_{i+1}$ are large then~$V_{i,i+2}$ is either an independent set or a clique.}
Suppose, that both~$V_{1,3}$ and~$V_2$ are non-empty.
Then, by Claim~\ref{clm2:V_S-large-or-empty}, they must both be large.
Suppose that~$V_{1,3}$ is not a clique.
Then there are $y,y' \in V_{1,3}$ that are non-adjacent.
Suppose $x \in V_2$ is non-adjacent to~$y'$.
Then $G[v_4,y',v_2,x,y]$ or $G[x,v_4,y,v_1,y']$ is a $2P_1+\nobreak P_3$ if~$x$ is adjacent or non-adjacent to~$y$, respectively.
Therefore~$x$ must be complete to $\{y,y'\}$.
By Claim~\ref{clm2:V13-P3-free}, $G[V_{1,3}]$ is $P_3$-free, so it is a disjoint union of cliques.
Since~$y$ and~$y'$ were chosen arbitrarily and~$V_{1,3}$ is not a clique, it follows that~$x$ must be complete to~$V_{1,3}$.
Therefore~$V_2$ is complete to~$V_{1,3}$.
By Claim~\ref{clm2:V0V1V12-clique}, $V_2$ is a clique.
If $x,x' \in V_2$, $z,z' \in V_{1,3}$ with~$z$ adjacent to~$z'$ then $G[x,x',z,v_2,z']$ is a $\overline{2P_1+P_3}$, a contradiction.
Therefore if~$V_{1,3}$ is not a clique then it must be an independent set.
The claim follows by symmetry.
\end{sloppypar}

\clmnonewline{\label{clm2:V13-V24-both-cliques-or-one-indep-and-comp}For $i \in \{1,2,3,4,5\}$, if~$V_{i,i+2}$ and~$V_{i+1,i+3}$ are both large then either:
\begin{enumerate}[(i)]
\item both~$V_{i,i+2}$ and~$V_{i+1,i+3}$ are cliques or
\item at least one of them is an independent set and the two sets are complete to each other.
\end{enumerate}}
By Claim~\ref{clm2:V13-P3-free}, $G[V_{1,3}]$ and $G[V_{2,4}]$ are $P_3$-free, so every component in these graphs is a clique.
Suppose $G[V_{1,3}]$ is not a clique, so there are non-adjacent vertices $x,x' \in V_{1,3}$.
Suppose $y \in V_{2,4}$ is non-adjacent to~$x'$.
Then $G[x',v_5,v_2,y,x]$ or $G[y,v_5,x,v_3,x']$ is a $2P_1+\nobreak P_3$ if~$y$ is adjacent or non-adjacent to~$x$, respectively.
This contradiction implies that~$y$ is complete to $\{x,x'\}$.
Since we assumed that~$V_{1,3}$ was not a clique and~$x$ and~$x'$ were chosen to be arbitrary non-adjacent vertices in~$V_{1,3}$, it follows that~$y$ must be complete to~$V_{1,3}$.
Therefore if~$V_{1,3}$ is not a clique then~$V_{2,4}$ is complete to~$V_{1,3}$.
Similarly, if~$V_{2,4}$ is not a clique then~$V_{2,4}$ is complete to~$V_{1,3}$.

Now suppose that neither~$V_{1,3}$ nor~$V_{2,4}$ is an independent set.
If they are both cliques, then we are done, so assume for contradiction that at least one of them is not a clique.
Then~$V_{1,3}$ is complete to~$V_{2,4}$.
We can find $x,x' \in V_{1,3}$ that are adjacent and $y,y' \in V_{2,4}$ that are adjacent.
However, this means that $G[x,x',y,v_1,y']$ is a $\overline{2P_1+P_3}$, a contradiction.
The claim follows by symmetry.

\clm{\label{clm2:V13-V12V23-V13-clique-anti-or-indep-comp}For $i \in \{1,2,3,4,5\}$ and $S=\{i,i+1\}$ or $S=\{i+1,i+2\}$, if~$V_{i,i+2}$ and~$V_S$ are large then either~$V_{i,i+2}$ is a clique that is anti-complete to~$V_S$ or an independent set that is complete to~$V_S$.}
By Claim~\ref{clm2:V0V1V12-clique}, $V_{1,2}$ is a clique.
By Claim~\ref{clm2:V13-P3-free}, $G[V_{1,3}]$ is $P_3$-free, so it is a disjoint union of cliques.
If $y \in V_{1,3}$ is adjacent to $x \in V_{1,2}$, but non-adjacent to $x' \in V_{1,2}$ then $G[v_1,x,x',y,v_2]$ is a $\overline{2P_1+P_3}$, a contradiction.
Therefore every vertex of~$V_{1,3}$ is either complete or anti-complete to~$V_{1,2}$.

Suppose $y,y' \in V_{1,3}$ are adjacent and suppose $x,x' \in V_{1,2}$.
Suppose~$y$ is complete to~$V_{1,2}$.
Then $G[x,x',y,v_2,y']$ or $G[v_1,y,x,y',x']$ is a $\overline{2P_1+P_3}$, if~$y'$ is complete or anti-complete to~$V_{1,2}$, respectively.
This contradiction implies that if $y,y' \in V_{1,3}$ are adjacent then they are both anti-complete to~$V_{1,2}$.
It follows that every non-trivial component of~$G[V_{1,3}]$ is anti-complete to~$V_{1,2}$.
In particular, if~$V_{1,3}$ is a clique, then it is anti-complete to~$V_{1,2}$.

Now suppose that~$V_{1,3}$ is not a clique, so there are non-adjacent vertices $y,y' \in V_{1,3}$.
Choose a vertex $x \in V_{1,2}$.
Suppose~$y'$ is anti-complete to~$V_{1,2}$.
Then $G[v_4,y',y,x,v_2]$ or $G[x,v_5,y,v_3,y']$ is a $2P_1+\nobreak P_3$ if~$y$ is complete or anti-complete to~$V_{1,2}$, respectively.
Therefore both~$y$ and~$y'$ must be complete to~$V_{1,2}$.
Since every non-trivial component of~$G[V_{1,3}]$ is anti-complete to~$V_{1,2}$ and~$G[V_{1,3}]$ is a disjoint union of cliques, it follows that~$y$ and~$y'$ must belong to trivial components of~$G[V_{1,3}]$.
Since~$y$ and~$y'$ were arbitrary non-adjacent vertices in~$V_{1,3}$, it follows that every component of~$G[V_{1,3}]$ must be trivial.
Therefore if~$V_{1,3}$ is not a clique then it is an independent set and it is complete to~$V_{1,2}$.
The claim follows by symmetry.

\clm{\label{clm2:V13-V45-V13-clique-or-anti}We may assume that for $i \in \{1,2,3,4,5\}$ if~$V_{i,i+2}$ and~$V_{i+3,i+4}$ are large then either~$V_{i,i+2}$ is a clique or~$V_{i,i+2}$ is anti-complete to~$V_{i+3,i+4}$.}
If $y \in V_{1,3}$ has two neighbours $x,x' \in V_{4,5}$, then~$x$ is adjacent to~$x'$ by Claim~\ref{clm2:V0V1V12-clique}, so $G[x,x',v_4,y,v_5]$ is a $\overline{2P_1+P_3}$, a contradiction.
Therefore every vertex of~$V_{1,3}$ has at most one neighbour in~$V_{4,5}$.

By Claim~\ref{clm2:V13-P3-free}, $G[V_{1,3}]$ is $P_3$-free, so it is a disjoint union of cliques.
Suppose~$V_{1,3}$ is not a clique, so there are non-adjacent vertices $y,y' \in V_{1,3}$.
If $x \in V_{4,5}$ is adjacent to~$y$, but non-adjacent to~$y'$ then $G[v_2,y',y,x,v_4]$ is a $2P_1+\nobreak P_3$, a contradiction.
Therefore every vertex of~$V_{4,5}$ is complete or anti-complete to $\{y,y'\}$.
Since~$y$ and~$y'$ were arbitrary non-adjacent vertices in~$V_{1,3}$ and~$V_{1,3}$ is a disjoint union of (at least two) cliques, it follows that every vertex of~$V_{4,5}$ is complete or anti-complete to~$V_{1,3}$.
Since every vertex of~$V_{1,3}$ has at most one neighbour in~$V_{4,5}$, at most one vertex in~$V_{4,5}$ is complete to~$V_{1,3}$.
If such a vertex exists then by Fact~\ref{fact:del-vert}, we may delete it.
Therefore we may assume that either~$V_{1,3}$ is a clique or~$V_{1,3}$ is anti-complete to~$V_{4,5}$.
The claim follows by symmetry.

\clmnonewline{\label{clm2:V13-V123-V13-indep-or-V123-clique-etc}We may assume the following: for $i \in \{1,2,3,4,5\}$, if~$V_{i,i+2}$ and~$V_{i,i+1,i+2}$ are both large then all of the following statements hold:
\begin{enumerate}[(i)]
\renewcommand{\theenumi}{(\roman{enumi})}
\renewcommand{\labelenumi}{(\roman{enumi})}
\item \label{clm2:V13-V123-V13-indep-or-V123-clique-etc-i}Either~$V_{i,i+2}$ is an independent set or~$V_{i,i+1,i+2}$ is a clique.
\item \label{clm2:V13-V123-V13-indep-or-V123-clique-etc-ii}If~$V_{i,i+2}$ is not an independent set then it is anti-complete to~$V_{i,i+1,i+2}$.
\item \label{clm2:V13-V123-V13-indep-or-V123-clique-etc-iii}If~$V_{i,i+1,i+2}$ is not a clique then it is complete to~$V_{i,i+2}$.
\end{enumerate}}
Suppose~$V_{1,3}$ and~$V_{1,2,3}$ are large.
By Claim~\ref{clm2:V13-P3-free}, $G[V_{1,3}]$ is $P_3$-free, so it is a disjoint union of cliques.
By Claim~\ref{clm2:V123-P1+P2-free}, $G[V_{1,2,3}]$ is $(P_1+\nobreak P_2)$-free, so its complement is a disjoint union of cliques.
We consider three cases.

\thmcase{$V_{1,2,3}$ is independent.}
We will show that in this case~$V_{1,3}$ must be an independent set which is complete to~$V_{1,2,3}$.
If $x \in V_{1,3}$ is non-adjacent to $y,y' \in V_{1,2,3}$ then $G[x,v_4,y,v_2,y']$ is a $2P_1+\nobreak P_3$, a contradiction.
Therefore every vertex in~$V_{1,3}$ has at most one non-neighbour in~$V_{1,2,3}$.
Suppose $x,x'\in V_{1,3}$ are adjacent.
Since~$V_{1,2,3}$ is large, there must be a vertex $y \in V_{1,2,3}$ that is adjacent to both~$x$ and~$x'$.
Then $G[y,v_3,x,v_2,x']$ is a $\overline{2P_1+P_3}$.
Therefore~$V_{1,3}$ must be independent.
If $x,x' \in V_{1,3}$ are non-adjacent and $y \in V_{1,2,3}$ is adjacent to~$x$, but not to~$x'$ then $G[x',v_4,x,y,v_2]$ is a $2P_1+\nobreak P_3$, a contradiction.
Therefore every vertex of~$V_{1,2,3}$ is either complete or anti-complete to~$V_{1,3}$.
Suppose there is a vertex $y \in V_{1,2,3}$ that is anti-complete to~$V_{1,3}$.
Since every vertex of~$V_{1,3}$ has at most one non-neighbour in~$V_{1,2,3}$, there must be a vertex $y' \in V_{1,2,3}$ that is complete to~$V_{1,3}$.
Now $G[v_4,y',x,y,x']$ is a $2P_1+\nobreak P_3$, a contradiction.
Therefore~$V_{1,2,3}$ is complete to~$V_{1,3}$.
We conclude that if~$V_{1,2,3}$ is an independent set then~$V_{1,3}$ must also be an independent set and furthermore~$V_{1,3}$ must be complete to~$V_{1,2,3}$.
By symmetry, if~$V_{i,i+1,i+2}$ is an independent set then Statements~\ref{clm2:V13-V123-V13-indep-or-V123-clique-etc-i}--\ref{clm2:V13-V123-V13-indep-or-V123-clique-etc-iii} of the claim hold.

\thmcase{$V_{1,3}$ is a clique.}
Casting to the complement as before, the clique~$V_{1,3}$ in~$G$ becomes the independent set~$W_{3,4,5}$ in~$\overline{G}$ and the set~$V_{1,2,3}$ in~$G$ becomes the set~$W_{3,5}$ in~$\overline{G}$.
By the above argument, this means that in~$\overline{G}$, $W_{3,5}$ must be an independent set and it must be complete to~$W_{3,4,5}$.
Therefore in~$G$ the set~$V_{1,2,3}$ must be a clique and it must be anti-complete to~$V_{1,3}$.
By symmetry, if~$V_{i,i+2}$ is a clique then Statements~\ref{clm2:V13-V123-V13-indep-or-V123-clique-etc-i}--\ref{clm2:V13-V123-V13-indep-or-V123-clique-etc-iii} of the claim hold.

\thmcase{$V_{1,2,3}$ is not independent and~$V_{1,3}$ is not a clique.}
If $x,x' \in V_{1,3}$ are non-adjacent and $y \in V_{1,2,3}$ is adjacent to~$x$, but not~$x'$ then $G[v_4,x',x,y,v_2]$ is a $2P_1+\nobreak P_3$, a contradiction.
Since~$G[V_{1,3}]$ is a disjoint union of (at least two) cliques, it follows that every vertex of~$V_{1,2,3}$ is complete or anti-complete to~$V_{1,3}$.
If $y,y' \in V_{1,2,3}$ are adjacent and $x \in V_{1,3}$ is adjacent to~$y$, but not~$y'$ then $G[y,v_1,y',x,v_2]$ is a $\overline{2P_1+P_3}$, a contradiction.
Since~$G[V_{1,2,3}]$ is the complement of a disjoint union of (at least two) cliques, it follows that every vertex of~$V_{1,3}$ is complete or anti-complete to~$V_{1,2,3}$.
We conclude that~$V_{1,3}$ is complete or anti-complete to~$V_{1,2,3}$.

Suppose for contradiction that~$V_{1,3}$ is not an independent set and~$V_{1,3}$ is complete to~$V_{1,2,3}$.
Choose adjacent vertices $x,x' \in V_{1,3}$ and adjacent vertices $y,y' \in V_{1,2,3}$.
Then $G[y,y',x,v_2,x']$ is a $\overline{2P_1+P_3}$, a contradiction.
Therefore either~$V_{1,3}$ is independent or it is anti-complete to~$V_{1,2,3}$.
By symmetry Statement~\ref{clm2:V13-V123-V13-indep-or-V123-clique-etc-ii} of the claim holds.

Suppose for contradiction that~$V_{1,2,3}$ is not a clique and~$V_{1,2,3}$ is anti-complete to~$V_{1,3}$.
Choose non-adjacent vertices $y,y' \in V_{1,2,3}$ and non-adjacent vertices $x,x' \in V_{1,3}$.
Then $G[x,x',y,v_2,y']$ is a $2P_1+\nobreak P_3$, a contradiction.
Therefore either~$V_{1,2,3}$ is a clique or it is complete to~$V_{1,2,3}$.
By symmetry Statement~\ref{clm2:V13-V123-V13-indep-or-V123-clique-etc-iii} of the claim holds.

Note that if~$V_{1,3}$ is not independent then it is anti-complete to~$V_{1,2,3}$ and that if~$V_{1,2,3}$ is not a clique then it is complete to~$V_{1,3}$.
Since~$V_{1,3}$ and~$V_{1,2,3}$ are large, it follows that either~$V_{1,2,3}$ is an independent set or~$V_{1,3}$ is a clique.
By symmetry Statement~\ref{clm2:V13-V123-V13-indep-or-V123-clique-etc-i} of the claim holds.
This completes the proof of Claim~\ref{clm2:V13-V123-V13-indep-or-V123-clique-etc}.

\clmnonewline{\label{clm2:V13-V234V125-either-all-cliques-V13-anti-or-V13-indep-comp}We may assume that: for $i \in \{1,2,3,4,5\}$ and $S \in \{\{i+1,i+2,i+3\},\allowbreak\{i,i+\nobreak 1,i+\nobreak 4\}\}$, if~$V_{i,i+2}$ and~$V_S$ are large then one of the following cases holds:
\begin{enumerate}[(i)]
\renewcommand{\theenumi}{(\roman{enumi})}
\renewcommand{\labelenumi}{(\roman{enumi})}
\item \label{clm2:V13-V234V125-either-all-cliques-V13-anti-or-V13-indep-comp-i}$V_{i,i+2}$ and~$V_S$ are cliques and~$V_{i,i+2}$ is anti-complete to~$V_S$.
\item \label{clm2:V13-V234V125-either-all-cliques-V13-anti-or-V13-indep-comp-ii}$V_{i,i+2}$ is independent and complete to~$V_S$.
\end{enumerate}}
Suppose~$V_{1,3}$ and~$V_{2,3,4}$ are large.
By Claim~\ref{clm2:V13-P3-free}, $G[V_{1,3}]$ is $P_3$-free, so it is a disjoint union of cliques.
By Claim~\ref{clm2:V123-P1+P2-free}, $G[V_{1,2,3}]$ is $(P_1+\nobreak P_2)$-free, so its complement is a disjoint union of cliques.

First suppose that~$V_{1,3}$ is not a clique.
Let $x,x' \in V_{1,3}$ be non-adjacent and suppose $y \in V_{2,3,4}$ is non-adjacent to~$x'$.
Then $G[x',v_5,x,y,v_2]$ or $G[x,x',v_2,y,v_4]$ is a $2P_1+\nobreak P_3$ if~$x$ is adjacent or non-adjacent to~$y$, respectively.
Since~$G[V_{1,3}]$ is a disjoint union of (at least two) cliques, this contradiction implies that~$V_{2,3,4}$ is complete to~$V_{1,3}$.
If $x,x' \in V_{1,3}$ are adjacent and $y \in V_{2,3,4}$ then $G[y,v_3,x,v_2,x']$ is a $\overline{2P_1+P_3}$, a contradiction.
Therefore~$V_{1,3}$ must be independent, so Statement~\ref{clm2:V13-V234V125-either-all-cliques-V13-anti-or-V13-indep-comp-ii} of the claim holds.

Now suppose that~$V_{1,3}$ is a clique.
Again, if $y \in V_{2,3,4}$ is adjacent to $x,x' \in V_{1,3}$ then $G[y,v_3,x,v_2,x']$ is a $\overline{2P_1+P_3}$, a contradiction.
Therefore every vertex of~$V_{2,3,4}$ has at most one neighbour in~$V_{1,3}$.
Suppose $y,y' \in V_{2,3,4}$ are non-adjacent.
Since~$V_{1,3}$ is large, there must be a vertex $x \in V_{1,3}$ that is non-adjacent to both~$y$ and~$y'$.
Now $G[x,v_5,y,v_2,y']$ is a $2P_1+\nobreak P_3$, a contradiction.
Therefore~$V_{2,3,4}$ must be a clique.
Suppose $x \in V_{1,3}$ and $y,y' \in V_{2,3,4}$ with~$x$ adjacent to~$y$, but not to~$y'$.
Then $G[y,v_3,y',x,v_2]$ is a $\overline{2P_1+P_3}$, a contradiction.
Therefore every vertex of~$V_{1,3}$ is either complete or anti-complete to~$V_{2,3,4}$.
Since every vertex of~$V_{2,3,4}$ has at most one neighbour in~$V_{1,3}$, at most one vertex of~$V_{1,3}$ is complete to~$V_{2,3,4}$.
If such a vertex exists then by Fact~\ref{fact:del-vert}, we may delete it.
Therefore we may assume that~$V_{1,3}$ is anti-complete to~$V_{2,3,4}$, so Statement~\ref{clm2:V13-V234V125-either-all-cliques-V13-anti-or-V13-indep-comp-i} of the claim holds.
The claim follows by symmetry.

\clmnonewline{\label{clm2:V123-V14V35-either-all-indep-V123-comp-or-V123-clique-anti}We may assume that: for $i \in \{1,2,3,4,5\}$ and $S \in \{\{i+2,i+3,i+4\},\allowbreak \{i,i+\nobreak 3,i+\nobreak 4,\}\}$, if~$V_{i,i+2}$ and~$V_S$ are large then one of the following cases holds:
\begin{enumerate}[(i)]
\item $V_{i,i+2}$ and~$V_S$ are independent and~$V_{i,i+2}$ is complete to~$V_S$.
\item $V_S$ is a clique and anti-complete to~$V_{i,i+2}$.
\end{enumerate}}
By symmetry we need only prove the claim for the case where $i=1$ and $S=\{3,4,5\}$.
In this case the sets~$V_{i,i+2}$ and~$V_S$ are~$V_{1,3}$ and~$V_{3,4,5}$, respectively, which are equal to~$W_{3,4,5}$ and~$W_{1,4}$, respectively (see also Table~\ref{tbl:WT-VS-corr}).
The claim follows by casting to the complement and applying Claim~\ref{clm2:V13-V234V125-either-all-cliques-V13-anti-or-V13-indep-comp}.

\clm{\label{clm2:V13-V124V235-V13-indep-comp}For $i \in \{1,2,3,4,5\}$ and $S \in \{\{i,i+1,i+3\},\{i+1,i+2,i+4\}\}$, if~$V_{i,i+2}$ and~$V_S$ are large then~$V_{i,i+2}$ is independent and it is complete to~$V_S$.}
By Claim~\ref{clm2:V124V1234V12345-indep}, $V_{1,2,4}$ is independent.
Suppose $x \in V_{1,3}$ and $y,y' \in V_{1,2,4}$ with~$x$ non-adjacent to~$y'$.
Then $G[y',v_5,v_3,x,y]$ or $G[v_5,x,y,v_2,y']$ is a $2P_1+\nobreak P_3$ if~$x$ is adjacent or non-adjacent to~$y$, respectively.
Therefore~$V_{1,3}$ is complete to~$V_{1,2,4}$.
If $x,x' \in V_{1,3}$ are adjacent and $y \in V_{1,2,4}$ then $G[v_1,y,x,v_2,x']$ is a $\overline{2P_1+P_3}$, a contradiction.
Therefore~$V_{1,3}$ is independent.
The claim follows by symmetry.

\clm{\label{clm2:V13-V245-bdd-cw}For $i \in \{1,2,3,4,5\}$, if~$V_{i,i+2}$ and~$V_{i+1,i+3,i+4}$ are large then $G[V_{i,i+2} \cup V_{i+1,i+3,i+4}]$ has bounded clique-width.}
Suppose~$V_{1,3}$ and~$V_{2,4,5}$ are large.
By Claim~\ref{clm2:V13-P3-free}, $G[V_{1,3}]$ is $P_3$-free, so it is a disjoint union of cliques.
By Claim~\ref{clm2:V124V1234V12345-indep}, $V_{2,4,5}$ is independent.
If $x \in V_{1,3}$ is non-adjacent to $y,y' \in V_{2,4,5}$ then $G[y,y',v_1,x,v_3]$ is a $2P_1+\nobreak P_3$, a contradiction.
Therefore every vertex of~$V_{1,3}$ has at most one non-neighbour in~$V_{2,4,5}$.
If $x,x' \in V_{1,3}$ are non-adjacent and $y \in V_{2,4,5}$ is non-adjacent to~$x$ and~$x'$ then $G[x,x',v_2,y,v_4]$ is a $2P_1+\nobreak P_3$, a contradiction.
Therefore every vertex of~$V_{2,4,5}$ is complete to all but at most one component of~$V_{1,3}$.
Let~$G'$ be the graph obtained from $G[V_{1,3} \cup V_{2,4,5}]$ by applying a bipartite complementation between~$V_{1,3}$ and~$V_{2,4,5}$.
By Fact~\ref{fact:bip}, $G[V_{1,3} \cup V_{2,4,5}]$ has bounded clique-width if and only if every component of~$G'$ has bounded clique-width.
Every component~$C^{G'}$ of~$G'$ consists of either a single vertex (in which case it has clique-width~$1$) or a clique in~$V_{1,3}$ together with an independent set in~$V_{2,4,5}$, no two vertices of which have a common neighbour in the clique.
By Fact~\ref{fact:comp}, we may complement the clique on $V(C^{G'}) \cap V_{1,3}$.
The obtained graph will be a disjoint union of stars, which have clique-width at most~$2$.

\clm{\label{clm2:V13-V1234V1235-V13-indep-comp}For $i \in \{1,2,3,4,5\}$ and $S \in \{\{i,i+1,i+2,i+3\},\{i,i+1,i+2,i+4\}\}$, if~$V_{i,i+2}$ and~$V_S$ are large then~$V_{i,i+2}$ is independent and it is complete to~$V_S$.}
Suppose~$V_{1,3}$ and~$V_{1,2,3,4}$ are large.
By Claim~\ref{clm2:V13-P3-free}, $G[V_{1,3}]$ is $P_3$-free, so it is a disjoint union of cliques.
By Claim~\ref{clm2:V124V1234V12345-indep}, $V_{1,2,3,4}$ is independent.

Suppose $x \in V_{1,3}$ has two non-neighbours $y,y' \in V_{1,2,3,4}$.
Then $G[x,v_5,y,v_2,y']$ is a $2P_1+\nobreak P_3$, a contradiction.
It follows that every vertex of~$V_{1,3}$ has at most one non-neighbour in~$V_{1,2,3,4}$.

Suppose, for contradiction that~$V_{1,3}$ is not an independent set.
Let $x,x' \in V_{1,3}$ be adjacent vertices.
If $y \in V_{1,2,3,4}$ is adjacent to both~$x$ and~$x'$ then $G[y,v_3,x,v_2,x']$ is a $\overline{2P_1+P_3}$, a contradiction.
Therefore every vertex of~$V_{1,2,3,4}$ has at most one neighbour in~$\{x,x'\}$.
Since~$V_{1,2,3,4}$ is large, there must be two vertices $y',y'' \in V_{1,2,3,4}$ that are non-adjacent to the same vertex in~$\{x,x'\}$.
This is a contradiction since every vertex of~$V_{1,3}$ has at most one non-neighbour in~$V_{1,2,3,4}$.
It follows that~$V_{1,3}$ is an independent set.
Since $5 \notin \{1,3\} \cup \{1,2,3,4\}$, Claim~\ref{clm2:indeps-no-i-comp} implies that~$V_{1,3}$ is complete to~$V_{1,2,3,4}$.
The claim follows by symmetry.

\clmnonewline{\label{clm2:V13-V1245V2345V12345-either-V13-indep-or-V13-disjUn-indep-comp-clique-anti}We may assume that: for $i \in \{1,2,3,4,5\}$ and $S \in \{\{i,i+1,i+3,i+4\},\allowbreak \{i+1,i+2,i+3,i+4\},\allowbreak \{1,2,3,4,5\}\}$, if~$V_{i,i+2}$ and~$V_S$ are large then one of the following holds:
\begin{enumerate}[(i)]
\item $V_{i,i+2}$ is an independent set or
\item $V_{i,i+2}$ is the disjoint union of a (possibly empty) clique that is anti-complete to~$V_S$ and an (possibly empty) independent set that is complete to~$V_S$.
\end{enumerate}
}
Suppose~$V_{1,3}$ and~$V_S$ are large for $S \in \{\{1,2,4,5\},\allowbreak \{1,2,3,4,5\}\}$ (the $S=\{2,3,4,5\}$ case is symmetric).
By Claim~\ref{clm2:V13-P3-free}, $G[V_{1,3}]$ is $P_3$-free, so it is a disjoint union of cliques.
By Claim~\ref{clm2:V124V1234V12345-indep}, $V_S$ is independent.

If $x,x' \in V_{1,3}$ are non-adjacent and $y \in V_S$ is anti-complete to $\{x,x'\}$ then $G[x,x',v_2,y,v_4]$ is a $2P_1+\nobreak P_3$, a contradiction.
Since $G[V_{1,3}]$ is a disjoint union of cliques, it follows that every vertex of~$V_S$ is complete to all but at most one component of~$G[V_{1,3}]$.
If $x,x' \in V_{1,3}$ are adjacent and $y \in V_S$ is complete to $\{x,x'\}$ then $G[v_1,y,x,v_2,x']$ is a $\overline{2P_1+P_3}$, a contradiction.
Therefore no vertex of~$V_S$ has two neighbours in the same component of~$G[V_{1,3}]$.
It follows that~$G[V_{1,3}]$ contains at most one non-trivial component.
In other words, either~$V_{1,3}$ is an independent set or the disjoint union of a clique and an independent set.

Suppose that~$V_{1,3}$ is not an independent set.
Then~$G[V_{1,3}]$ contains a non-trivial component~$C'$.
We may assume~$C'$ contains at least three vertices, otherwise we may delete it by Fact~\ref{fact:del-vert}.
No vertex of~$V_S$ can have two neighbours in~$C'$ and every vertex of~$V_S$ is complete to all but at most one component of~$G[V_{1,3}]$.
Therefore every vertex of~$V_S$ is complete to the independent set $V_{1,3}\setminus V(C')$.
Suppose~$x \in V_S$ has a neighbour $y \in V(C')$.
Since~$V(C')$ contains at least three vertices, and every vertex of~$V_S$ has at most one neighbour in~$V(C')$, we can find vertices $y',y'' \in V(C')$ that are non-adjacent to~$x$.
Now $G[v_1,y,y',x,y'']$ is a $\overline{2P_1+P_3}$, a contradiction.
Therefore~$V_S$ is anti-complete to~$V(C')$.

We conclude that either~$V_{1,3}$ is an independent set or it is the disjoint union of an independent set that is complete to~$V_S$ and a clique that is anti-complete to~$V_S$.
The claim follows by symmetry.

\clm{\label{clm2:V13-V1345-V13-indep-comp}For $i \in \{1,2,3,4,5\}$, if~$V_{i,i+2}$ and~$V_{i,i+2,i+3,i+4}$ are large then~$V_{i,i+2}$ is an independent set that is complete to~$V_{i,i+2,i+3,i+4}$.}
Suppose~$V_{1,3}$ and~$V_{1,3,4,5}$ are large.
By Claim~\ref{clm2:V13-P3-free}, $G[V_{1,3}]$ is $P_3$-free, so it is a disjoint union of cliques.
By Claim~\ref{clm2:V124V1234V12345-indep}, $V_{1,3,4,5}$ is independent.

Suppose $x \in V_{1,3}$ is non-adjacent to $y,y' \in V_{1,3,4,5}$.
Then $G[v_2,x,y,v_4,y']$ is a $2P_1+\nobreak P_3$, a contradiction.
Therefore every vertex of~$V_{1,3}$ has at most one non-neighbour in~$V_{1,3,4,5}$.
Suppose $x,x' \in V_{1,3}$ are adjacent.
Since~$V_{1,3,4,5}$ is large, there must be a vertex $y \in V_{1,3,4,5}$ that is adjacent to both~$x$ and~$x'$.
Now $G[y,v_1,x,v_5,x']$ is an $\overline{2P_1+P_3}$, a contradiction.
Therefore~$V_{1,3}$ must be an independent set.
If $x,x' \in V_{1,3}$ and $y \in V_{1,3,4,5}$ is adjacent to~$x$, but not to~$x'$ then $G[x',v_2,x,y,v_4]$ is a $2P_1+\nobreak P_3$, a contradiction.
Therefore every vertex of~$V_{1,3,4,5}$ must be either complete or anti-complete to~$V_{1,3}$.
Since every vertex of~$V_{1,3}$ has at most one non-neighbour in~$V_{1,3,4,5}$, it follows that at most one vertex of~$V_{1,3,4,5}$ may be anti-complete to~$V_{1,3}$.
Suppose $x,x' \in V_{1,3}$.
If $y \in V_{1,3,4,5}$ is anti-complete to~$V_{1,3}$ and~$y' \in V_{1,3,4,5}$ is complete to~$V_{1,3}$ then $G[y,v_2,x,y',x']$ is a $2P_1+\nobreak P_3$.
We conclude that~$V_{1,3}$ is complete to~$V_{1,3,4,5}$.
The claim follows by symmetry.

\medskip
\noindent
The next two claims will allow us to assume that every set~$V_S$ is either a clique or an independent set.

\clm{\label{clm2:V13-indep-or-clique}For $i \in \{1,2,3,4,5\}$, if~$V_{i,i+2}$ is large then we may assume it is an independent set or a clique.}
Suppose~$V_{1,3}$ is large and that it is not a clique or an independent set.

By Claim~\ref{clm2:V_S-V_T-not-contain13-empty}, if~$V_S$ is large for some $S \subseteq \{1,2,3,4,5\}$ with $S \neq \{1,3\}$ then $S \cap \{2,4\} \neq \emptyset$ and $S \cap \{2,5\} \neq \emptyset$.
It follows that $V_\emptyset$, $V_4$, $V_5$, $V_1$, $V_{1,4}$, $V_{1,5}$, $V_3$, $V_{3,4}$, $V_{3,5}$, $V_{1,3,4}$ and~$V_{1,3,5}$ are empty.
$V_2$ is empty by Claim~\ref{clm2:V13-V2-V13-indep-clique}.
$V_{1,2}$ and~$V_{2,3}$ are empty by Claim~\ref{clm2:V13-V12V23-V13-clique-anti-or-indep-comp}.
$V_{2,3,4}$ and~$V_{1,2,5}$ are empty by Claim~\ref{clm2:V13-V234V125-either-all-cliques-V13-anti-or-V13-indep-comp}.
$V_{1,2,3,4}$ and~$V_{1,2,3,5}$ are empty by Claim~\ref{clm2:V13-V1234V1235-V13-indep-comp}.
$V_{1,3,4,5}$ is empty by Claim~\ref{clm2:V13-V1345-V13-indep-comp}.
$V_{1,2,4}$ and~$V_{2,3,5}$ are empty by Claim~\ref{clm2:V13-V124V235-V13-indep-comp}.
Therefore in addition to~$V_{1,3}$, only the following sets may be non-empty:
$V_{4,5}$, $V_{2,4}$, $V_{2,5}$, $V_{1,2,3}$, $V_{3,4,5}$, $V_{1,4,5}$, $V_{2,4,5}$, $V_{1,2,4,5}$, $V_{2,3,4,5}$ and $V_{1,2,3,4,5}$.

Suppose $V_{2,4,5}=W_{1,2}$ is large.
By Claim~\ref{clm2:V_S-V_T-not-contain13-empty} if a set~$V_S$ is large for some $S \subseteq \{1,2,3,4,5\}$ with $S \neq \{2,4,5\}$ then $S \cap \{1,3\} \neq \emptyset$.
It follows that~$V_{4,5}$, $V_{2,4}$ and~$V_{2,5}$ are empty.
By Claim~\ref{clm2:V_S-V_T-contain12-empty} if~$V_S$ is large for some $S \subseteq \{1,2,3,4,5\}$ with $S \neq \{2,4,5\}$ then $\{4,5\} \not \subseteq S$.
It follows that~$V_{3,4,5}$, $V_{1,4,5}$, $V_{1,2,4,5}$, $V_{2,3,4,5}$ and~$V_{1,2,3,4,5}$ are empty.
Therefore, apart from~$V_{1,3}$ and~$V_{2,4,5}$, only the set~$V_{1,2,3}$ can be large.
By Claim~\ref{clm2:V13-V123-V13-indep-or-V123-clique-etc}, if $V_{1,2,3}$ is large then it is a clique that is anti-complete to~$V_{1,3}$.
Casting to the complement (see also Table~\ref{tbl:WT-VS-corr}), since~$V_{1,2,3}$ is a clique in~$G$, it follows that $W_{3,5}=V_{1,2,3}$ is an independent set in~$\overline{G}$, so by Claim~\ref{clm2:V13-V45-V13-clique-or-anti}, $W_{3,5}$ is anti-complete to~$W_{1,2}=V_{2,4,5}$ in~$\overline{G}$.
Therefore in the graph~$G$, $V_{1,2,3}=W_{3,5}$ is a clique that is complete to~$V_{2,4,5}=W_{1,2}$ and anti-complete to~$V_{1,3}$.
By Fact~\ref{fact:del-vert}, we may delete the five vertices in the original cycle~$C$.
By Fact~\ref{fact:bip}, we may apply a bipartite complementation between~$V_{2,4,5}$ and~$V_{1,2,3}$.
This separates the graph into two parts: $G[V_{1,3} \cup V_{2,4,5}]$, which has bounded clique-width by Claim~\ref{clm2:V13-V245-bdd-cw} and $G[V_{1,2,3}]$, which is a clique and so has clique-width at most~$2$.
Therefore if~$V_{2,4,5}$ is large then~$G$ has bounded clique-width.
Thus we may assume that $V_{2,4,5}= \emptyset$.

We will now show how to disconnect~$V_{1,3}$ from the rest of the graph.
Note that~$V_{4,5}$ is anti-complete to~$V_{1,3}$ by Claim~\ref{clm2:V13-V45-V13-clique-or-anti}.
$V_{1,2,3}$ is anti-complete to~$V_{1,3}$ by Claim~\ref{clm2:V13-V123-V13-indep-or-V123-clique-etc}.
$V_{3,4,5}$ and~$V_{1,4,5}$ are anti-complete to~$V_{1,3}$ by Claim~\ref{clm2:V123-V14V35-either-all-indep-V123-comp-or-V123-clique-anti}.
$V_{2,4}$ and~$V_{2,5}$ are complete to~$V_{1,3}$ by Claim~\ref{clm2:V13-V24-both-cliques-or-one-indep-and-comp}.
By Fact~\ref{fact:bip}, we may apply a bipartite complementation between~$V_{1,3}$ and $\{v_1,v_3\} \cup V_{2,4} \cup V_{2,5}$.
By Claim~\ref{clm2:V13-V1245V2345V12345-either-V13-indep-or-V13-disjUn-indep-comp-clique-anti}, for $S \in \{\{1,2,4,5\}, \{2,3,4,5\}, \{1,2,3,4,5\}\}$, either~$V_S$ is empty or~$V_{1,3}$ is the disjoint union of a clique~$C'$ that is anti-complete to~$V_S$ and an independent set~$I$ that is complete to~$V_S$.
If~$V_{1,3}$ does have this form, then by Fact~\ref{fact:bip}, we may apply a bipartite complementation between~$I$ and $V_{1,2,4,5} \cup V_{2,3,4,5} \cup V_{1,2,3,4,5}$.
Doing this removes all edges from~$V_{1,3}$ to vertices not in~$V_{1,3}$.
By Claim~\ref{clm2:V13-P3-free}, $G[V_{1,3}]$ is a $P_3$-free graph, so it is a disjoint union of cliques and thus has clique-width at most~$2$.

We conclude that if~$V_{1,3}$ is large, but is neither a clique nor an independent set, then~$G[V_{1,3}]$ has bounded clique-width and we can remove all edges from~$V_{1,3}$ to vertices not in~$V_{1,3}$.
We may therefore remove all vertices in~$V_{1,3}$ from the graph.
The claim follows by symmetry.

\clm{\label{clm2:V123-indep-or-clique}For $i \in \{1,2,3,4,5\}$, if~$V_{i,i+1,i+2}$ is large then we may assume it is an independent set or a clique.}
This follows from Claim~\ref{clm2:V13-indep-or-clique} by casting to the complement (see also Table~\ref{tbl:WT-VS-corr}).

\medskip
\noindent
Note that by Claims~\ref{clm2:V0V1V12-clique}, \ref{clm2:V124V1234V12345-indep}, \ref{clm2:V13-indep-or-clique} and~\ref{clm2:V123-indep-or-clique}, we may assume that every large set~$V_S$ is either a clique or an independent set.

\clm{\label{clm2:V13-indep}For $i \in \{1,2,3,4,5\}$, if~$V_{i,i+2}$ is large then we may assume it is an independent set.}
Suppose that~$V_{1,3}$ is large, but not an independent set.
By Claim~\ref{clm2:V13-indep-or-clique}, we may assume that it is a clique.
We will show how to disconnect~$V_{1,3}$ (or a part of the graph that contains~$V_{1,3}$ and has bounded clique-width) from the rest of the graph.
First, by Fact~\ref{fact:del-vert}, we may delete the five vertices of the original cycle~$C$.
Let $G' =G[\bigcup V_S \; | \; S \mbox{ is a clique}]$ and let $G'' =G[\bigcup V_S \; | \; S \mbox{ is an independent set}]$.
By Claim~\ref{clm2:clique-indep-trivial}, if~$V_S$ is a clique and~$V_T$ is an independent set, then~$V_S$ is either complete or anti-complete to~$V_T$.
If~$V_S$ is complete to~$V_T$, by Fact~\ref{fact:bip}, we may apply a bipartite complementation between these sets.
Doing so for every pair of a clique~$V_S$ and an independent set~$V_T$ that are complete to each other, we disconnect~$G'$ from~$G''$.
Since our aim is to show how to remove the clique~$V_{1,3}$ from~$G$, it is therefore sufficient to show how to remove it from~$G'$.
In other words, we may assume that if~$V_T$ is an independent set then $V_T=\emptyset$.
That is, we may assume that every set~$V_S$ is a (possibly empty) clique.

By Claim~\ref{clm2:V124V1234V12345-indep}, $V_{1,2,4}$, $V_{2,3,5}$, $V_{1,3,4}$, $V_{2,4,5}$, $V_{1,3,5}$, $V_{1,2,3,4}$, $V_{1,2,3,5}$, $V_{1,2,4,5}$, $V_{1,3,4,5}$, $V_{2,3,4,5}$ and~$V_{1,2,3,4,5}$ are independent sets, so we may assume that they are empty.
Since~$V_{1,3}$ is a large, by Claim~\ref{clm2:V_S-V_T-not-contain13-empty} if~$V_S$ is large for some $S \subseteq \{1,2,3,4,5\}$ with $S \neq \{1,3\}$ then $S \cap \{2,4\} \neq \emptyset$ and $S \cap \{2,5\} \neq \emptyset$.
It follows that $V_\emptyset$, $V_4$, $V_5$, $V_1$, $V_{1,4}$, $V_{1,5}$, $V_3$, $V_{3,4}$, $V_{3,5}$ are empty.
This means that apart from~$V_{1,3}$, only the following sets can be large: $V_{1,2}$, $V_{2,3}$, $V_{1,2,3}$, $V_{2,3,4}$, $V_{3,4,5}$, $V_{1,4,5}$, $V_{1,2,5}$, $V_2$, $V_{4,5}$, $V_{2,4}$, $V_{2,5}$ and recall that all these sets are (possibly empty) cliques by assumption (see also \figurename~\ref{fig:V13-clique}).
For two of these sets, if there is an $i \in S \cap T$ then~$V_S$ is anti-complete to~$V_T$ by Claim~\ref{clm2:cliques-i-anti}.
Since $\{1,3\} \cap (\{2\} \cup \{4,5\} \cup \{2,4\} \cup \{2,5\}) = \emptyset$, at most one of the sets~$V_2$, $V_{4,5}$, $V_{2,4}$ and~$V_{2,5}$ is large by Claim~\ref{clm2:V_S-V_T-not-contain13-empty}.
We consider several cases.

\begin{figure}
\begin{center}
\begin{tikzpicture}[scale=0.65]
\coordinate (v13)  at (360*0/12+90:8) ;
\coordinate (v23)  at (360*1/12+90:8) ;
\coordinate (v12)  at (360*2/12+90:8) ;
\coordinate (v145) at (360*3/12+90:8) ;
\coordinate (v345) at (360*4/12+90:8) ;
\coordinate (v125) at (360*5/12+90:8) ;
\coordinate (v234) at (360*6/12+90:8) ;
\coordinate (v123) at (360*7/12+90:8) ;
\coordinate (v25)  at (360*8/12+90:8) ;
\coordinate (v24)  at (360*9/12+90:8) ;
\coordinate (v45)  at (360*10/12+90:8) ;
\coordinate (v2)   at (360*11/12+90:8) ;

\draw [fill=black] (v13)  circle (1.5pt) ;
\draw [fill=black] (v23)  circle (1.5pt) ;
\draw [fill=black] (v12)  circle (1.5pt) ;
\draw [fill=black] (v145) circle (1.5pt) ;
\draw [fill=black] (v345) circle (1.5pt) ;
\draw [fill=black] (v125) circle (1.5pt) ;
\draw [fill=black] (v234) circle (1.5pt) ;
\draw [fill=black] (v123) circle (1.5pt) ;
\draw [fill=black] (v25)  circle (1.5pt) ;
\draw [fill=black] (v24)  circle (1.5pt) ;
\draw [fill=black] (v45)  circle (1.5pt) ;
\draw [fill=black] (v2)   circle (1.5pt) ;

\draw (v13)  node [label={[label distance=-0pt]+90:$V_{1,3}$}] {};
\draw (v23)  node [label={[label distance=-0pt]+90:$V_{2,3}$}] {};
\draw (v12)  node [label={[label distance=-0pt]+180:$V_{1,2}$}] {};
\draw (v145) node [label={[label distance=-0pt]+180:$V_{1,4,5}$}] {};
\draw (v345) node [label={[label distance=-0pt]+180:$V_{3,4,5}$}] {};
\draw (v125) node [label={[label distance=-0pt]+270:$V_{1,2,5}$}] {};
\draw (v234) node [label={[label distance=-0pt]+270:$V_{2,3,4}$}] {};
\draw (v123) node [label={[label distance=-0pt]+270:$V_{1,2,3}$}] {};
\draw (v25)  node [label={[label distance=-0pt]+0:$V_{2,5}$}] {};
\draw (v24)  node [label={[label distance=-0pt]+0:$V_{2,4}$}] {};
\draw (v45)  node [label={[label distance=-0pt]+0:$V_{4,5}$}] {};
\draw (v2)   node [label={[label distance=-0pt]+90:$V_2$}] {};

\draw (v13) -- (v2);
\draw (v13) -- (v45);
\draw (v13) -- (v24);
\draw (v13) -- (v25);

\draw[dashed] (v2) -- (v45);
\draw[dashed] (v2) -- (v24);
\draw[dashed] (v2) -- (v25);
\draw (v2) -- (v345);
\draw (v2) -- (v145);
\draw[dashed] (v2) -- (v12);
\draw[dashed] (v2) -- (v23);

\draw[dashed] (v45) -- (v24);
\draw[dashed] (v45) -- (v25);
\draw (v45) -- (v123);
\draw[dashed] (v45) -- (v345);
\draw[dashed] (v45) -- (v145);
\draw (v45) -- (v12);
\draw (v45) -- (v23);

\draw[dashed] (v24) -- (v25);
\draw[dashed] (v24) -- (v12);

\draw[dashed] (v25) -- (v23);

\draw[dashed] (v123) -- (v234);
\draw[dashed] (v123) -- (v125);
\draw[dashed] (v123) -- (v12);
\draw[dashed] (v123) -- (v23);

\draw[dashed] (v234) -- (v345);
\draw[dashed] (v234) -- (v23);

\draw[dashed] (v125) -- (v145);
\draw[dashed] (v125) -- (v12);

\draw[dashed] (v345) -- (v145);
\draw (v345) -- (v12);

\draw (v145) -- (v23);
\end{tikzpicture}
\end{center}
\caption{\label{fig:V13-clique}The set of possible cliques when~$V_{1,3}$ is a clique.
Two sets are joined by a line if the edges between them form a matching (recall that a matching may contain no edges, in which case the two sets are anti-complete to each other).
Two sets are joined by a dashed line if at most one of them is large and the other is empty.
Two sets are not joined by a line if they are anti-complete to each other.
These properties follow from Claims~\ref{clm2:clique-matching}, \ref{clm2:cliques-i-anti}, \ref{clm2:V_S-V_T-contain12-empty}, and~\ref{clm2:V_S-V_T-not-contain13-empty}.}
\end{figure}
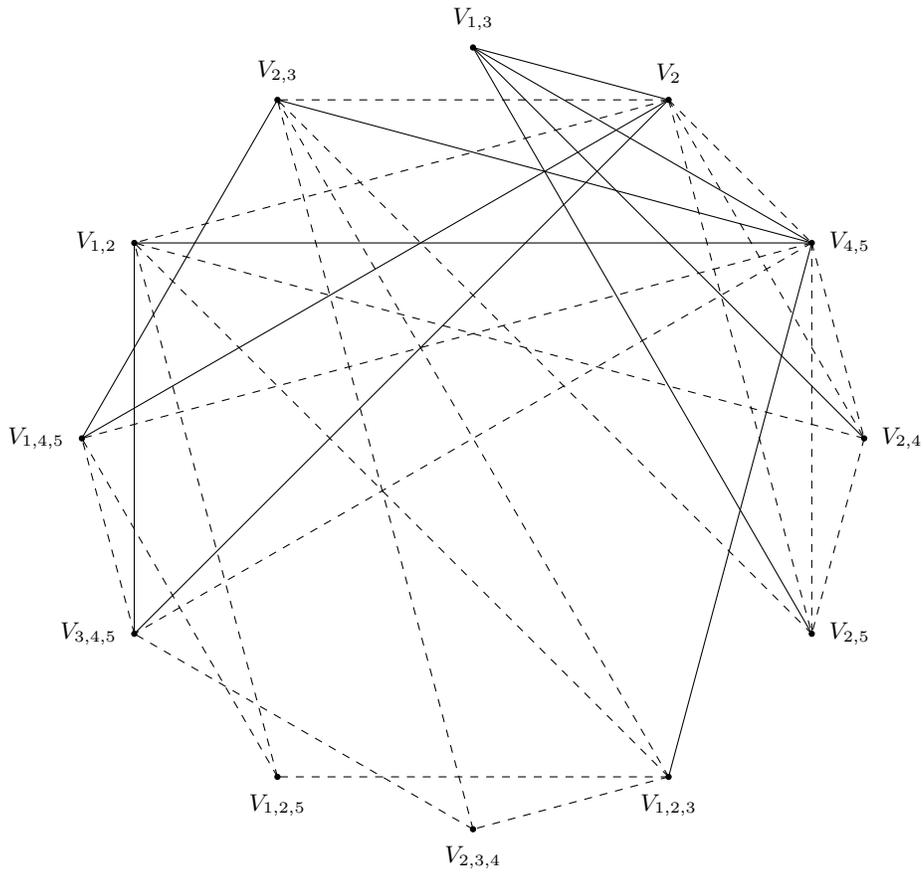

\thmcase{$V_{2,4}$ or~$V_{2,5}$ is large.}
By symmetry, we may assume~$V_{2,4}$ is large.
Then~$V_2$, $V_{4,5}$ and~$V_{2,5}$ are empty, as stated above.
Also, $V_{1,3}$ and~$V_{2,4}$ are anti-complete to~$V_{1,2}$, $V_{2,3}$, $V_{1,2,3}$, $V_{2,3,4}$, $V_{3,4,5}$, $V_{1,4,5}$ and~$V_{1,2,5}$ by Claim~\ref{clm2:cliques-i-anti}.
This means that $G[V_{1,3} \cup V_{2,4}]$ is disconnected from the rest of~$G'$.
By Lemma~\ref{l-victor}, $G[V_{1,3} \cup V_{2,4}]$ has bounded clique-width.
This completes the case.

\thmcase{$V_{2}$ is large.}
Then~$V_{4,5}$, $V_{2,4}$ and~$V_{2,5}$ are empty, as stated above.
Since $\{3,5\} \notin \{2\} \cup \{1,2\}$ and $\{1,4\} \notin \{2\} \cup \{2,3\}$, Claim~\ref{clm2:V_S-V_T-not-contain13-empty} implies that $V_{1,2}$ and~$V_{2,3}$ are empty.
Now $V_{1,3}$, $V_{1,2,3}$, $V_{2,3,4}$, $V_{3,4,5}$, $V_{1,4,5}$ and~$V_{1,2,5}$ are pair-wise anti-complete by Claim~\ref{clm2:cliques-i-anti}.
By Claim~\ref{clm2:clique-matching}, the edges between~$V_2$ and each of $V_{1,3}$, $V_{1,2,3}$, $V_{2,3,4}$, $V_{3,4,5}$, $V_{1,4,5}$ and~$V_{1,2,5}$ form matchings.
By Fact~\ref{fact:comp}, we can complement all of the large sets.
We obtain a graph which is a disjoint union of stars, which have clique-width at most~$2$.
It follows that $G'=G[V_{1,3} \cup V_{1,2,3} \cup V_{2,3,4} \cup V_{3,4,5} \cup V_{1,4,5} \cup V_{1,2,5}]$ has bounded clique-width.
This completes the case.

\thmcase{$V_{4,5}$ is large.}
Then~$V_2$, $V_{2,4}$ and~$V_{2,5}$ are empty, as stated above.
Since~$V_{4,5}$ is large and $\{4,5\} \subseteq \{3,4,5\}, \{1,4,5\}$, Claim~\ref{clm2:V_S-V_T-contain12-empty} implies that~$V_{3,4,5}$ and~$V_{1,4,5}$ are empty.
Now $V_{1,3}$, $V_{1,2}$, $V_{2,3}$, $V_{1,2,3}$, $V_{2,3,4}$ and~$V_{1,2,5}$ are pair-wise anti-complete by Claim~\ref{clm2:cliques-i-anti}.
By Claim~\ref{clm2:clique-matching}, the edges between~$V_{4,5}$ and each of $V_{1,3}$, $V_{1,2}$, $V_{2,3}$, $V_{1,2,3}$, $V_{2,3,4}$ and~$V_{1,2,5}$ form matchings.
By Fact~\ref{fact:comp}, we can complement all of the large sets.
We obtain a graph which is a disjoint union of stars, which have clique-width at most~$2$.
It follows that $G'=G[V_{1,3} \cup V_{4,5} \cup V_{1,2} \cup V_{2,3} \cup V_{1,2,3} \cup V_{2,3,4} \cup V_{1,2,5}]$ has bounded clique-width.
This completes the case.

\thmcase{$V_2$, $V_{4,5}$, $V_{2,4}$ and~$V_{2,5}$ are empty.}
$V_{1,3}$ is anti-complete to $V_{1,2}$, $V_{2,3}$, $V_{1,2,3}$, $V_{2,3,4}$, $V_{3,4,5}$, $V_{1,4,5}$ and~$V_{1,2,5}$ by Claim~\ref{clm2:cliques-i-anti}.
Therefore $G'[V_{1,3}]=G[V_{1,3}]$ is disconnected from rest of~$G'$.
Since~$V_{1,3}$ is a clique, $G[V_{1,3}]$ has clique-width at most~$2$.
We may therefore remove~$V_{1,3}$ from the graph.
This completes the case.

\medskip
\noindent
Since one of the above cases must hold by Claim~\ref{clm2:V_S-large-or-empty}, this completes the proof of the claim when $i=1$.
The claim follows by symmetry.

\clm{\label{clm2:V123-clique}For $i \in \{1,2,3,4,5\}$, if~$V_{i,i+1,i+2}$ is large then we may assume it is a clique.}
This follows from Claim~\ref{clm2:V13-indep} by casting to the complement (see also Table~\ref{tbl:WT-VS-corr}).

\clm{\label{clm2:V13-empty}For $i \in \{1,2,3,4,5\}$, we may assume~$V_{i,i+2}$ is empty.}
Suppose that~$V_{1,3}$ is large.
By Claim~\ref{clm2:V13-indep}, we may assume that it is an independent set.
We will show how to disconnect~$V_{1,3}$ (or a part of the graph that contains~$V_{1,3}$ and has bounded clique-width) from the rest of the graph.
First, by Fact~\ref{fact:del-vert}, we may delete the five vertices of the original cycle~$C$.
Let $G' =G[\bigcup V_S \; | \; S \mbox{ is a clique}]$ and let $G'' =G[\bigcup V_S \; | \; S \mbox{ is an independent set}]$.
By Claim~\ref{clm2:clique-indep-trivial}, if~$V_S$ is a clique and~$V_T$ is an independent set, then~$V_S$ is either complete or anti-complete to~$V_T$.
If~$V_S$ is complete to~$V_T$, by Fact~\ref{fact:bip}, we may apply a bipartite complementation between these sets.
Doing so for every pair of a clique~$V_S$ and an independent set~$V_T$ that are complete to each other, we disconnect~$G'$ from~$G''$.
Since our aim is to show how to remove the independent set~$V_{1,3}$ from~$G$, it is therefore sufficient to show how to remove it from~$G''$.
In other words, we may assume that if~$V_S$ is a clique then $V_S=\emptyset$.
That is, we may assume that every set~$V_T$ is a (possibly empty) independent set.

By Claim~\ref{clm2:V0V1V12-clique} $V_\emptyset$, $V_1$, $V_2$, $V_3$, $V_4$, $V_5$, $V_{1,2}$, $V_{2,3}$, $V_{3,4}$, $V_{4,5}$ and~$V_{1,5}$ are cliques, so we may assume that they are empty.
By Claim~\ref{clm2:V123-clique} $V_{1,2,3}$, $V_{2,3,4}$, $V_{3,4,5}$, $V_{1,4,5}$ and $V_{1,2,5}$ are cliques, so we may assume that they are empty.

Since~$V_{1,3}$ is a large, by Claim~\ref{clm2:V_S-V_T-not-contain13-empty} if~$V_S$ is large for some $S \subseteq \{1,2,3,4,5\}$ with $S \neq \{1,3\}$ then $S \cap \{2,4\} \neq \emptyset$ and $S \cap \{2,5\} \neq \emptyset$.
It follows that $V_{1,4}$, $V_{3,5}$, $V_{1,3,4}$ and~$V_{1,3,5}$ are empty.

This means that apart from~$V_{1,3}$, only the following sets can be large: $V_{2,4}$, $V_{2,5}$, $V_{1,2,4}$, $V_{2,3,5}$, $V_{1,2,3,4}$, $V_{1,2,3,5}$, $V_{1,3,4,5}$, $V_{2,4,5}$, $V_{1,2,4,5}$, $V_{2,3,4,5}$ and~$V_{1,2,3,4,5}$ and note that they are all (possibly empty) independent sets by assumption (see also \figurename~\ref{fig:V13-indep}).
For two of these sets, if there is an $i \in \{1,2,3,4,5\}$ such that $i \notin S$ and $i \notin T$ then then~$V_S$ is complete to~$V_T$ by Claim~\ref{clm2:indeps-no-i-comp}.

\begin{figure}
\begin{center}
\begin{tikzpicture}[scale=0.65]
\coordinate (v13)    at (360*0/12+90:8) ;
\coordinate (v12345) at (360*1/12+90:8) ;
\coordinate (v1345)  at (360*2/12+90:8) ;
\coordinate (v2345)  at (360*3/12+90:8) ;
\coordinate (v1245)  at (360*4/12+90:8) ;
\coordinate (v1235)  at (360*5/12+90:8) ;
\coordinate (v1234)  at (360*6/12+90:8) ;
\coordinate (v245)   at (360*7/12+90:8) ;
\coordinate (v235)   at (360*8/12+90:8) ;
\coordinate (v124)   at (360*9/12+90:8) ;
\coordinate (v25)    at (360*10/12+90:8) ;
\coordinate (v24)    at (360*11/12+90:8) ;

\draw [fill=black] (v13)    circle (1.5pt) ;
\draw [fill=black] (v12345) circle (1.5pt) ;
\draw [fill=black] (v1345)  circle (1.5pt) ;
\draw [fill=black] (v2345)  circle (1.5pt) ;
\draw [fill=black] (v1245)  circle (1.5pt) ;
\draw [fill=black] (v1235)  circle (1.5pt) ;
\draw [fill=black] (v1234)  circle (1.5pt) ;
\draw [fill=black] (v245)   circle (1.5pt) ;
\draw [fill=black] (v235)   circle (1.5pt) ;
\draw [fill=black] (v124)   circle (1.5pt) ;
\draw [fill=black] (v25)    circle (1.5pt) ;
\draw [fill=black] (v24)    circle (1.5pt) ;

\draw (v13)    node [label={[label distance=-0pt]+90:$V_{1,3}$}] {};
\draw (v12345) node [label={[label distance=-0pt]+90:$V_{1,2,3,4,5}$}] {};
\draw (v1345)  node [label={[label distance=-0pt]+180:$V_{1,3,4,5}$}] {};
\draw (v2345)  node [label={[label distance=-0pt]+180:$V_{2,3,4,5}$}] {};
\draw (v1245)  node [label={[label distance=-0pt]+180:$V_{1,2,4,5}$}] {};
\draw (v1235)  node [label={[label distance=-0pt]+270:$V_{1,2,3,5}$}] {};
\draw (v1234)  node [label={[label distance=-0pt]+270:$V_{1,2,3,4}$}] {};
\draw (v245)   node [label={[label distance=-0pt]+270:$V_{2,4,5}$}] {};
\draw (v235)   node [label={[label distance=-0pt]+0:$V_{2,3,5}$}] {};
\draw (v124)   node [label={[label distance=-0pt]+0:$V_{1,2,4}$}] {};
\draw (v25)    node [label={[label distance=-0pt]+0:$V_{2,5}$}] {};
\draw (v24)    node [label={[label distance=-0pt]+90:$V_{2,4}$}] {};

\draw (v13) -- (v245);
\draw (v13) -- (v1245);
\draw (v13) -- (v2345);
\draw (v13) -- (v12345);

\draw[dashed] (v24) -- (v25);
\draw[dashed] (v24) -- (v124);
\draw[dashed] (v24) -- (v245);
\draw (v24) -- (v1235);
\draw (v24) -- (v1345);
\draw (v24) -- (v12345);

\draw[dashed] (v25) -- (v235);
\draw[dashed] (v25) -- (v245);
\draw (v25) -- (v1234);
\draw (v25) -- (v1345);
\draw (v25) -- (v12345);

\draw (v124) -- (v235);
\draw[dashed] (v124) -- (v1234);
\draw[dashed] (v124) -- (v1235);
\draw[dashed] (v124) -- (v1245);
\draw (v124) -- (v2345);
\draw (v124) -- (v1345);
\draw[dashed] (v124) -- (v12345);

\draw[dashed] (v235) -- (v1234);
\draw[dashed] (v235) -- (v1235);
\draw (v235) -- (v1245);
\draw[dashed] (v235) -- (v2345);
\draw (v235) -- (v1345);
\draw[dashed] (v235) -- (v12345);

\draw (v245) -- (v1234);
\draw (v245) -- (v1235);
\draw[dashed] (v245) -- (v1245);
\draw[dashed] (v245) -- (v2345);
\draw[dashed] (v245) -- (v1345);
\draw[dashed] (v245) -- (v12345);

\draw[dashed] (v1234) -- (v1235);
\draw[dashed] (v1234) -- (v1245);
\draw[dashed] (v1234) -- (v2345);
\draw[dashed] (v1234) -- (v1345);
\draw[dashed] (v1234) -- (v12345);

\draw[dashed] (v1235) -- (v1245);
\draw[dashed] (v1235) -- (v2345);
\draw[dashed] (v1235) -- (v1345);
\draw[dashed] (v1235) -- (v12345);

\draw[dashed] (v1245) -- (v2345);
\draw[dashed] (v1245) -- (v1345);
\draw[dashed] (v1245) -- (v12345);

\draw[dashed] (v2345) -- (v1345);
\draw[dashed] (v2345) -- (v12345);

\draw[dashed] (v1345) -- (v12345);
\end{tikzpicture}
\end{center}
\caption{\label{fig:V13-indep}The set of possible independent sets when~$V_{1,3}$ is an independent set.
Two sets are joined by a line if the edges between them form a co-matching.
Two sets are joined by a dashed line if at most one of them is large.
Two sets are not joined by a line if they are complete to each other.
These properties follow from Claims~\ref{clm2:indep-comatching}, \ref{clm2:indeps-no-i-comp}, \ref{clm2:V_S-V_T-contain12-empty}, and~\ref{clm2:V_S-V_T-not-contain13-empty}.}
\end{figure}
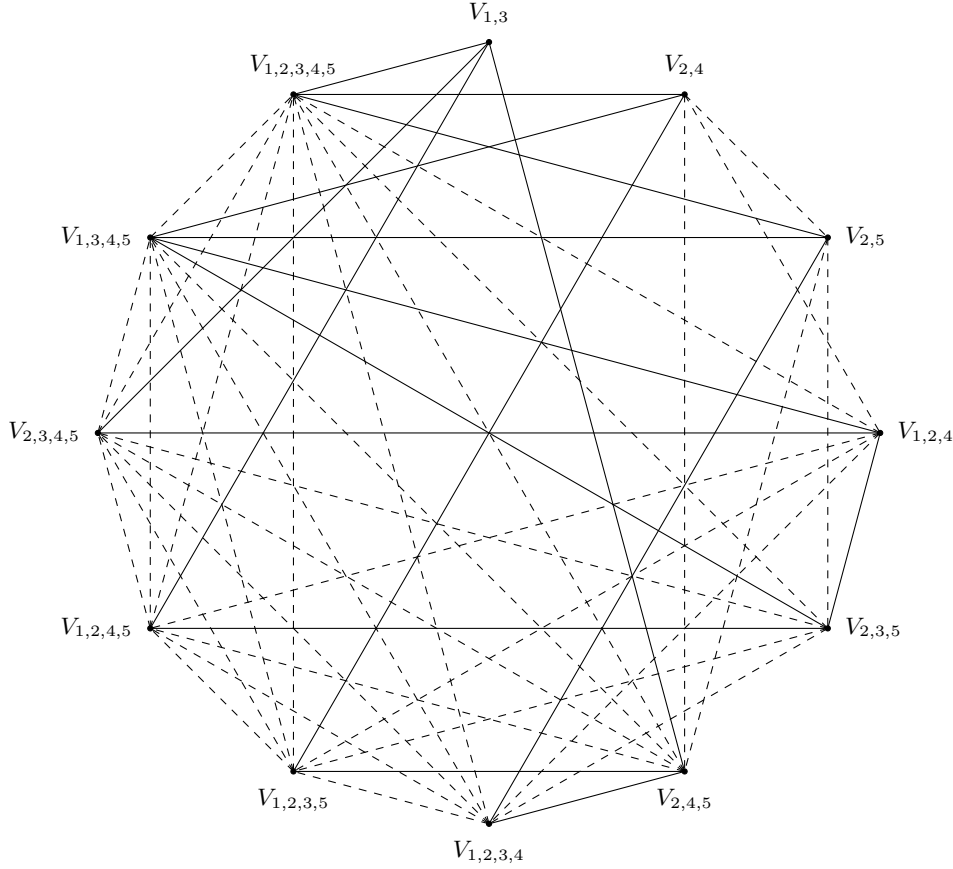

Since $\{4,5\} \subseteq \{1,2,3,4,5\}, \{2,3,4,5\}, \{1,2,4,5\}, \{2,4,5\}$, at most one of the sets $V_{1,2,3,4,5}$, $V_{2,3,4,5}$, $V_{1,2,4,5}$, and~$V_{2,4,5}$ is large by Claim~\ref{clm2:V_S-V_T-contain12-empty}.
We consider several cases.

\thmcase{$V_{1,2,3,4,5}$ is large.}
Since
\begin{itemize}
\item $\{1,2\} \subseteq \{1,2,4\}$,
\item $\{2,3\} \subseteq \{2,3,5\}, \{1,2,3,4\}, \{1,2,3,5\}$ and
\item $\{4,5\} \subseteq \{1,3,4,5\},\allowbreak \{2,4,5\},\allowbreak \{1,2,4,5\}, \{2,3,4,5\}, \{1,2,3,4,5\}$,
\end{itemize}
Claim~\ref{clm2:V_S-V_T-contain12-empty} implies that $V_{1,2,4}$, $V_{2,3,5}$, $V_{1,2,3,4}$, $V_{1,2,3,5}$, $V_{1,3,4,5}$, $V_{2,4,5}$, $V_{1,2,4,5}$, $V_{2,3,4,5}$ and~$V_{1,2,3,4,5}$ are empty.
Now $\{1,3\} \cap (\{2,4\} \cup \{2,5\})=\emptyset$, so Claim~\ref{clm2:V_S-V_T-not-contain13-empty} implies that~either~$V_{2,4}$ or~$V_{2,5}$ is empty.
By symmetry we may assume that~$V_{2,5}$ is empty.
This means that only the sets~$V_{1,3}$, $V_{1,2,3,4,5}$ and~$V_{2,4}$ are large.
By Lemma~\ref{l-victor}, it follows that~$G''$ has bounded clique-width.
This completes the case.

\begin{sloppypar}
\thmcase{$V_{2,3,4,5}$ or~$V_{1,2,4,5}$ is large.}
By symmetry, we may assume that~$V_{2,3,4,5}$ is large.
Since
\begin{itemize}
\item $\{2,3\} \subseteq \{2,3,5\}, \{1,2,3,4\},\allowbreak \{1,2,3,5\}$ and
\item $\{4,5\} \subseteq \{1,3,4,5\}, \{2,4,5\}, \{1,2,4,5\}, \{2,3,4,5\}, \{1,2,3,4,5\}$,
\end{itemize}
Claim~\ref{clm2:V_S-V_T-contain12-empty} implies that $V_{2,3,5}$, $V_{1,2,3,4}$, $V_{1,2,3,5}$, $V_{1,3,4,5}$, $V_{2,4,5}$, $V_{1,2,4,5}$, $V_{2,3,4,5}$ and~$V_{1,2,3,4,5}$ are empty.
This means that apart from~$V_{1,3}$ and~$V_{2,3,4,5}$, only the sets~$V_{2,4}$, $V_{2,5}$ and~$V_{1,2,4}$ can be large.
Now $4 \notin \{2,5\} \cup \{1,3\}$, $1 \notin \{2,5\} \cup \{2,3,4,5\}$, and $3 \notin \{2,5\} \cup \{2,4\} \cup \{1,2,4\}$, so by Claim~\ref{clm2:indeps-no-i-comp}, $V_{2,5}$ is complete to all the other large sets.
By Fact~\ref{fact:bip}, we may apply a bipartite complementation between~$V_{2,5}$ and $V_{1,3} \cup V_{2,3,4,5} \cup V_{2,4} \cup V_{1,2,4}$.
This will disconnect $G[V_{2,5}]$ from the rest of~$G''$.
Since~$V_{2,5}$ is an independent set, $G[V_{2,5}]$ has clique-width at most~$1$.
We may therefore assume that~$V_{2,5}$ is empty.
Since $\{3,5\} \cap (\{2,4\} \cup \{1,2,4\}) = \emptyset$, Claim~\ref{clm2:V_S-V_T-not-contain13-empty} implies that either~$V_{2,4}$ or~$V_{1,2,4}$ is empty.
This means that only at most three sets~$V_S$ are large: $V_{1,3}$, $V_{2,3,4,5}$ and either~$V_{2,4}$ or~$V_{1,2,4}$.
By Lemma~\ref{l-victor}, it follows that~$G''$ has bounded clique-width.
This completes the case.
\end{sloppypar}

\thmcase{$V_{2,4,5}$ is large.}
Since
\begin{itemize}
\item $\{4,5\} \subseteq \{1,3,4,5\}, \{2,4,5\}, \{1,2,4,5\}, \{2,3,4,5\}, \{1,2,3,4,5\}$,
\end{itemize}
Claim~\ref{clm2:V_S-V_T-contain12-empty} implies that $V_{1,3,4,5}$, $V_{2,4,5}$, $V_{1,2,4,5}$, $V_{2,3,4,5}$ and~$V_{1,2,3,4,5}$ are empty.
Since $\{1,3\} \cap (\{2,4\} \cup \{2,5\} \cup \{2,4,5\})=\emptyset$, Claim~\ref{clm2:V_S-V_T-not-contain13-empty} implies that~$V_{2,4}$ and~$V_{2,5}$ are empty.
This means that apart from~$V_{1,3}$ and~$V_{2,4,5}$, only the following sets may be large: $V_{1,2,4}$, $V_{2,3,5}$, $V_{1,2,3,4}$ and $V_{1,2,3,5}$.
Since $\{1,2\} \subseteq \{1,2,4\}, \{1,2,3,4\}, \{1,2,3,5\}$, Claim~\ref{clm2:V_S-V_T-contain12-empty} implies that at most one of $V_{1,2,4}$, $V_{1,2,3,4}$ and $V_{1,2,3,5}$ is large.
Since $\{2,3\} \subseteq \{2,3,5\}, \{1,2,3,4\}, \{1,2,3,5\}$, Claim~\ref{clm2:V_S-V_T-contain12-empty} implies that at most one of $V_{2,3,5}$, $V_{1,2,3,4}$ and $V_{1,2,3,5}$ is large.
Since $5 \notin \{1,2,4\} \cup \{1,3\}$, $3 \notin \{1,2,4\} \cup \{2,4,5\}$, $4 \notin \{2,3,5\} \cup \{1,3\}$, $1 \notin \{2,3,5\} \cup \{2,4,5\}$, Claim~\ref{clm2:indeps-no-i-comp} implies that~$V_{1,3}$ and~$V_{2,4,5}$ are both complete to both~$V_{1,2,4}$ and~$V_{2,3,5}$.
Therefore, if~$V_{1,2,4}$ or~$V_{2,3,5}$ are large, then~$V_{1,2,3,4}$ and~$V_{1,2,3,5}$ are empty and by Fact~\ref{fact:bip} we can apply a bipartite complementation between $V_{1,3} \cup V_{2,4,5}$ and $V_{1,2,4} \cup V_{2,3,5}$.
This will disconnected $G[V_{1,3} \cup V_{2,4,5}]$ from the rest of the graph.
By Claim~\ref{clm2:V13-V245-bdd-cw}, $G[V_{1,3} \cup V_{2,4,5}]$ has bounded clique-width.
We may therefore assume that~$V_{1,2,4}$ and~$V_{2,3,5}$ are empty.
This means that at most three sets~$V_S$ are large: $V_{1,3}$, $V_{2,4,5}$ and either~$V_{1,2,3,4}$ or~$V_{1,2,3,5}$.
By Lemma~\ref{l-victor}, it follows that~$G''$ has bounded clique-width.
This completes the case.

\thmcase{$V_{1,2,3,4,5}$, $V_{2,3,4,5}$, $V_{1,2,4,5}$, and~$V_{2,4,5}$ are empty.}
The only sets apart from~$V_{1,3}$ that can be large are~$V_{2,4}$, $V_{2,5}$, $V_{1,2,4}$, $V_{2,3,5}$, $V_{1,2,3,4}$, $V_{1,2,3,5}$ and $V_{1,3,4,5}$.
Since $4 \notin \{1,3\} \cup \{2,5\}, \{2,3,5\}, \{1,2,3,5\}$, $5 \notin \{1,3\} \cup \{2,4\} \cup \{1,2,4\} \cup \{1,2,3,4\}$ and $2 \notin \{1,3\} \cup \{1,3,4,5\}$, Claim~\ref{clm2:indeps-no-i-comp} implies that~$V_{1,3}$ is complete to all the other large sets.
Applying a bipartite complementation between~$V_{1,3}$ and $V_{2,4} \cup V_{2,5} \cup V_{1,2,4} \cup V_{2,3,5} \cup V_{1,2,3,4} \cup V_{1,2,3,5} \cup V_{1,3,4,5}$ disconnects~$G[V_{1,3}]$ from the rest of~$G''$.
Since~$V_{1,3}$ is an independent set, $G[V_{1,3}]$ has clique-width at most~$1$.
Therefore, by Fact~\ref{fact:bip}, we may delete~$V_{1,3}$ from the graph.
This completes the case

\medskip
\noindent
Since one of the above cases must hold, this completes the proof of the claim when $i=1$.
The claim follows by symmetry.

\clm{For $i \in \{1,2,3,4,5\}$, we may assume~$V_{i,i+1,i+2}$ is empty.}
This follows from Claim~\ref{clm2:V13-empty} by casting to the complement (see also Table~\ref{tbl:WT-VS-corr}).

\medskip
\noindent
We are now ready to complete the proof of the lemma.
First, by Fact~\ref{fact:del-vert}, we may delete the five vertices of the original cycle~$C$.
Let $G' =G[\bigcup V_S \; | \; S \mbox{ is a clique}]$ and let $G'' =G[\bigcup V_S \; | \; S \mbox{ is an independent set}]$.
By Claim~\ref{clm2:clique-indep-trivial}, if~$V_S$ is a clique and~$V_T$ is an independent set, then~$V_S$ is either complete or anti-complete to~$V_T$.
If~$V_S$ is complete to~$V_T$, by Fact~\ref{fact:bip}, we may apply a bipartite complementation between these sets.
Doing so for every pair of a clique~$V_S$ and an independent set~$V_T$ that are complete to each other, we disconnect~$G'$ from~$G''$.
By Fact~\ref{fact:bip} it is sufficient to show that~$G'$ and~$G''$ have bounded clique-width.
In fact, it is sufficient to show that~$G'$ has bounded clique-width, since then we can obtain the same result for~$G''$ by casting to the complement (see also Table~\ref{tbl:WT-VS-corr}) and applying Fact~\ref{fact:comp}.
In the remainder of the proof, we show that~$G'$ has bounded clique-width.

Note that only the following sets~$V_S$ can remain:
$V_\emptyset$, $V_1$, $V_2$, $V_3$, $V_4$, $V_5$, $V_{1,2}$, $V_{2,3}$, $V_{3,4}$, $V_{4,5}$ and~$V_{1,5}$.
Note that all of these sets are cliques by Claim~\ref{clm2:V0V1V12-clique} and by Claim~\ref{clm2:clique-matching} the edges between any two of these sets form a matching.

If $V_\emptyset$ is large then since $\{1,3\} \cap (\emptyset \cup \{4\} \cup \{4,5\}) = \emptyset$ Claim~\ref{clm2:V_S-V_T-not-contain13-empty} implies that~$V_4$ and~$V_{4,5}$ are empty.
Similarly, every set apart from~$V_\emptyset$ is empty, so~$G'$ is a complete graph and therefore has clique-width~$2$.
We may therefore assume that~$V_\emptyset$ is empty.

Suppose that~$V_1$ is large.
Since $\{2,5\} \cap (\{1\} \cup \{3\} \cup \{4\} \cup \{3,4\})=\emptyset$,
$\{2,4\} \cap (\{1\} \cup \{5\} \cup \{1,5\}) = \emptyset$ and
$\{3,5\} \cap (\{1\} \cup \{2\} \cup \{1,2\}) = \emptyset$, Claim~\ref{clm2:V_S-V_T-not-contain13-empty} implies that $V_3$, $V_4$, $V_{3,4}$, $V_5$, $V_{1,5}$, $V_2$ and~$V_{1,2}$ are empty.
Therefore only~$V_1$, $V_{2,3}$ and~$V_{4,5}$ can be large.
Hence by Lemma \ref{l-victor} the graph~$G'$ has bounded clique-width.
We may therefore assume that~$V_1$ is empty.
By symmetry, we may assume that~$V_i$ is empty for all $i \in \{1,2,3,4,5\}$.

Now by Claim~\ref{clm2:cliques-i-anti} if $j \in \{i+1,i-1\}$ then~$V_{i,i+1}$ is anti-complete to~$V_{j,j+1}$.
For every $i \in \{1,2,3,4,5\}$, by Claim~\ref{clm2:clique-matching} the edges between~$V_{i,i+1}$ and~$V_{i+2,i+3}$ form a matching.
By Fact~\ref{fact:comp}, we may apply a complementation to each set~$V_{i,i+1}$.
We obtain a graph of maximum degree at most~$2$, which therefore has clique-width at most~$4$ by Lemma~\ref{lem:atmost-2}.
This completes the proof of the lemma.\qedllncs
\end{proof}

\begin{lemma}\label{lem:2P1+P_3-co-2P_1+P_3-free-c6-non-free}
The class of $(2P_1+\nobreak P_3,\allowbreak \overline{2P_1+P_3})$-free graphs containing an induced~$C_6$ has bounded clique-width.
\end{lemma}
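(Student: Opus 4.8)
The plan is to follow the same template as the proof of Lemma~\ref{lem:2P1+P_3-co-2P_1+P_3-free-c5-non-free}. I would fix an induced cycle~$C$ on six vertices $v_1,\ldots,v_6$ (indices taken modulo~$6$) and, for $S\subseteq\{1,\ldots,6\}$, let~$V_S$ be the set of vertices $x\in V(G)\setminus V(C)$ with $N(x)\cap V(C)=\{v_i\mid i\in S\}$. As in Claim~\ref{clm2:V_S-large-or-empty}, any~$V_S$ with fewer than a fixed number of vertices can be deleted using Fact~\ref{fact:del-vert}, and since there are only $2^6$ such sets the total number of deletions stays bounded; so I would assume throughout that every~$V_S$ is either large or empty. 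Moreover, any graph in the class that also contains an induced~$C_5$ already has bounded clique-width by Lemma~\ref{lem:2P1+P_3-co-2P_1+P_3-free-c5-non-free}, so I may additionally assume that~$G$ is $C_5$-free, which removes some configurations from consideration.

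Next I would establish the structural claims constraining each~$V_S$ and the edges between pairs of sets. The basic clique claim is that for every induced~$P_3$ of the cycle, namely three consecutive vertices $v_j,v_{j+1},v_{j+2}$, the union $\bigcup\{V_S\mid S\subseteq\{j+3,j+4,j+5\}\}$ of all external sets anti-complete to that~$P_3$ is a clique: two non-adjacent vertices of this union together with $v_j v_{j+1} v_{j+2}$ would induce a $2P_1+\nobreak P_3$. Dually, using $\overline{2P_1+P_3}$-freeness and the triples of~$C$ that induce a $P_1+\nobreak P_2$ (a pair of consecutive vertices together with one further vertex non-adjacent to both), I would prove the corresponding independent-set claims. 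The ``diagonal'' sets behave like the sets~$V_{i,i+2}$ of the $C_5$ proof: they are $P_3$-free, hence disjoint unions of cliques, and one then analyses how they attach to the remaining sets. For the edges between two sets that are both cliques or both independent sets I would apply Lemmas~\ref{lem:matching-comatching} and~\ref{lem:comp-anti} to reduce to matchings, co-matchings or complete/anti-complete relations, and I would repeatedly exploit the dihedral symmetry of~$C_6$ to cut down the cases. A recurring step, analogous to Claims~\ref{clm2:V_S-V_T-contain12-empty} and~\ref{clm2:V_S-V_T-not-contain13-empty}, is to show that if two distinct sets both contain, or both avoid, a suitable pair of indices, then one of them must be empty; this is what keeps the number of surviving large sets small.

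With these claims in hand I would argue, exactly as in Claims~\ref{clm2:V13-indep-or-clique}--\ref{clm2:V13-empty}, that every surviving large set~$V_S$ is a clique or an independent set. I would then delete $v_1,\ldots,v_6$ (Fact~\ref{fact:del-vert}), split the vertex set into the union~$G'$ of the clique sets and the union~$G''$ of the independent sets, and use the complete/anti-complete claim together with a bounded number of bipartite complementations (Fact~\ref{fact:bip}) to disconnect~$G'$ from~$G''$. It then suffices to bound the clique-width of each part; in the surviving configurations only a few large sets remain, so I would invoke Lemma~\ref{l-victor} (at most three cliques and three independent sets), or complement each clique set (Fact~\ref{fact:comp}) to obtain a graph of maximum degree at most~$2$ and apply Lemma~\ref{lem:atmost-2}, yielding disjoint unions of stars or cliques of clique-width at most~$2$.

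The main obstacle is twofold. First, there are $2^6=64$ sets rather than~$32$, so the case analysis is genuinely larger. Second, and more importantly, the ``casting to the complement'' device of the $C_5$ proof is unavailable here: there it was crucial that $\overline{C_5}=C_5$, so that a claim about~$G$ transported verbatim to a claim about~$\overline{G}$ with the cycle relabelled, giving every clique$\leftrightarrow$independent-set dual for free. Since $\overline{C_6}$ is the triangular prism rather than a~$C_6$, this global symmetry is lost, and each independent-set analogue of a clique claim must be proved from scratch via a directly exhibited $\overline{2P_1+P_3}$. Controlling precisely which pairs of sets $V_S,V_T$ can be simultaneously large, so that the final collection of clique sets and independent sets is small enough for Lemma~\ref{l-victor} to absorb, is the technical heart of the argument.
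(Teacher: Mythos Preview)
Your overall framework---fix the $C_6$, partition the outside into sets~$V_S$, delete small sets, assume $C_5$-freeness via Lemma~\ref{lem:2P1+P_3-co-2P_1+P_3-free-c5-non-free}, and prove structural claims set by set---is exactly what the paper does, and you have correctly identified the two genuine difficulties (the loss of the $\overline{C_5}=C_5$ self-duality and the larger lattice of index sets).

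Where your plan diverges from the paper, and where it would run into trouble, is the endgame. You propose to argue ``exactly as in Claims~\ref{clm2:V13-indep-or-clique}--\ref{clm2:V13-empty}'' that every surviving large~$V_S$ is a clique or an independent set, and then to separate the clique part~$G'$ from the independent part~$G''$. In the $C_6$ setting this does not go through for the two ``diagonal'' sets $V_{1,3,5}$ and $V_{2,4,6}$: they are only $P_3$-free (disjoint unions of cliques), and in general they are \emph{not} forced to become a single clique or a single independent set. The paper does not attempt to tame them into that form. Instead it proceeds by successive removal: first it disposes of all~$V_{i,i+1}$ together (showing at most three of them survive and invoking Lemma~\ref{l-victor}), then all~$V_{i,i+1,i+2}$ likewise, then all~$V_S$ with $|S|\ge 4$ (using a mutual-exclusion claim that at most one such set can be large), and is finally left with only $V_{1,3,5}$ and $V_{2,4,6}$. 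The last step is a bespoke argument: if one of the two has at least three non-trivial clique components, the other is forced to be independent and a bipartite complementation yields stars; otherwise each is the disjoint union of at most two cliques and one independent set, and the graph falls under Lemma~\ref{l-victor} after a bounded number of operations.

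So your sketch is sound up to the point where the ``easy'' sets are gone, but the analogue of the $C_5$ reduction you cite does not exist for $V_{1,3,5}$ and $V_{2,4,6}$; you need to plan for a direct treatment of those two $P_3$-free sets rather than folding them into a global clique/independent split.
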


\begin{proof}
\setcounter{ctrclaim}{0}
Suppose~$G$ is a $(2P_1+\nobreak P_3,\allowbreak \overline{2P_1+P_3})$-free graph containing an induced cycle~$C$ on six vertices, say $v_1,\ldots,v_6$ in order.
By Lemma~\ref{lem:2P1+P_3-co-2P_1+P_3-free-c5-non-free}, we may assume that~$G$ is $C_5$-free.
For $S \subseteq \{1,\ldots,6\}$, let~$V_S$ be the set of vertices $x \in V(G) \setminus V(C)$ such that $N(x)\cap V(C)=\{v_i \;|\; i \in S\}$.
We say that a set~$V_S$ is {\em large} if it contains at least two vertices, otherwise it is {\em small}.

To ease notation, in the following claims, subscripts on vertex sets should be interpreted modulo~$6$
and whenever possible we will write~$V_i$ instead of~$V_{\{i\}}$ and~$V_{i,j}$ instead of~$V_{\{i,j\}}$ and so on.

\clm{\label{clm4:V_S-large-or-empty}We may assume that for $S \subseteq \{1,\ldots,6\}$, the set~$V_S$ is either large or empty.}
If a set~$V_S$ is small, but not empty, then by Fact~\ref{fact:del-vert}, we may delete all vertices of this set.
If later in our proof we delete vertices in some set~$V_S$ and in doing so make a large set~$V_S$ become small, we may immediately delete the remaining vertices in~$V_S$.
The above arguments involve deleting a total of at most~$2^6$ vertices.
By Fact~\ref{fact:del-vert}, the claim follows.

\clm{\label{clm4:V0V1-empty}For $S \subseteq \{1,\ldots,6\}$ if $|S| \leq 1$ then $V_S=\emptyset$.}
If $x \in V_\emptyset \cup V_{2}$ then $G[x,v_1,v_3,v_4,v_5]$ is a $2P_1+\nobreak P_3$, a contradiction.
The claim follows by symmetry.

\clm{\label{clm4:V12-clique}For $i \in \{1,\ldots,6\}$, $V_{i,i+1}$ is a clique.}
Suppose that $x,x' \in V_{1,2}$ are non-adjacent.
Then $G[x,x',v_3,v_4,v_5]$ is a $2P_1+\nobreak P_3$, a contradiction.
The claim follows by symmetry.

\clm{\label{clm4:V123-clique}For $i \in \{1,\ldots,6\}$, $V_{i,i+1,i+2}$ is a clique.}
Suppose that $x,x' \in V_{1,2,3}$ are non-adjacent.
Then $G[x,x',v_4,v_5,v_6]$ is a $2P_1+\nobreak P_3$, a contradiction.
The claim follows by symmetry.

\clm{\label{clm4:V13-empty}For $i \in \{1,\ldots,6\}$, $V_{i,i+2}$ is empty.}
If $x \in V_{1,3}$ then $G[x,v_2,v_4,v_5,v_6]$ is a $2P_1+\nobreak P_3$, a contradiction.
The claim follows by symmetry.

\clm{\label{clm4:V14V124V1234-empty}For $i \in \{1,\ldots,6\}$, $V_{i,i+3} \cup V_{i,i+1,i+3} \cup V_{i,i+2,i+3} \cup V_{i,i+1,i+2,i+3}$ is empty.}
If $x \in V_{1,4} \cup V_{1,2,4} \cup V_{1,3,4} \cup V_{1,2,3,4}$ then $G[x,v_4,v_5,v_6,v_1]$ is a $C_5$, a contradiction.
The claim follows by symmetry.

\clm{\label{clm4:V135-P3-free}For $i \in \{1,2\}$, $G[V_{i,i+2,i+4}]$ is $P_3$-free.}
Suppose $x,x',x'' \in V_{1,3,5}$ are such that $G[x,x',x'']$ is a~$P_3$.
Then $G[v_2,v_4,x,x',x'']$ is a $2P_1+\nobreak P_3$, a contradiction.
The claim follows by symmetry.

\clm{\label{clm4:V1235}For $i \in \{1,\ldots,6\}$, $V_{i,i+1,i+2,i+4}$ is empty.}
Suppose for contradiction that~$V_{1,2,3,5}$ is not empty.
By Claim~\ref{clm4:V_S-large-or-empty}, there are two vertices $x,x' \in V_{1,2,3,5}$.
If~$x$ is adjacent to~$x'$ then $G[x,x',v_1,v_5,v_2]$ is a $\overline{2P_1+ P_3}$.
If~$x$ is non-adjacent to~$x'$ then $G[v_4,v_6,x,v_2,x']$ is a $2P_1+\nobreak P_3$.
This contradiction implies that~$V_{1,2,3,5}$ is empty.
The claim follows by symmetry.

\begin{sloppypar}
\clm{\label{clm4:V1245V12345V12456V123456-indep}For $i \in \{1,\ldots,6\}$, $V_{i,i+1,i+3,i+4} \cup V_{i,i+1,i+2,i+3,i+4} \cup V_{i,i+1,i+3,i+4,i+5} \cup V_{i,i+1,i+2,i+3,i+4,i+5}$ is an independent set.}
Suppose $x,x' \in V_{1,2,4,5} \cup V_{1,2,3,4,5} \cup V_{1,2,4,5,6} \cup V_{1,2,3,4,5,6}$ are adjacent.
Then $G[x,x',v_1,v_4,v_2]$ is a $\overline{2P_1+ P_3}$, a contradiction.
The claim follows by symmetry.
\end{sloppypar}

\medskip
\noindent
By Claims~\ref{clm4:V_S-large-or-empty}--\ref{clm4:V1245V12345V12456V123456-indep}, only the following sets can be non-empty:
\begin{itemize}
\item $V_{i,i+1}$ for $i \in \{1,\ldots,6\}$, which are cliques,
\item $V_{i,i+1,i+2}$ for $i \in \{1,\ldots,6\}$, which are cliques,
\item $V_{i,i+2,i+4}$ for $i \in \{1,2\}$, which induce $P_3$-free graphs in~$G$,
\item $V_{i,i+1,i+3,i+4}$ for $i \in \{1,2,3\}$, which are independent sets,
\item $V_{i,i+1,i+2,i+3,i+4}$ for $i \in \{1,\ldots,6\}$, which are independent sets and
\item $V_{1,2,3,4,5,6}$, which is an independent set.
\end{itemize}
In the remainder of the proof, we will prove a number of claims.
First, we will show that we can remove sets of the form~$V_{i,i+1}$ (Claim~\ref{clm4:V12-empty}), and of the form~$V_{i,i+1,i+2}$ (Claim~\ref{clm4:V123-empty}) from the graph.
Then we will show that for $T \subseteq \{1,\ldots,6\}$, with $|T| \geq 4$ we can remove~$V_T$ from the graph (Claims~\ref{clm4:V1245-empty} and~\ref{clm4:V12345-V123456-empty}).
This will leave only the sets~$V_{1,3,5}$ and~$V_{2,4,6}$ and the last stage will be to deal with these sets.

\clm{\label{clm4:clique-indep-trivial}We may assume that for distinct $S,T \subseteq \{1,\ldots,6\}$ if~$V_S$ is an independent set and~$V_T$ is a clique then~$V_S$ is either complete or anti-complete to~$V_T$.}
Let $S,T \subseteq \{1,\ldots,6\}$ be distinct.
If~$V_S$ is an independent set and~$V_T$ is a clique, then by Lemma~\ref{lem:comp-anti}, we may delete at most three vertices from each of these sets, such that in the resulting graph, $V_S$ will be complete or anti-complete to~$V_T$.
Doing this for every pair of an independent set~$V_S$ and a clique~$V_T$ we delete at most $\binom{2^6}{2}\times 2 \times 3$ vertices from~$G$.
The claim follows by Fact~\ref{fact:del-vert}.

\begin{sloppypar}
\clm{\label{clm4:V12-or-V135-empty}For $i \in \{1,2\}$ and $j \in \{1,\ldots,6\}$, if~$V_{i,i+2,i+4}$ is large then~$V_{j,j+1}$ is empty.}
Suppose, for contradiction, that $x \in V_{1,3,5}$ and $y \in V_{1,2}$.
Then $G[v_4,v_6,v_2,y,x]$ or $G[y,v_6,x,v_3,v_4]$ is a $2P_1+\nobreak P_3$ if~$x$ is adjacent or non-adjacent to~$y$, respectively.
This is a contradiction.
The claim follows by symmetry.
\end{sloppypar}

\clm{\label{clm4:V12-V45-empty}For $i \in \{1,2,3\}$, either~$V_{i,i+1}$ or~$V_{i+3,i+4}$ is empty.}
Suppose $x \in V_{1,2}$ and $y \in V_{4,5}$.
If~$x$ is adjacent to~$y$ then $G[x,v_2,v_3,v_4,y]$ is a~$C_5$.
If~$x$ is non-adjacent to~$y$ then $G[y,v_3,x,v_1,v_6]$ is a $2P_1+\nobreak P_3$, a contradiction.
The claim follows by symmetry.

\clm{\label{clm4:V12-Vii+1i+2-anti}For $i,j \in \{1,\ldots,6\}$, $V_{i,i+1,i+2}$ is anti-complete to~$V_{j,j+1}$.}
Let~$i=1$, $j \in \{2,3,4\}$ (the other cases follow by symmetry).
If~$V_{1,2,3}$ and~$V_{j,j+1}$ are not empty then by Claim~\ref{clm4:V_S-large-or-empty} they must be large.
Suppose~$x,x' \in V_{1,2,3}$ and $y \in V_{j,j+1}$ with~$x$ adjacent to~$y$.
By Claim~\ref{clm4:V123-clique}, $x$ must be adjacent to~$x'$.
If $j=2$ then $G[x,v_2,v_3,v_1,y]$ is a $\overline{2P_1+P_3}$.
If $j=3$ then $G[x,x',v_1,y,v_2]$ or $G[v_3,x,x',y,v_2]$ is a $\overline{2P_1+P_3}$ if~$x'$ is adjacent or non-adjacent to~$y$, respectively.
If $j=4$ then $G[y,v_5,v_6,v_1,x]$ is a~$C_5$.
This is a contradiction.
The claim follows by symmetry.

\clm{\label{clm4:V12-empty}We may assume that~$V_{i,i+1}$ is empty for all $i \in \{1,\ldots,6\}$.}
Let~$G'$ be the graph induced by the sets of the form~$V_{i,i+1}$.
We will show how to disconnect~$G'$ from the rest of~$G$ and then show that~$G'$ has bounded clique-width.

We may assume that at least one set of the form~$V_{i,i+1}$ is non-empty (in which case it must be large by Claim~\ref{clm4:V_S-large-or-empty}), otherwise we are done.
By Fact~\ref{fact:bip}, for each $i \in \{1,\ldots,6\}$ we may apply a bipartite complementation between~$V_{i,i+1}$ and $\{v_i,v_{i+1}\}$.
By Claim~\ref{clm4:V12-Vii+1i+2-anti}, for $i,j \in \{1,\ldots,6\}$ there are no edges between~$V_{i,i+1}$ and~$V_{j,j+1,j+2}$.
By Claim~\ref{clm4:V12-or-V135-empty}, $V_{1,3,5}$ and~$V_{2,4,6}$ are empty.
By Claim~\ref{clm4:V12-clique}, all sets of the form~$V_{i,i+1}$ are cliques.
By Claim~\ref{clm4:V1245V12345V12456V123456-indep}, if~$V_T$ is large with $|T|\geq 4$ then~$V_T$ is an independent set.
Therefore, by Claim~\ref{clm4:clique-indep-trivial}, for all $i \in \{1,\ldots,6\}$ and all $T \subseteq \{1,\ldots,6\}$ with $|T|\geq 4$, $V_{i,i+1}$ is either complete or anti-complete to~$V_T$.
If~$V_{i,i+1}$ is complete to~$V_T$ then by Fact~\ref{fact:bip} we may apply a bipartite complementation between these two sets.
This removes all edges from vertices in~$G'$ to vertices outside~$G'$.

It remains to show that~$G'$ has bounded clique-width.
By Claim~\ref{clm4:V12-V45-empty}, $V_{1,2}$ or~$V_{4,5}$ is empty, $V_{2,3}$ or~$V_{5,6}$ is empty and~$V_{3,4}$ or~$V_{1,6}$ is empty.
This means that at most three sets of the form~$V_{i,i+1}$ can be large.
By Claim~\ref{clm4:V12-clique}, every set of the form~$V_{i,i+1}$ induces a clique in~$G$ (and therefore in~$G'$).
By Lemma~\ref{l-victor}, it follows that~$G'$ has bounded clique-width.

We conclude that we can remove all sets of the form~$V_{i,i+1}$ from~$G$, that is, we may assume that these sets are empty.
This completes the proof of the claim.

\clm{\label{clm4:V135-V234-comp}For $i \in \{1,2\}$ and $j \in \{i+1,i+3,i+5\}$, $V_{i,i+2,i+4}$ is complete to~$V_{j,j+1,j+2}$.}
If $x \in V_{1,3,5}$ is non-adjacent to $y \in V_{2,3,4}$ then $G[x,v_6,v_2,y,v_4]$ is a $2P_1+\nobreak P_3$, a contradiction.
The claim follows by symmetry.

\begin{sloppypar}
\clm{\label{clm4:V135-V123-anti}For $i \in \{1,2\}$ and $j \in \{i,i+2,i+4\}$, $V_{i,i+2,i+4}$ is anti-complete to~$V_{j,j+1,j+2}$.}
If $x \in V_{1,3,5}$ is adjacent to $y \in V_{1,2,3}$ then $G[v_4,v_6,v_2,y,x]$ is a $2P_1+\nobreak P_3$, a contradiction.
The claim follows by symmetry.
\end{sloppypar}

\clm{\label{clm4:V123-or-V234-empty}For $i \in \{1,\ldots,6\}$ either~$V_{i,i+1,i+2}$ or~$V_{i+1,i+2,i+3}$ is empty.}
Suppose that~$V_{1,2,3}$ and~$V_{2,3,4}$ are both non-empty.
Then by Claim~\ref{clm4:V_S-large-or-empty} they must both be large and by Claim~\ref{clm4:V123-clique}, they must both be cliques.
If $x \in V_{1,2,3}$ is adjacent to $y \in V_{2,3,4}$ then $G[x,v_2,y,v_1,v_3]$ is a $\overline{2P_1+P_3}$, a contradiction.
Therefore~$V_{1,2,3}$ is anti-complete to~$V_{2,3,4}$.
If $x \in V_{1,2,3}$ and $y,y' \in V_{2,3,4}$ then $G[v_2,v_3,y,x,y']$ is a $\overline{2P_1+P_3}$, a contradiction.
The claim follows by symmetry.

\clm{\label{clm4:V123-empty}We may assume that~$V_{i,i+1,i+2}$ is empty for all $i \in \{1,\ldots,6\}$.}
Let~$G'$ be the graph induced by the sets of the form~$V_{i,i+1,i+2}$.
We will show how to disconnect~$G'$ from the rest of~$G$ and then show that~$G'$ has bounded clique-width.

We may assume that at least one set of the form~$V_{i,i+1,i+2}$ is non-empty (in which case it must be large by Claim~\ref{clm4:V_S-large-or-empty}), otherwise we are done.
By Fact~\ref{fact:bip}, for each $i \in \{1,\ldots,6\}$ we may apply a bipartite complementation between~$V_{i,i+1,i+2}$ and $\{v_i,v_{i+1},v_{i+2}\}$.
By Claims~\ref{clm4:V135-V234-comp} and~\ref{clm4:V135-V123-anti} for $i \in \{1,\ldots,6\}$ and~$j \in \{1,2\}$, $V_{i,i+1,i+2}$ is either complete or anti-complete to~$V_{j,j+2,j+4}$; if it is complete then by Fact~\ref{fact:bip}, we may apply a bipartite complementation between these two sets.
By Claim~\ref{clm4:V123-clique}, all sets of the form~$V_{i,i+1,i+2}$ are cliques.
By Claim~\ref{clm4:V1245V12345V12456V123456-indep}, if~$V_T$ is large with $|T|\geq 4$ then~$V_T$ is an independent set.
Therefore, for all $i \in \{1,\ldots,6\}$ and all $T \subseteq \{1,\ldots,6\}$ with $|T|\geq 4$, $V_{i,i+1,i+2}$ is either complete or anti-complete to~$V_T$.
If~$V_{i,i+1,i+2}$ is complete to~$V_T$ then by Fact~\ref{fact:bip} we may apply a bipartite complementation between these two sets.
This removes all edges from vertices in~$G'$ to vertices outside~$G'$.

It remains to show that~$G'$ has bounded clique-width.
By Claim~\ref{clm4:V123-or-V234-empty}, $V_{1,2,3}$ or~$V_{2,3,4}$ is empty, $V_{3,4,5}$ or~$V_{4,5,6}$ is empty and $V_{1,5,6}$ or~$V_{1,2,6}$ is empty.
This means that at most three sets of the form~$V_{i,i+1,i+2}$ can be large.
By Claim~\ref{clm4:V123-clique}, every set of the form~$V_{i,i+1,i+2}$ induces a clique in~$G$ (and therefore in~$G'$).
By Lemma~\ref{l-victor}, it follows that~$G'$ has bounded clique-width.

We conclude that we can remove all sets of the form~$V_{i,i+1,i+2}$ from~$G$, that is, we may assume that these sets are empty.
This completes the proof of the claim.

\clm{\label{clm4:leq3-V1245-or-V12345-orV123456}Suppose $S,T \subseteq \{1,\ldots,6\}$ are distinct with $|S|\geq 4$ and $|T|\geq 5$.
Then~$V_S$ or~$V_T$ is empty.}
Suppose the claim is false for some~$S$ and~$T$.
By Claim~\ref{clm4:V_S-large-or-empty}, we may assume~$V_S$ and~$V_T$ are large.
Without loss of generality, we may assume that $|T| \geq |S|$.
Without loss of generality, we may assume that $\{1,2,4,5\} \subseteq S$.
The vertices of~$V_T$ are non-adjacent to at most one vertex of the original cycle~$C$, so without loss of generality, we may assume that $\{1,2,4\} \subseteq S \cap T$.
If $x,x' \in V_S \cup V_T$ are adjacent then $G[x,x',v_1,v_4,v_2]$ is a $\overline{2P_1+P_3}$, so $V_S \cup V_T$ is an independent set.
Suppose $x,x \in V_T$, $y,y' \in V_S$ and $i \in T \setminus S$ (which exists since $|T| \geq |S|$ and $S \neq T$).
Then $G[y,y',x,v_i,x']$ is a $2P_1+\nobreak P_3$, a contradiction.
The claim follows by symmetry.

\medskip
\noindent
Claim~\ref{clm4:leq3-V1245-or-V12345-orV123456} implies that, in addition to the sets~$V_{1,3,5}$ and~$V_{2,4,6}$, which may be large, exactly one of the following must hold:
\begin{itemize}
\item Exactly one, two or three sets~$V_S$ with $|S|=4$ are large and all sets~$V_T$ with $|T| \in \{5,6\}$ are empty.
\item Exactly one set~$V_S$ with $|S|=5$ is large and all sets~$V_T$ with $|T| \in \{4,6\}$ are empty.
\item $V_{1,2,3,4,5,6}$ is large and all sets~$V_T$ with $|T| \in \{4,5\}$ are empty.
\item All sets~$V_T$ with $|T| \geq 4$ are empty.
\end{itemize}

\begin{sloppypar}
\clm{\label{clm4:V135-V1234V12345-V135-indep-comp}Let $i \in \{1,2\}$ and $T \subseteq \{1,\ldots,6\}$ with $|T| \geq 4$ such that there is a $j \in \{1,\ldots,6\}$, with $j \notin \{i,i+2,i+4\}$, $j \notin T$.
If~$V_{i,i+2,i+4}$ and~$V_T$ are large then~$V_{i,i+2,i+4}$ is an independent set that is complete to~$V_T$.}
By symmetry, we need only prove the claim in the case where $i=1$ and $T \in \{\{1,2,4,5\},\{1,2,3,4,5\}\}$, in which case $j=6$.
Suppose $x \in V_{1,3,5}$ is non-adjacent to $y \in V_T$.
Then $G[x,v_6,v_2,y,v_4]$ is a $2P_1+\nobreak P_3$, a contradiction.
Therefore~$V_{1,3,5}$ is complete to~$V_T$.
If $x,x' \in V_{1,3,5}$ with~$x$ adjacent to~$x'$ and $y \in V_T$ then $G[v_1,y,x,v_2,x']$ is a $\overline{2P_1+P_3}$, a contradiction.
Therefore~$V_{1,3,5}$ is an independent set.
The claim follows by symmetry.
\end{sloppypar}

\clm{\label{clm4:V135-V12346-or-V123456-V135-clique-plus-indep}For $i \in \{1,2\}$ and $T \subseteq \{1,\ldots,6\}$ with $|T| \geq 4$ and $\{i,i+2,i+4\} \cup T = \{1,\ldots,6\}$.
If~$V_{i,i+2,i+4}$ and~$V_T$ are large then we may assume~$V_{i,i+2,i+4}$ is the disjoint union of a (possibly empty) clique and a (possibly empty) independent set.}
Suppose $i=1$ and $T \in \{\{1,2,3,4,6\},\{1,2,3,4,5,6\}\}$.
Suppose~$V_{1,3,5}$ and~$V_T$ are large.
If $x \in V_T$ and $y,y' \in V_{1,3,5}$ with $x,y,y'$ pairwise non-adjacent then $G[y,y',v_2,x,v_4]$ is a $2P_1+\nobreak P_3$, a contradiction.
Therefore every vertex of~$V_T$ is complete to all but at most one component of~$G[V_{1,3,5}]$.
If $x \in V_T$ and $y,y' \in V_{1,3,5}$ with $x,y,y'$ pairwise adjacent then $G[x,v_1,y,v_2,y']$ is a $\overline{2P_1+P_3}$, a contradiction.
Therefore every vertex of~$V_T$ has at most one neighbour in each component of~$G[V_{1,3,5}]$.
It follows that~$G[V_{1,3,5}]$ has at most one non-trivial component.
The claim follows by symmetry.

\clm{\label{clm4:V1245-empty}For $i \in \{1,2,3\}$, we may assume that~$V_{i,i+1,i+3,i+4}$ is empty.}
Suppose that~$V_{i,i+1,i+3,i+4}$ is not empty for some $i \in \{1,2,3\}$.
Then by Claim~\ref{clm4:V_S-large-or-empty}, this set must be large.
By Claim~\ref{clm4:leq3-V1245-or-V12345-orV123456}, in this case only the following sets can be large:
$V_{1,3,5}$, $V_{2,4,6}$, $V_{1,2,4,5}$, $V_{2,3,5,6}$ and~$V_{3,4,6,1}$.
By Claim~\ref{clm4:V135-V1234V12345-V135-indep-comp}, $V_{1,3,5}$ and~$V_{2,4,6}$ are complete to $V_{1,2,4,5}$, $V_{2,3,5,6}$ and~$V_{3,4,6,1}$.
By Fact~\ref{fact:bip}, we may apply a bipartite complementation between $V_{1,3,5} \cup V_{2,4,6}$ and $V_{1,2,4,5} \cup V_{2,3,5,6} \cup V_{3,4,6,1}$.
By Fact~\ref{fact:bip}, we can also apply bipartite complementations between~$\{v_1,v_2,v_4,v_5\}$ and~$V_{1,2,4,5}$, between~$\{v_2,v_3,v_5,v_6\}$ and~$V_{2,3,5,6}$ and between~$\{v_3,v_4,v_6,v_1\}$ and~$V_{3,4,6,1}$.
This disconnects $G'=G[V_{1,2,4,5} \cup V_{2,3,5,6} \cup V_{3,4,6,1}]$ from the rest of the graph.

It remains to show that~$G'$ has bounded clique-width.
By Claim~\ref{clm4:V1245V12345V12456V123456-indep}, $V_{1,2,4,5}$, $V_{2,3,5,6}$ and~$V_{3,4,6,1}$ are independent sets.
By Lemma~\ref{l-victor}, it follows that~$G'$ has bounded clique-width.
We may therefore assume that $V_{1,2,4,5} \cup V_{2,3,5,6} \cup V_{3,4,6,1} = \emptyset$.
The claim follows.

\clm{\label{clm4:V12345-V123456-empty}For $T \subseteq \{1,\ldots,6\}$ with $|T| \geq 5$ we may assume that~$V_T$ is empty.}
Suppose there is a $T \subseteq \{1,\ldots,6\}$ with $|T| \geq 5$ such that~$V_T$ is large.
By Claim~\ref{clm4:leq3-V1245-or-V12345-orV123456}, only the following sets can be large: $V_{1,3,5}$, $V_{2,4,6}$ and~$V_T$.
By Claims~\ref{clm4:V135-V1234V12345-V135-indep-comp} and~\ref{clm4:V135-V12346-or-V123456-V135-clique-plus-indep}, each of~$V_{1,3,5}$ and~$V_{2,4,6}$ is the union of a (possibly empty) clique and a (possibly empty) independent set.
By Claim~\ref{clm4:V1245V12345V12456V123456-indep}, $V_T$ is an independent set.
By Fact~\ref{fact:del-vert}, we may delete the vertices of the original cycle~$C$.
We obtain a graph whose vertex set can be partitioned into at most two cliques and at most three independent sets, so~$G$ has bounded clique-width by Lemma~\ref{l-victor}.
The claim follows.

\medskip
\noindent
Only two sets~$V_S$ remain that may be non-empty, namely~$V_{1,3,5}$ and~$V_{2,4,6}$.
If one of these sets is empty, then by Fact~\ref{fact:del-vert} we may delete the vertices of the original cycle~$C$.
Claim~\ref{clm4:V135-P3-free} implies that the resulting graph is a disjoint union of cliques, and so has clique-width at most~$2$.
We may therefore assume that both~$V_{1,3,5}$ and~$V_{2,4,6}$ are non-empty, in which case they are both large by Claim~\ref{clm4:V_S-large-or-empty}.

Suppose $x \in V_{1,3,5}$ and $y,y' \in V_{2,4,6}$ and these three vertices are pairwise non-adjacent.
Then $G[y,y',v_1,x,v_3]$ is a $2P_1+\nobreak P_3$, a contradiction.
Therefore each vertex of~$V_{1,3,5}$ is complete to all but at most one component of~$G[V_{2,4,6}]$.
Similarly, each vertex of~$V_{2,4,6}$ is complete to all but at most one component of~$G[V_{1,3,5}]$.

First consider the case where~$G[V_{2,4,6}]$ contains at least three non-trivial components.
Every vertex in~$V_{1,3,5}$ must be complete to at least two of these components.
If $x,x' \in V_{1,3,5}$ are adjacent then they must both be complete to a common non-trivial component~$C'$ of~$G[V_{2,4,6}]$.
Let $y,y' \in V(C')$.
Then $G[x,x',y,v_1,y']$ is a $\overline{2P_1+ P_3}$, a contradiction.
It follows that~$V_{1,3,5}$ is an independent set.
Note that this implies that every vertex of~$V_{2,4,6}$ has at most one non-neighbour in~$V_{1,3,5}$.
By Fact~\ref{fact:bip}, we may apply a bipartite complementation between~$V_{1,3,5}$ and~$V_{2,4,6}$.
Let~$G'$ be the resulting graph.
In~$G'$, every vertex in~$V_{1,3,5}$ has neighbours in at most one component of~$G[V_{2,4,6}]$ and each vertex of~$V_{2,4,6}$ has at most one neighbour in~$V_{1,3,5}$.
This means that every component of~$G[V_{2,4,6}]$ lies in a different component of~$G'$.
It suffices to show that the components of~$G'$ have bounded clique-width.
Let~$C''$ be such a component of~$G'$.
By Fact~\ref{fact:comp}, we may apply a complementation to $V_{2,4,6} \cap V(C'')$.
We obtain a disjoint union of stars (some of which may be isolated vertices).
Since stars have clique-width at most~$2$, we have shown that in this case~$G$ has bounded clique-width.

We may therefore assume that~$V_{2,4,6}$ contains at most two non-trivial components.
By symmetry, we may also assume that~$V_{1,3,5}$ contains at most two non-trivial components.
This means that each of these sets consists of the disjoint union of at most two cliques and at most one independent set.
Let~$K^1$ and~$K^2$ be the two cliques and~$I^1$ be the independent set in~$V_{1,3,5}$.
Let~$K^3$ and~$K^4$ be the two cliques and~$I^2$ be the independent set in~$V_{2,4,6}$.
(We allow the case where some of the sets~$K^i$ or~$I^j$ are empty.)
Also note that in~$G$, $K^1$ is anti-complete to~$K^2$ and~$K^3$ is anti-complete to~$K^4$.
By Fact~\ref{fact:del-vert} and Lemma~\ref{lem:comp-anti}, we may assume that each clique~$K^i$ is either complete or anti-complete to each independent set~$I^j$.
If a clique~$K^i$ is complete to~$I^j$ then by Fact~\ref{fact:bip} we may apply a bipartite complementation between these sets.
This removes all edges between $K^1 \cup K^2 \cup K^3 \cup K^4$ and $I^1 \cup I^2$.
Now by Fact~\ref{fact:del-vert} and Lemma~\ref{lem:matching-comatching} we may assume that the edges between each pair of~$K^1$, $K^2$, $K^3$ and~$K^4$ and the edges between~$I^1$ and~$I^2$ either form a matching or a co-matching.
If the edges between two such sets form a co-matching, by Fact~\ref{fact:bip} we may apply a bipartite complementation between these sets.
Finally, by Fact~\ref{fact:comp}, we may complement each clique~$K^i$.
Let~$G''$ be the resulting graph and note that in this graph it is still the case that $K^1 \cup K^2 \cup K^3 \cup K^4$ is anti-complete to $I^1 \cup I^2$.
In~$G''$ the edges between~$I^1$ and~$I^2$ form a matching, so $G''[I^1 \cup I^2]$ has maximum degree at most~$1$ and thus clique-width at most~$2$.
In~$G''$ the sets~$K^1$, $K^2$, $K^3$ and~$K^4$ are independent and the edges between each pair of these sets forms a matching.
In fact, $K^1$ is anti-complete to~$K^2$ and~$K^3$ is anti-complete to~$K^4$.
Therefore $G''[K^1 \cup K^2 \cup K^3 \cup K^4]$ has maximum degree at most~$2$, and therefore clique-width at most~$4$ by Lemma~\ref{lem:atmost-2}.
It follows that~$G''$ has bounded clique-width and therefore~$G$ also has bounded clique-width.
This completes the proof of the lemma.\qedllncs
\end{proof}

\begin{lemma}\label{lem:2P1P3-co-2P1P3-C6-co-C6-prime-graphs}
Every prime $(2P_1+P_3,\overline{2P_1+P_3}, C_6, \overline{C_6})$-free graph is either $K_7$-free or $\overline{K_7}$-free.
\end{lemma}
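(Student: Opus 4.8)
The plan is to argue by contradiction: suppose a prime $(2P_1+P_3,\overline{2P_1+P_3},C_6,\overline{C_6})$-free graph $G$ contains an induced clique $K$ on seven vertices and an induced independent set $I$ on seven vertices, and to derive a contradiction with primality. Since the class is closed under complementation and the statement is self-complementary, I may complement $G$ freely whenever convenient.

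The first step is a local structural claim. \emph{If $W$ is a clique with $|W|\ge 4$ and $v\notin W$, then $v$ has at most one neighbour or at most one non-neighbour in $W$.} Indeed, if $v$ had neighbours $a,b$ and non-neighbours $c,d$ in $W$, then $G[v,a,b,c,d]$ would be isomorphic to $\overline{2P_1+P_3}$ (the set $\{a,b,c,d\}$ induces a $K_4$ and $v$ is joined to exactly $a,b$), which is forbidden. Complementing gives the dual statement for independent sets. Applying these to $K$ and to $I$, every vertex outside $K$ is \emph{almost-complete} (misses at most one vertex of $K$) or \emph{almost-anti-complete} (hits at most one vertex of $K$), and symmetrically for $I$.

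Next I would analyse the edges between $K$ and $I$. Writing $I_F,I_E$ for the vertices of $I\setminus K$ that are almost-complete, resp.\ almost-anti-complete, to $K$, and $K_F,K_E$ analogously with respect to $I$, a double count of the edges between $I_F$ and $K_E$ (each vertex of $I_F$ has all but at most one of $K_E$ as a neighbour, while each vertex of $K_E$ has at most one neighbour in $I$) forces $|I_F|\le 2$ as soon as $|K_E|\ge 2$, and symmetrically $|K_F|\le 2$ as soon as $|I_E|\ge 2$. Hence, up to a bounded number of exceptional vertices and after possibly complementing, $G[K\cup I]$ reduces to a large clique that is almost-anti-complete to a large independent set, the edges between them forming a near-matching. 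The slack in the threshold $7$ guarantees that, after discarding these $O(1)$ exceptions, a clique $K'$ and an independent set $I'$ of size at least four each remain, with $K'$ anti-complete to $I'$. At this point I expect $C_6$- and $\overline{C_6}$-freeness to be used to rule out the remaining partial attachments: a vertex adjacent to part of $K'$ and part of $I'$ in a crossing pattern, together with two vertices of $K'$ and two of $I'$, would close an induced $C_6$ or its complement.

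The final, and hardest, step is to convert this near-homogeneous picture into an exact non-trivial module, contradicting primality. Because $G$ is prime, neither $K'$ nor $I'$ can be a module, so some vertex $w$ distinguishes $K'$ (or $I'$); by the local claim $w$ can differ from being complete or anti-complete on at most one vertex of $K'$, so it is an \emph{exceptional} distinguisher. The core of the proof is to show that the set of such exceptional distinguishers is too sparse to break up all of $K'$: two vertices of $K'$ that are never an exception of any outside vertex are adjacent, lie in the clique, and are indistinguishable from outside, hence form a module of size two. I expect the main obstacle to be the bookkeeping of the interaction of the exceptional vertices of $K'$ with those of $I'$ and with each other, using the forbidden $C_6,\overline{C_6}$ to bound them, and verifying that seven vertices leave exactly enough room to find two un-exceptional vertices in $K'$ (or in $I'$). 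This yields the required module and the desired contradiction.
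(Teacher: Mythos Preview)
Your overall strategy---assume both a $K_7$ and a $7P_1$, exploit the ``almost-complete/almost-anti-complete'' dichotomy, and derive a non-trivial module contradicting primality---matches the paper's. Your local structural claim and the reduction (via Lemma~\ref{lem:comp-anti}) to a large clique $K'$ anti-complete to a large independent set $I'$ are correct and are also the paper's opening moves.

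The gap is in your final step. You aim to find two vertices of $K'$ that are a module, arguing that the outside ``exceptional distinguishers'' are too sparse to hit every vertex of $K'$. But there is no bound on the number of outside vertices: even though each contributes at most one exception in $K'$, collectively they may cover every vertex of $K'$ many times, so pigeonhole on $|K'|=7$ alone cannot work. Your proposal acknowledges this (``I expect the main obstacle to be the bookkeeping \ldots\ using the forbidden $C_6,\overline{C_6}$'') but does not say how $C_6$-freeness actually limits the exceptions.

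The paper resolves this differently and more delicately. Rather than seeking a module inside $K'$, it partitions \emph{all} of $V(G)\setminus(C\cup I)$ according to attachment to~$I$: a set $A$ complete to $C'$, a set $V_{I'}$ complete to $I'$, and for each $x\in I$ a clique $U_x=\{x\}\cup V_x$ of vertices whose unique neighbour in $I$ is $x$. One then shows that each $U_x$ ($x\in I'$) is anti-complete to $C'$ and to every other $U_y$, and complete to $V_{I'}$; thus the only possible distinguishers of a set $U_x\cup U_y$ lie in $A\cup R_1$. The key use of $C_6$-freeness is precisely here: if three vertices of $A\cup R_1$ had their unique ``missed'' set among three different $U_x$'s, those six vertices would induce a $C_6$. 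Hence $A\cup R_1$ is complete to all but at most two of the $U_x$, and since $|I'|\ge5$ one finds $x,y\in I'$ with $U_x\cup U_y$ a non-trivial module. The candidate module is therefore $U_x\cup U_y$, not a pair inside $K'$, and the $C_6$ appears in the triple-distinguisher configuration, not in a ``partial attachment'' pattern between $K'$ and $I'$ as you suggested.
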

\begin{proof}
\setcounter{ctrclaim}{0}
Let~$G$ be a prime $(2P_1+P_3,\overline{2P_1+P_3}, C_6, \overline{C_6})$-free graph.
Suppose, for contradiction, that~$G$ contains an induced~$K_7$ and an induced~$\overline{K_7}$.
We will show that in this case the graph~$G$ is not prime.
Note that any induced~$K_7$ and induced~$\overline{K_7}$ in~$G$ can share at most one vertex.
We may therefore assume that~$G$ contains a clique~$C$ on at least six vertices and a vertex-disjoint independent set~$I$ on at least six vertices.
Furthermore, we may assume that~$C$ is a maximum clique in $G \setminus I$ and~$I$ is a maximum independent set in $G \setminus C$ (if not, then replace~$C$ or~$I$ with a bigger clique or independent set, respectively).

By Lemma~\ref{lem:comp-anti}, there exist sets $R_1 \subset C$ and $R_2 \subset I$ each of size at most~$3$ such that $C' = C \setminus R_1$ is either complete or anti-complete to $I' = I \setminus R_2$.
Without loss of generality, we may assume that~$R_1$ and~$R_2$ are minimal, in the sense that the above property does not hold if we remove any vertex from~$R_1$ or~$R_2$.
Note that the class of prime $(2P_1+P_3,\overline{2P_1+P_3}, C_6, \overline{C_6})$-free graphs containing an induced~$K_7$ and an induced~$\overline{K_7}$ is closed under complementation.
Therefore, complementing~$G$ if necessary (in which case the sets~$I$ and~$C$ will be swapped, and the sets~$R_1$ and~$R_2$ will be swapped), we may assume that~$C'$ is anti-complete to~$I'$.

\clm{\label{clm:R1_R2_leq_1}$|R_1| \leq 1$ and $|R_2| \leq 1$.}
By construction, $R_1$ and~$R_2$ each contain at most three vertices and~$I'$ and~$C'$ each contain at least three vertices.
Since~$R_1$ (resp.~$R_2$) is minimal, every vertex of~$R_1$ (resp.~$R_2$) has at least one neighbour in~$I'$ (resp.~$C'$).

Choose $i_1,i_2 \in I'$ arbitrarily and suppose, for contradiction, that $y \in R_2$ is not complete to~$C'$.
Then~$y$ must have a neighbour $c_1 \in C'$ and a non-neighbour~$c_2 \in C'$, so $G[i_1,i_2,y,c_1,c_2]$ is a $2P_1+\nobreak P_3$, a contradiction.
Therefore~$R_2$ is complete to~$C'$.
If $y,y' \in R_2$ then for arbitrary $c_1 \in C'$, the graph $G[i_1,i_2,y,c_1,y']$ is a $2P_1+\nobreak P_3$, a contradiction.
It follows that $|R_2| \leq 1$.

Choose $c_1,c_2 \in C'$ arbitrarily.
Suppose, for contradiction, that $x \in R_1$ has two non-neighbours $i_1,i_2 \in I'$.
Recall that~$x$ must have a neighbour $i_3 \in I'$, so $G[i_1,i_2,i_3,x,c_1]$ is a $2P_1+\nobreak P_3$, a contradiction.
Therefore every vertex of~$R_1$ has at most one non-neighbour in~$I'$.
Suppose, for contradiction, that $x,x' \in R_1$.
Since~$I'$ contains at least three vertices, there must be a vertex $i_1 \in I'$ that is a common neighbour of~$x$ and~$x'$.
Now $G[x,x',c_1,i_1,c_2]$ is a $\overline{2P_1+P_3}$, a contradiction.
It follows that $|R_1| \leq 1$.
This completes the proof of Claim~\ref{clm:R1_R2_leq_1}.

\medskip
\noindent
Note that Claim~\ref{clm:R1_R2_leq_1} implies that $|C'| \geq 5$ and $|I'| \geq 5$.
Let~$A$ be the set of vertices in $V \setminus (C \cup I)$ that are complete to~$C'$.
If $x \in A$ is adjacent to~$y \in R_1$ then by Claim~\ref{clm:R1_R2_leq_1} $C \cup \{x\}$ is a bigger clique than~$C$, contradicting the maximality of~$C$.
It follows that~$A$ is anti-complete to~$R_1$.
If $x,y \in A$ are adjacent then by Claim~\ref{clm:R1_R2_leq_1}, $(C \cup \{x,y\})\setminus R_1$ is a bigger clique than~$C$, contradicting the maximality of~$C$.
It follows that~$A$ is an independent set.
Furthermore, by the maximality of~$I$ and the definition of~$A$, every vertex in $V \setminus (C \cup I \cup A)$ has a neighbour in~$I$ and non-neighbour in~$C'$.

\begin{sloppypar}
\clm{\label{clm:partition}Let~$x$ be a vertex in $V \setminus (C \cup I \cup A)$.
Then either~$x$ is complete to~$I'$, or~$x$ has exactly one neighbour in~$I$.}
Suppose, for contradiction, that~$x$ has a non-neighbour~$z$ in~$I'$, and two neighbours $y,y' \in\nobreak I$.
Now~$x$ cannot have another non-neighbour $z' \in I\setminus\{z\}$, otherwise $G[z,z',y,x,y']$ would be a $2P_1+\nobreak P_3$.
Therefore~$x$ must be complete to $I \setminus \{z\}$.
In particular, since $|I'| \geq 5$, this means that~$x$ has two neighbours in~$I'$, say~$y_1$ and~$y_2$ (not necessarily distinct from~$y$ and~$y'$).
Recall that~$x$ must have a non-neighbour $c_1 \in C'$.
Now $G[c_1,z,y_1,x,y_2]$ is a $2P_1+\nobreak P_3$.
This contradiction completes the proof of Claim~\ref{clm:partition}.
\end{sloppypar}

\medskip
\noindent
By Claim~\ref{clm:partition} we can partition the vertex set $V \setminus (C \cup I \cup A)$ into subsets~$V_{I'}$ and~$V_x$ for every $x \in I$, where~$V_{I'}$ is the set of vertices that are complete to~$I'$, and~$V_x$ is the set of vertices whose unique neighbour in~$I$ is~$x$.
Let $U_x=V_x \cup \{x\}$.

\clm{\label{clm:anti-complete_to_C}For all $x \in I'$, $U_x$ is anti-complete to~$C'$.}
Suppose $x \in I'$.
Clearly~$x$ is anti-complete to~$C'$.
Suppose, for contradiction, that $y \in U_x\setminus \{x\} = V_x$ has a neighbour $z \in C'$ and choose $u,v \in I' \setminus \{x\}$.
Then $G[u,v,x,y,z]$ is a $2P_1+\nobreak P_3$.
This contradiction completes the proof of Claim~\ref{clm:anti-complete_to_C}.

\clm{\label{clm:Ux_is_clique}For every $x \in I$, the set~$U_x$ is a clique.}
Note that $x \in I$ is adjacent to all other vertices of~$U_x$, by definition.
If $y,z \in V_x$ are non-adjacent then $(I \setminus \{x\}) \cup \{y,z\}$ would be a bigger independent set than~$I$.
This contradiction completes the proof of Claim~\ref{clm:Ux_is_clique}.

\clm{\label{clm:Ux_anti-complete_to_Uy}If $x,y \in I$ are distinct, then~$U_x$ is anti-complete to~$U_y$.}
Clearly~$x$ is anti-complete to~$U_y$ and~$y$ is anti-complete to~$U_x$.
Suppose, for contradiction, that $x' \in U_x \setminus \{x\}$ is adjacent to $y' \in U_y \setminus \{y\}$.
Choose $u,v \in I \setminus \{x,y\}$.
Then $G[u,v,x,x',y']$ is a $2P_1+\nobreak P_3$.
This contradiction completes the proof of Claim~\ref{clm:Ux_anti-complete_to_Uy}.

\clm{\label{clm:Ux_complete_to_V_I}For every $x \in I'$, the set~$U_x$ is complete to~$V_{I'}$.}
Clearly~$x$ is complete to~$V_{I'}$, by definition.
Suppose, for contradiction that $x' \in U_x \setminus \{x\}$ is non-adjacent to $y \in V_{I'}$.
Since $y \notin A$, the vertex~$y$ must have a non-neighbour $c_1 \in C'$ and note that~$x'$ is non-adjacent to~$c_1$ by Claim~\ref{clm:anti-complete_to_C}.
Choose $u,v \in I' \setminus \{x\}$.
Then $G[c_1,x',u,y,v]$ is a $2P_1+\nobreak P_3$.
This contradiction completes the proof of Claim~\ref{clm:Ux_complete_to_V_I}.

\medskip
\noindent
Suppose $x \in I'$.
Claim~\ref{clm:Ux_is_clique} implies that~$U_x$ is a clique, Claim~\ref{clm:anti-complete_to_C} implies that~$U_x$ is anti-complete to~$C'$ and Claim~\ref{clm:Ux_complete_to_V_I} implies that~$U_x$ is complete to~$V_{I'}$.
Furthermore for all $y \in I \setminus \{x\}$, Claim~\ref{clm:Ux_anti-complete_to_Uy} implies that~$U_x$ is anti-complete to~$U_y$.
We conclude that given any two vertices $x,y \in I'$, no vertex in $V \setminus (A \cup R_1 \cup U_x \cup U_y)$ can distinguish the set $U_x \cup U_y$.
In the remainder of the proof, we will show that there exist $x,y \in I'$ such that no vertex of $A \cup R_1$ distinguishes the set $U_x \cup U_y$, meaning that $U_x \cup U_y$ is a non-trivial module, contradicting the assumption that~$G$ is prime.

\clm{\label{clm:y_anti-or-almost-complete_to_Ux}If $u \in A \cup R_1$ then either~$u$ is anti-complete to~$U_x$ for all $x \in I'$ or else~$u$ is complete to~$U_x$ for all but at most one~$x \in I'$.}
Suppose, for contradiction, that the claim does not hold for a vertex $u \in A \cup R_1$.
Then~$u$ must have a neighbour $x' \in U_x$ for some $x \in I'$ and must have non-neighbours $y' \in U_y$ and $z' \in U_z$ for some $y,z \in I'$ with $y \neq z$.
Since $|I'| \geq 5$, we may also assume that $x \notin \{y,z\}$.
Choose $c_1 \in C'$ arbitrarily.
By Claim~\ref{clm:anti-complete_to_C}, $c_1$ is non-adjacent to~$x'$, $y'$ and~$z'$.
It follows that $G[y',z',c_1,u,x']$ is a $2P_1+\nobreak P_3$.
This contradiction completes the proof of Claim~\ref{clm:y_anti-or-almost-complete_to_Ux}.

\medskip
\noindent
Let~$A^*$ denote the set of vertices in $A \cup R_1$ that have a neighbour in~$U_x$ for some $x \in I'$.

\clm{\label{clm:A_cup_R1_complete_to_Ux}The set $A^*$ is complete to all, except possibly two, sets $U_x, x \in I'$.}
Suppose, for contradiction, that there are three different vertices $x,y,z \in I'$ such that~$A^*$ is complete to none of the sets $U_x, U_y$, and~$U_z$.
By Claim~\ref{clm:y_anti-or-almost-complete_to_Ux} and the definition of~$A^*$, every vertex in~$A^*$ is complete to at least two of the sets $U_x, U_y, U_z$.
Therefore there exist three vertices $u,v,w \in A^*$ such that:
\begin{itemize}
\item $u$ is not adjacent to some vertex $x' \in U_x$, but is complete to~$U_y$ and~$U_z$;
\item $v$ is not adjacent to some vertex $y' \in U_y$, but is complete to~$U_x$ and~$U_z$;
\item $w$ is not adjacent to some vertex $z' \in U_z$, but is complete to~$U_x$ and~$U_y$.
\end{itemize}
Therefore $G[u,y',w,x',v,z']$ is a~$C_6$.
This contradiction completes the proof of Claim~\ref{clm:A_cup_R1_complete_to_Ux}.

\medskip
\noindent
Now, since $|I'| \geq 5$, Claims~\ref{clm:y_anti-or-almost-complete_to_Ux} and~\ref{clm:A_cup_R1_complete_to_Ux} imply that there exist two distinct vertices $x,y \in I'$ such that every vertex of $A \cup R_1$ is either complete or anti-complete to $U_x \cup U_y$.
It follows that $U_x \cup U_y$ is a non-trivial module in~$G$, contradicting the fact that~$G$ is prime.
This completes the proof.\qedllncs
\end{proof}

Let~$G$ be a graph.
The \emph{chromatic number}~$\chi(G)$ of~$G$ is the minimum positive integer~$k$ such that~$G$ is $k$-colourable.
The \emph{clique number}~$\omega(G)$ of~$G$ is the size of a largest clique in~$G$.
A class~${\cal C}$ of graphs is {\em $\chi$-bounded} if there is a function~$f$ such that $\chi(G) \leq f(\omega(G))$ for all $G \in {\cal C}$.

\begin{lemma}[\cite{Gy87}]\label{lem:Pk-free-chi-bounded}
For every natural number~$k$ the class of $P_k$-free graphs is $\chi$-bounded.
\end{lemma}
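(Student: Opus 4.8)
The plan is to prove the stronger quantitative statement that every $P_k$-free graph $G$ satisfies $\chi(G)\le (k-1)^{\omega(G)-1}$, from which $\chi$-boundedness follows with $f(\omega)=(k-1)^{\omega-1}$. I would argue by induction on the clique number $\omega:=\omega(G)$. The base case $\omega\le 1$ is immediate: then $G$ has no edges and $\chi(G)\le 1=(k-1)^{0}$. Since the chromatic number of a graph is the maximum of the chromatic numbers of its connected components, it suffices to treat connected graphs, so from now on assume $G$ is connected with $\omega\ge 2$, and assume the bound holds for all $P_k$-free graphs of clique number at most $\omega-1$.

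For the inductive step I would fix an arbitrary root vertex $r$ and form the breadth-first layers $L_i=\{u:\operatorname{dist}(r,u)=i\}$. The basic observation is that every edge of $G$ joins two vertices lying in the same layer or in two consecutive layers, so no edge joins $L_i$ to $L_{i+2}$; in particular $r$ is complete to $L_1$, which already forces $\omega(G[L_1])\le\omega-1$. The aim is to reduce colouring $G$ to colouring pieces of strictly smaller clique number, to which the induction hypothesis applies, while paying only a factor of at most $k-1$. Combined with the inductive bound $(k-1)^{\omega-2}$ for clique number $\omega-1$, this yields the desired $(k-1)^{\omega-1}$.

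The heart of the argument, and the step I expect to be the main obstacle, is a structural claim that lets the clique number drop. The target is a partition of $V(G)$ into at most $k-1$ classes, each inducing a subgraph of clique number at most $\omega-1$; the number of classes is controlled because $P_k$-freeness forces any induced path issuing from $r$ to have fewer than $k$ vertices, so a layer-type index ranges over at most $k-1$ values. Proving that no class contains a clique of size $\omega$ is the delicate point: given a putative monochromatic $\omega$-clique, one wants to extend a longest induced path reaching it by an ancestor vertex so as to manufacture a forbidden induced $P_k$. The difficulty is precisely that such ancestors may carry additional chords into the clique, so the extension must be set up with care — choosing the attachment vertex of minimal (respectively maximal) index along the path, and treating the first layer separately using that $r$ dominates it. Pinning down the exact index function and verifying the clique drop is where essentially all of the work lies; the naive choices (plain BFS distance, or longest induced path length) can leave an edge of a maximum clique monochromatic, so the device has to be chosen more robustly.

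Once the partition claim is in hand the proof closes quickly: each of the at most $k-1$ classes is an induced, hence $P_k$-free, subgraph of clique number at most $\omega-1$, so by the induction hypothesis it is $(k-1)^{\omega-2}$-colourable, and using disjoint palettes on the classes gives a proper colouring of $G$ with at most $(k-1)\cdot(k-1)^{\omega-2}=(k-1)^{\omega-1}$ colours. For the present paper only the qualitative conclusion is needed, so it would in fact suffice to produce any bound $f(\omega)$ depending solely on $k$ and $\omega$; the recursion above does this, recovering the classical theorem of Gyárfás cited as~\cite{Gy87}.
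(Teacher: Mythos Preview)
The paper does not prove this lemma; it is quoted from Gy\'arf\'as. So your proposal must stand on its own.

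You set up the right induction on $\omega$ and aim at the right bound $(k-1)^{\omega-1}$, but the heart of the argument---a partition of $V(G)$ into at most $k-1$ classes each of clique number at most $\omega-1$---is never constructed. You say as much yourself: BFS layers are the natural candidate and, as you note, they fail. Concretely, take a triangle $c_1c_2c_3$, attach pendants $a,b,d$ to $c_1,c_2,c_3$ respectively, and join $a,b,d$ to a new root $r$; this graph is $P_6$-free with $\omega=3$, yet the whole maximum clique sits in BFS layer~$2$. You then appeal to an unspecified ``more robust device'', but supplying that device \emph{is} the proof, and it is absent.

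Gy\'arf\'as's argument does not go through a static partition at all. One grows an induced path while peeling off neighbourhoods. Set $H_1=G$ and pick any $v_1\in H_1$. Given $v_i$ with a neighbour in $H_i$, let $B_i=N(v_i)\cap V(H_i)$; since $v_i$ is complete to $B_i$ we have $\omega(G[B_i])\le\omega-1$, hence $\chi(G[B_i])\le(k-1)^{\omega-2}$ by induction. Let $H_{i+1}$ be a component of $H_i\setminus B_i$ of maximum chromatic number (also delete $v_1$ at the first step), and choose $v_{i+1}\in B_i$ with a neighbour in $H_{i+1}$. Because $v_j\in H_{i+1}\subseteq H_i\setminus N(v_i)$ for every $j\ge i+2$, the sequence $v_1v_2\cdots$ is an induced path, so it has at most $k-1$ terms; telescoping $\chi(H_i)\le\chi(G[B_i])+\chi(H_{i+1})$ then yields the bound. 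The $k-1$ pieces of small clique number are the successive neighbourhoods $B_i$ along one carefully grown induced path, not the levels of any rooted tree---which is precisely what sidesteps the obstruction you identified.
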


\begin{lemma}\label{lem:2P1P3-co-2P1P3-Kk-free-bdd-cw}
For every fixed~$k\geq 1$, the class of $(K_k,2P_1+\nobreak P_3,\overline{2P_1+P_3})$-free graphs has bounded clique-width.
\end{lemma}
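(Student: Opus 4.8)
The plan is to show that a $(K_k,2P_1+\nobreak P_3,\overline{2P_1+P_3})$-free graph~$G$ can be partitioned into a bounded number of cliques and independent sets, after which Lemma~\ref{l-victor} finishes the job. The key point is that forbidding~$K_k$ together with the fact that these graphs are $\chi$-bounded gives us a bound on the number of vertices needed for such a partition.

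First I would observe that~$G$ is $(2P_1+\nobreak P_3)$-free, and $2P_1+\nobreak P_3$ contains an induced~$P_3$, so in particular~$G$ is $P_5$-free (indeed $P_5 \si 2P_1+\nobreak P_3$, since deleting the two endpoints of~$P_5$ leaves an induced~$P_3$, but we need the reverse containment: $2P_1+\nobreak P_3 \ssi P_5$ is false, so instead note $2P_1+\nobreak P_3$ as a forbidden subgraph makes~$G$ have no long induced paths because any~$P_5$ contains $2P_1+\nobreak P_3$ as an induced subgraph — its first, second vertices together with the last three form the right configuration). More carefully, I would check that $P_6 \si 2P_1+\nobreak P_3$: taking a~$P_6$ on $a,b,c,d,e,f$, the vertices $a,b$ (nonadjacent to) $d,e,f$ giving $2P_1$ plus a $P_3$; thus~$G$ is $P_6$-free. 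By Lemma~\ref{lem:Pk-free-chi-bounded}, the class of $P_6$-free graphs is $\chi$-bounded, so there is a function~$f$ with $\chi(G) \leq f(\omega(G)) \leq f(k-1)$, since~$G$ is $K_k$-free means $\omega(G) \leq k-1$. Hence~$G$ has a proper colouring with at most $c := f(k-1)$ colours, which partitions~$V(G)$ into at most~$c$ independent sets.

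Next I would invoke Lemma~\ref{l-victor}, which states that $(2P_1+\nobreak P_3,\overline{2P_1+P_3})$-free graphs whose vertex set partitions into at most three cliques and at most three independent sets have bounded clique-width. The partition I obtained has at most~$c$ independent sets and \emph{zero} cliques, but~$c$ may exceed three, so Lemma~\ref{l-victor} does not apply directly. The cleanest fix is to strengthen Lemma~\ref{l-victor}'s conclusion: I would argue that the same proof technique (repeatedly applying Lemmas~\ref{lem:matching-comatching} and~\ref{lem:comp-anti} together with Facts~\ref{fact:del-vert}--\ref{fact:bip}) bounds the clique-width of any $(2P_1+\nobreak P_3,\overline{2P_1+P_3})$-free graph that is partitioned into \emph{any fixed number} of cliques and independent sets, not just three of each. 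Since~$c$ depends only on the fixed constant~$k$, this gives a partition into a bounded (depending only on~$k$) number of independent sets, and the clique-width is bounded.

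The main obstacle is the gap between Lemma~\ref{l-victor}'s hypothesis (at most three cliques and three independent sets) and the~$c$ independent sets that $\chi$-boundedness supplies. I expect the resolution to be routine: the argument in Lemma~\ref{l-victor} deletes $O(1)$ vertices per pair of parts (via Lemmas~\ref{lem:comp-anti} and~\ref{lem:matching-comatching}), then uses bipartite complementations to make the remaining edges between any two parts form a matching or be empty, and finally complements the cliques to reduce to bounded maximum degree (Lemma~\ref{lem:atmost-2}). With~$c$ independent sets the number of pairs is~$\binom{c}{2}$, still a constant, so the number of deleted vertices and applied complementations remains bounded; after complementing within parts one still reaches a graph of bounded maximum degree. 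I would therefore either restate and reprove Lemma~\ref{l-victor} in this more general form or cite it with a remark that its proof only uses that the number of parts is a fixed constant. Either way the clique-width of~$G$ is bounded by a constant depending only on~$k$, completing the proof.
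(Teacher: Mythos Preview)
Your overall strategy---use $\chi$-boundedness to partition into boundedly many independent sets, then appeal to a Lemma~\ref{l-victor}-style argument---matches the paper's in spirit, but the key step has a genuine gap. The proof of Lemma~\ref{l-victor} ends by reducing to a graph of maximum degree at most~$2$ and invoking Lemma~\ref{lem:atmost-2}. With~$c$ independent sets instead of three, the same manipulations yield only maximum degree at most~$c-1$, and for $c\ge 4$ this is \emph{not} enough: graphs of bounded maximum degree do not in general have bounded clique-width. Concretely, a $1$-subdivided wall (which has unbounded clique-width by Lemma~\ref{lem:wall}) can be partitioned into five independent sets with a matching between every pair---take the bipartition $A'\cup A''$ of the underlying wall together with a partition $B_1\cup B_2\cup B_3$ of the subdivision vertices coming from a proper $3$-edge-colouring of the wall. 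So ``bounded number of independent sets with pairwise matchings'' is not a bounded-clique-width class, and your claimed routine generalisation of Lemma~\ref{l-victor} fails at exactly the step where you invoke Lemma~\ref{lem:atmost-2}.

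The paper closes this gap with extra structure you do not use. It first reduces to the $C_6$-free case via Lemma~\ref{lem:2P1+P_3-co-2P_1+P_3-free-c6-non-free}; then, after arranging that between any two colour classes the edges form a matching or the classes are complete (the $C_6$-freeness forces any co-matching to have at most two non-edges, which are deleted), it proves a structural fact: if $x\in V_i$ has matching-neighbours $y\in V_j$ and $z\in V_k$, then $(2P_1+P_3)$-freeness forces $y$ and~$z$ to be adjacent, and moreover forces $V_j,V_k$ \emph{not} to be complete to each other. Hence after complementing away the complete pairs, the resulting graph is $P_3$-free (a disjoint union of cliques), not merely of bounded degree. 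That is the missing idea.

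A minor point: $P_6$ does \emph{not} contain $2P_1+P_3$ as an induced subgraph (in your proposed witness, $a$ and~$b$ are adjacent). Deleting any single vertex of~$P_6$ leaves $P_5$, $P_1+P_4$, or $P_2+P_3$. The correct containment is $2P_1+P_3\ssi P_7$ (delete the second and sixth vertices), which is what the paper uses; this does not affect the $\chi$-boundedness step but should be corrected.
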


\begin{proof}
Fix a constant~$k\geq 1$ and let~$G$ be a $(K_k,2P_1+\nobreak P_3,\overline{2P_1+P_3})$-free graph.
By Lemma~\ref{lem:2P1+P_3-co-2P_1+P_3-free-c6-non-free}, we may assume that~$G$ is $C_6$-free.
Since~$G$ is $(2P_1+\nobreak P_3)$-free, it is $P_7$-free, so by Lemma~\ref{lem:Pk-free-chi-bounded} it has chromatic number at most~$\ell$ for some constant~$\ell$.
This means that we can partition the vertices of~$G$ into~$\ell$ independent sets $V_1,\ldots,V_\ell$ (some of which may be empty).
By Lemma~\ref{lem:matching-comatching}, deleting finitely many vertices (which we may do by Fact~\ref{fact:del-vert}), we may assume that for all distinct $i,j \in \{1,\ldots,\ell\}$, the edges between~$V_i$ and~$V_j$ form a matching or a co-matching.
Since~$G$ is $C_6$-free, if the vertices between~$V_i$ and~$V_j$ form co-matching, this co-matching can contain at most two non-edges.
Therefore, by deleting finitely many vertices (which we may do by Fact~\ref{fact:del-vert}), we may assume that the edges between~$V_i$ and~$V_j$ form a matching or~$V_i$ and~$V_j$ are complete to each other.
By deleting finitely many vertices (which we may do by Fact~\ref{fact:del-vert}), we may assume that each set~$V_i$ is either empty or contains at least five vertices.

Suppose the edges from~$V_i$ to~$V_j$ and the edges from~$V_i$ to~$V_k$ form a matching and that there is a vertex $x \in V_i$ that has a neighbour $y \in V_j$ and a neighbour $z \in V_k$.
Then~$y$ must be adjacent to~$z$, otherwise for $x',x'' \in V_i \setminus \{x\}$ the graph $G[x',x'',y,x,z]$ would be a $2P_1+\nobreak P_3$, a contradiction.
If~$V_j$ is complete to~$V_k$ then for $y',y'' \in V_j$, $z' \in V_k$ and $x',x'' \in V_i \setminus (N(y') \cup N(y'') \cup N(z'))$ (such vertices exist since each of $y',y''$ and~$z'$ have at most one neighbour in~$V_i$ and~$V_i$ contains at least five vertices) we have $G[x',x'',y',z',y'']$ is a $2P_1+\nobreak P_3$, a contradiction.
Therefore the edges between~$V_j$ and~$V_k$ form a matching.

Now for each $i,j \in \{1,\ldots,\ell\}$ with $i<j$, if~$V_i$ is complete to~$V_j$, we apply a bipartite complementation between~$V_i$ and~$V_j$ (which we may do by Fact~\ref{fact:bip}).
Let~$G'$ be the resulting graph.
The previous paragraph implies that if~$x$ has two neighbours~$y$ and~$z$ in~$G'$ then~$y$ is adjacent to~$z$ in~$G$, so $G'$ is $P_3$-free.
Therefore~$G'$ is a disjoint union of cliques, so it has clique-width at most~$2$.\qedllncs
\end{proof}

We are now ready to prove our main result.

\begin{theorem}\label{thm:2P1P3-co-2P1P3-bounded-cw}
The class of $(2P_1+P_3,\overline{2P_1+P_3})$-free graphs has bounded clique-width.
\end{theorem}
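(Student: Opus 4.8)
The plan is to assemble the structural lemmas proved above, exploiting the fact that the class of $(2P_1+\nobreak P_3,\overline{2P_1+P_3})$-free graphs is \emph{closed under complementation} (the forbidden pair is closed under complementation) together with the inequality $\cw(\overline{G})\leq 2\cw(G)$. By Lemma~\ref{lem:prime} it suffices to bound the clique-width of the prime graphs in the class, so I would fix an arbitrary prime $(2P_1+\nobreak P_3,\overline{2P_1+P_3})$-free graph~$G$ and show that $\cw(G)$ is at most some absolute constant.

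First I would dispose of the graphs that contain a short cycle. If~$G$ contains an induced~$C_5$ or an induced~$C_6$, then $G$ has bounded clique-width by Lemma~\ref{lem:2P1+P_3-co-2P_1+P_3-free-c5-non-free} or Lemma~\ref{lem:2P1+P_3-co-2P_1+P_3-free-c6-non-free}, respectively. To handle an induced~$\overline{C_6}$ I would pass to the complement: if~$G$ contains an induced~$\overline{C_6}$ then~$\overline{G}$ contains an induced~$C_6$, and since the class is closed under complementation, $\overline{G}$ is again in the class, so $\cw(\overline{G})$ is bounded by Lemma~\ref{lem:2P1+P_3-co-2P_1+P_3-free-c6-non-free}; hence $\cw(G)\leq 2\cw(\overline{G})$ is bounded as well. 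Thus I may assume that~$G$ is a prime $(2P_1+\nobreak P_3,\overline{2P_1+P_3},C_5,C_6,\overline{C_6})$-free graph.

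It remains to treat the $(C_5,C_6,\overline{C_6})$-free case. Being prime and $(2P_1+\nobreak P_3,\overline{2P_1+P_3},C_6,\overline{C_6})$-free, Lemma~\ref{lem:2P1P3-co-2P1P3-C6-co-C6-prime-graphs} tells me that~$G$ is either $K_7$-free or $\overline{K_7}$-free. In the first case~$G$ is $(K_7,2P_1+\nobreak P_3,\overline{2P_1+P_3})$-free, so Lemma~\ref{lem:2P1P3-co-2P1P3-Kk-free-bdd-cw} with $k=7$ bounds its clique-width. In the second case~$\overline{G}$ is $(K_7,2P_1+\nobreak P_3,\overline{2P_1+P_3})$-free and lies in the class, so $\cw(\overline{G})$ is bounded by the same lemma and therefore so is $\cw(G)\leq 2\cw(\overline{G})$. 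In every case $\cw(G)$ is bounded by a constant, and Lemma~\ref{lem:prime} then transfers this bound to the whole class.

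Since all the genuine difficulty has already been absorbed into the earlier lemmas, the only point requiring care here is the bookkeeping of complementation: the argument leans repeatedly on the self-complementary nature of the forbidden pair, which is precisely what lets me reflect the $C_6$ result onto~$\overline{C_6}$ and the $K_7$-free case onto the $\overline{K_7}$-free subcase. The main obstacle is therefore not in this assembly but lies upstream, in the very long neighbourhood case analysis of Lemma~\ref{lem:2P1+P_3-co-2P_1+P_3-free-c5-non-free} (and its $C_6$ counterpart); granted those, the present theorem follows by the routine dichotomy above.
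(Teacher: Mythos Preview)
Your proposal is correct and follows essentially the same route as the paper's proof: reduce to prime graphs via Lemma~\ref{lem:prime}, handle the $C_6$ and $\overline{C_6}$ cases with Lemma~\ref{lem:2P1+P_3-co-2P_1+P_3-free-c6-non-free} plus complementation, then apply Lemma~\ref{lem:2P1P3-co-2P1P3-C6-co-C6-prime-graphs} followed by Lemma~\ref{lem:2P1P3-co-2P1P3-Kk-free-bdd-cw}. The only cosmetic differences are that you cite the explicit inequality $\cw(\overline{G})\leq 2\cw(G)$ where the paper invokes Fact~\ref{fact:comp}, and you separate out the $C_5$ case, which the paper absorbs into Lemma~\ref{lem:2P1+P_3-co-2P_1+P_3-free-c6-non-free} (whose proof already begins by reducing to the $C_5$-free situation).
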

\begin{proof}
Let~$G$ be a $(2P_1+P_3,\overline{2P_1+P_3})$-free graph.
By Lemma~\ref{lem:prime}, we may assume that~$G$ is prime.
If~$G$ contains an induced~$C_6$ then we are done by Lemma~\ref{lem:2P1+P_3-co-2P_1+P_3-free-c6-non-free}.
If~$G$ contains an induced~$\overline{C_6}$ then we are done by Lemma~\ref{lem:2P1+P_3-co-2P_1+P_3-free-c6-non-free} and Fact~\ref{fact:comp}.
We may therefore assume that~$G$ is also $(C_6,\overline{C_6})$-free.
By Lemma~\ref{lem:2P1P3-co-2P1P3-C6-co-C6-prime-graphs}, we may assume that~$G$ is either $K_7$-free or $\overline{K_7}$-free.
By Fact~\ref{fact:comp}, we may assume that~$G$ is $K_7$-free.
Lemma~\ref{lem:2P1P3-co-2P1P3-Kk-free-bdd-cw} completes the proof.\qedllncs
\end{proof}

We combine Theorem~\ref{thm:2P1P3-co-2P1P3-bounded-cw} with Theorem~\ref{thm:classification2} to obtain Theorem~\ref{t-main}, which we restate below, followed by a short proof.

\medskip
\noindent
\faketheorem{Theorem~\ref{t-main} (restated).} 
{\it For a graph~$H$, the class of $(H,\overline{H})$-free graphs has bounded clique-width if and only if~$H$ or~$\overline{H}$ is an induced subgraph of $K_{1,3}$, $P_1+P_4$, $2P_1+P_3$ or~$sP_1$ for some $s\geq 1$.}

\begin{proof}
Let~$H$ be a graph.
It can be readily checked from Theorem~\ref{thm:classification2} that if~$H$ or~$\overline{H}$ is an induced subgraph of $K_{1,3}$, $P_1+P_4$, $2P_1+P_3$ or~$sP_1$ for some $s\geq 1$ then the class of $(H,\overline{H})$-free graphs has bounded clique-width.
Suppose this is not the case.
If $H\notin {\cal S}$ and $\overline{H}\notin {\cal S}$, then the class of $(H,\overline{H})$-free graphs has unbounded clique-width by Theorem~\ref{thm:classification2}.
By Fact~\ref{fact:comp}, we may therefore assume that $H\in {\cal S}$.
By Lemma~\ref{lem:useful}, $H$ contains $K_{1,3}+\nobreak P_1$, $2P_2$, $3P_1+\nobreak P_2$ or~$S_{1,1,2}$ as an induced subgraph.
This means that the class of $(H,\overline{H})$-free graphs contains the class $(K_{1,3},\overline{4P_1})$-free, $(2P_2,\overline{2P_2})$-free, $(4P_1,\overline{3P_1+P_2})$-free or $(2P_1+P_2,\overline{K_{1,3}})$-free graphs, respectively.
In each of these cases we apply Theorem~\ref{thm:classification2}.\qedllncs
\end{proof}

\section{Three New Classes of Unbounded Clique-Width and the Proof of Theorem~\ref{t-main2}}\label{s-main2}

In this section we first identify three new graph classes of unbounded clique-width.
To do so, we will need the notion of a {\em wall}.
We do not formally define this notion, but instead refer to \figurename~\ref{fig:wall}, in which three examples of walls of different height are depicted (see e.g.~\cite{Chuzhoy15} for a formal definition).
The class of walls is well known to have unbounded clique-width; see for example~\cite{KLM09}.
A \emph{$k$-subdivided wall} is the graph obtained from a wall after subdividing each edge exactly~$k$ times for some constant $k\geq 0$, and the following lemma is well known.

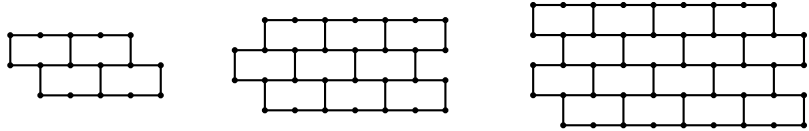
\begin{figure}
\begin{center}
\begin{minipage}{0.2\textwidth}
\centering
\begin{tikzpicture}[scale=0.4, every node/.style={scale=0.3}]
\GraphInit[vstyle=Simple]
\SetVertexSimple[MinSize=6pt]
\Vertex[x=1,y=0]{v10}
\Vertex[x=2,y=0]{v20}
\Vertex[x=3,y=0]{v30}
\Vertex[x=4,y=0]{v40}
\Vertex[x=5,y=0]{v50}

\Vertex[x=0,y=1]{v01}
\Vertex[x=1,y=1]{v11}
\Vertex[x=2,y=1]{v21}
\Vertex[x=3,y=1]{v31}
\Vertex[x=4,y=1]{v41}
\Vertex[x=5,y=1]{v51}

\Vertex[x=0,y=2]{v02}
\Vertex[x=1,y=2]{v12}
\Vertex[x=2,y=2]{v22}
\Vertex[x=3,y=2]{v32}
\Vertex[x=4,y=2]{v42}

\Edges(    v10,v20,v30,v40,v50)
\Edges(v01,v11,v21,v31,v41,v51)
\Edges(v02,v12,v22,v32,v42)

\Edge(v01)(v02)

\Edge(v10)(v11)

\Edge(v21)(v22)

\Edge(v30)(v31)

\Edge(v41)(v42)

\Edge(v50)(v51)

\end{tikzpicture}
\end{minipage}
\begin{minipage}{0.3\textwidth}
\centering
\begin{tikzpicture}[scale=0.4, every node/.style={scale=0.3}]
\GraphInit[vstyle=Simple]
\SetVertexSimple[MinSize=6pt]
\Vertex[x=1,y=0]{v10}
\Vertex[x=2,y=0]{v20}
\Vertex[x=3,y=0]{v30}
\Vertex[x=4,y=0]{v40}
\Vertex[x=5,y=0]{v50}
\Vertex[x=6,y=0]{v60}
\Vertex[x=7,y=0]{v70}

\Vertex[x=0,y=1]{v01}
\Vertex[x=1,y=1]{v11}
\Vertex[x=2,y=1]{v21}
\Vertex[x=3,y=1]{v31}
\Vertex[x=4,y=1]{v41}
\Vertex[x=5,y=1]{v51}
\Vertex[x=6,y=1]{v61}
\Vertex[x=7,y=1]{v71}

\Vertex[x=0,y=2]{v02}
\Vertex[x=1,y=2]{v12}
\Vertex[x=2,y=2]{v22}
\Vertex[x=3,y=2]{v32}
\Vertex[x=4,y=2]{v42}
\Vertex[x=5,y=2]{v52}
\Vertex[x=6,y=2]{v62}
\Vertex[x=7,y=2]{v72}

\Vertex[x=1,y=3]{v13}
\Vertex[x=2,y=3]{v23}
\Vertex[x=3,y=3]{v33}
\Vertex[x=4,y=3]{v43}
\Vertex[x=5,y=3]{v53}
\Vertex[x=6,y=3]{v63}
\Vertex[x=7,y=3]{v73}

\Edges(    v10,v20,v30,v40,v50,v60,v70)
\Edges(v01,v11,v21,v31,v41,v51,v61,v71)
\Edges(v02,v12,v22,v32,v42,v52,v62,v72)
\Edges(    v13,v23,v33,v43,v53,v63,v73)

\Edge(v01)(v02)

\Edge(v10)(v11)
\Edge(v12)(v13)

\Edge(v21)(v22)

\Edge(v30)(v31)
\Edge(v32)(v33)

\Edge(v41)(v42)

\Edge(v50)(v51)
\Edge(v52)(v53)

\Edge(v61)(v62)

\Edge(v70)(v71)
\Edge(v72)(v73)
\end{tikzpicture}
\end{minipage}
\begin{minipage}{0.35\textwidth}
\centering
\begin{tikzpicture}[scale=0.4, every node/.style={scale=0.3}]
\GraphInit[vstyle=Simple]
\SetVertexSimple[MinSize=6pt]
\Vertex[x=1,y=0]{v10}
\Vertex[x=2,y=0]{v20}
\Vertex[x=3,y=0]{v30}
\Vertex[x=4,y=0]{v40}
\Vertex[x=5,y=0]{v50}
\Vertex[x=6,y=0]{v60}
\Vertex[x=7,y=0]{v70}
\Vertex[x=8,y=0]{v80}
\Vertex[x=9,y=0]{v90}

\Vertex[x=0,y=1]{v01}
\Vertex[x=1,y=1]{v11}
\Vertex[x=2,y=1]{v21}
\Vertex[x=3,y=1]{v31}
\Vertex[x=4,y=1]{v41}
\Vertex[x=5,y=1]{v51}
\Vertex[x=6,y=1]{v61}
\Vertex[x=7,y=1]{v71}
\Vertex[x=8,y=1]{v81}
\Vertex[x=9,y=1]{v91}

\Vertex[x=0,y=2]{v02}
\Vertex[x=1,y=2]{v12}
\Vertex[x=2,y=2]{v22}
\Vertex[x=3,y=2]{v32}
\Vertex[x=4,y=2]{v42}
\Vertex[x=5,y=2]{v52}
\Vertex[x=6,y=2]{v62}
\Vertex[x=7,y=2]{v72}
\Vertex[x=8,y=2]{v82}
\Vertex[x=9,y=2]{v92}

\Vertex[x=0,y=3]{v03}
\Vertex[x=1,y=3]{v13}
\Vertex[x=2,y=3]{v23}
\Vertex[x=3,y=3]{v33}
\Vertex[x=4,y=3]{v43}
\Vertex[x=5,y=3]{v53}
\Vertex[x=6,y=3]{v63}
\Vertex[x=7,y=3]{v73}
\Vertex[x=8,y=3]{v83}
\Vertex[x=9,y=3]{v93}

\Vertex[x=0,y=4]{v04}
\Vertex[x=1,y=4]{v14}
\Vertex[x=2,y=4]{v24}
\Vertex[x=3,y=4]{v34}
\Vertex[x=4,y=4]{v44}
\Vertex[x=5,y=4]{v54}
\Vertex[x=6,y=4]{v64}
\Vertex[x=7,y=4]{v74}
\Vertex[x=8,y=4]{v84}

\Edges(    v10,v20,v30,v40,v50,v60,v70,v80,v90)
\Edges(v01,v11,v21,v31,v41,v51,v61,v71,v81,v91)
\Edges(v02,v12,v22,v32,v42,v52,v62,v72,v82,v92)
\Edges(v03,v13,v23,v33,v43,v53,v63,v73,v83,v93)
\Edges(v04,v14,v24,v34,v44,v54,v64,v74,v84)

\Edge(v01)(v02)
\Edge(v03)(v04)

\Edge(v10)(v11)
\Edge(v12)(v13)

\Edge(v21)(v22)
\Edge(v23)(v24)

\Edge(v30)(v31)
\Edge(v32)(v33)

\Edge(v41)(v42)
\Edge(v43)(v44)

\Edge(v50)(v51)
\Edge(v52)(v53)

\Edge(v61)(v62)
\Edge(v63)(v64)

\Edge(v70)(v71)
\Edge(v72)(v73)

\Edge(v81)(v82)
\Edge(v83)(v84)

\Edge(v90)(v91)
\Edge(v92)(v93)
\end{tikzpicture}
\end{minipage}
\caption{\label{fig:wall}Walls of height $2$, $3$ and~$4$, respectively.}
\end{center}
\end{figure}

\begin{lemma}[\cite{LR06}]\label{lem:wall}
For any constant $k\geq 0$, the class of $k$-subdivided walls has unbounded clique-width.
\end{lemma}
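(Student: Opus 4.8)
The plan is to route the argument through tree-width, exploiting the fact that clique-width and tree-width are tightly linked for graphs of \emph{bounded maximum degree}. Write $W$ for a wall and $W^{(k)}$ for the graph obtained from $W$ by subdividing every edge exactly $k$ times. The key structural observation is that every $W^{(k)}$ has maximum degree at most~$3$: the original vertices of $W$ keep their degrees (walls have maximum degree~$3$), while each subdivision vertex has degree exactly~$2$, independently of $k$ and of the height of $W$. Thus the entire class of $k$-subdivided walls lies inside the class of graphs of maximum degree at most~$3$, and on such a class unbounded tree-width forces unbounded clique-width.

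First I would record that walls have unbounded tree-width. This is standard (a wall of height~$h$ contains an $h\times h$ grid, which has tree-width~$h$); alternatively it follows from the stated fact that walls have unbounded clique-width together with the Courcelle--Olariu bound $\cw(G)\le 3\cdot 2^{\tw(G)-1}$, whose contrapositive turns unbounded clique-width into unbounded tree-width. Next I would transfer this to the subdivisions. Since $W^{(k)}$ is literally a subdivision of $W$, suppressing the degree-$2$ internal vertices along each subdivided edge recovers $W$, so $W$ is a topological minor of $W^{(k)}$. As tree-width is minor-monotone (and topological minors are minors), $\tw(W^{(k)})\ge \tw(W)$. Hence, ranging over all heights of $W$, the class of $k$-subdivided walls has unbounded tree-width.

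Finally I would invoke the relationship between the two parameters in the bounded-degree regime: for every fixed~$\Delta$, a class of graphs of maximum degree at most~$\Delta$ has bounded clique-width if and only if it has bounded tree-width (this is exactly the Gurski--Wanke / Lozin--Rautenbach result underlying~\cite{LR06}, the nontrivial direction being that bounded clique-width together with bounded degree bounds $\tw(G)$ from above). Since the $k$-subdivided walls have maximum degree at most~$3$ but unbounded tree-width, they must have unbounded clique-width, which is the claim. The routine ingredients---the bounded degree of $W^{(k)}$, the topological-minor/tree-width monotonicity, and the unbounded tree-width of walls---are all easy. The only genuinely hard step, and thus the main obstacle, is the bounded-degree equivalence of clique-width and tree-width, i.e.\ the inequality bounding $\tw(G)$ by a function of $\cw(G)$ and $\Delta$; establishing that from scratch is the technical heart of the argument, while everything else is bookkeeping.
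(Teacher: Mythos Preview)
The paper does not prove this lemma; it is stated with a citation to~[LR06] and used as a black box. Your argument is correct and is indeed the standard route taken in that literature: pass to tree-width via the bounded-degree equivalence (Gurski--Wanke / Lozin--Rautenbach), using that $k$-subdivided walls have maximum degree at most~$3$ and that tree-width is monotone under (topological) minors.

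One small slip worth fixing: a wall of height~$h$ does not \emph{contain} an $h\times h$ grid as a subgraph---walls are subcubic while grids have degree-$4$ vertices. What is true is that a wall of height~$h$ contains a grid of comparable size as a \emph{minor}, which suffices for the tree-width lower bound. Your alternative justification via the Courcelle--Olariu inequality $\cw(G)\le 3\cdot 2^{\tw(G)-1}$ together with the stated fact that walls have unbounded clique-width already bypasses this issue, so the overall argument stands.
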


\begin{sloppypar}
In~\cite{DGP14}, Dabrowski, Golovach and Paulusma showed that $(4P_1,\overline{3P_1+P_2})$-free graphs have unbounded clique-width.
However, their construction was not $C_5$-free.
We give an alternative construction that neither contains an induced~$C_5$ nor an induced copy of any larger self-complementary graph (see also \figurename~\ref{fig:unbdd-forb}).
\end{sloppypar}

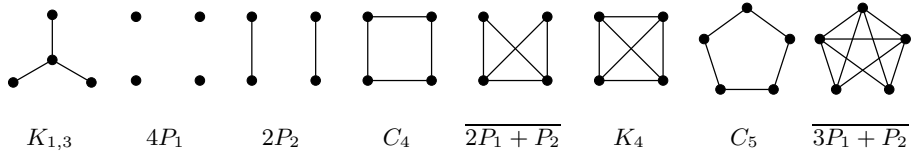
\begin{figure}
\begin{center}
\begin{tabular}{cccccccc}
\begin{minipage}{0.11\textwidth}
\centering
\scalebox{0.6}{
{\begin{tikzpicture}[scale=1,rotate=90]
\GraphInit[vstyle=Simple]
\SetVertexSimple[MinSize=6pt]
\Vertices{circle}{a,b,c}
\Vertex[x=0,y=0]{d}
\Edges(a,d,b)
\Edges(c,d)
\end{tikzpicture}}}
\end{minipage}
&
\begin{minipage}{0.11\textwidth}
\centering
\scalebox{0.6}{
{\begin{tikzpicture}[scale=1,rotate=45]
\GraphInit[vstyle=Simple]
\SetVertexSimple[MinSize=6pt]
\Vertices{circle}{a,b,c,d}
\end{tikzpicture}}}
\end{minipage}
&
\begin{minipage}{0.11\textwidth}
\centering
\scalebox{0.6}{
{\begin{tikzpicture}[scale=1,rotate=45]
\GraphInit[vstyle=Simple]
\SetVertexSimple[MinSize=6pt]
\Vertices{circle}{a,b,c,d}
\Edges(a,d)
\Edges(b,c)
\end{tikzpicture}}}
\end{minipage}
&
\begin{minipage}{0.11\textwidth}
\centering
\scalebox{0.6}{
{\begin{tikzpicture}[scale=1,rotate=45]
\GraphInit[vstyle=Simple]
\SetVertexSimple[MinSize=6pt]
\Vertices{circle}{a,b,c,d}
\Edges(a,b,c,d,a)
\end{tikzpicture}}}
\end{minipage}
&
\begin{minipage}{0.11\textwidth}
\centering
\scalebox{0.6}{
{\begin{tikzpicture}[scale=1,rotate=45]
\GraphInit[vstyle=Simple]
\SetVertexSimple[MinSize=6pt]
\Vertices{circle}{a,b,c,d}
\Edges(b,c,d,a,c)
\Edges(b,d)
\end{tikzpicture}}}
\end{minipage}
&
\begin{minipage}{0.11\textwidth}
\centering
\scalebox{0.6}{
{\begin{tikzpicture}[scale=1,rotate=45]
\GraphInit[vstyle=Simple]
\SetVertexSimple[MinSize=6pt]
\Vertices{circle}{a,b,c,d}
\Edges(a,b,c,d,a,c)
\Edges(b,d)
\end{tikzpicture}}}
\end{minipage}
&
\begin{minipage}{0.11\textwidth}
\centering
\scalebox{0.6}{
{\begin{tikzpicture}[scale=1,rotate=90]
\GraphInit[vstyle=Simple]
\SetVertexSimple[MinSize=6pt]
\Vertices{circle}{a,b,c,d,e}
\Edges(a,b,c,d,e,a)
\end{tikzpicture}}}
\end{minipage}
&
\begin{minipage}{0.11\textwidth}
\centering
\scalebox{0.6}{
{\begin{tikzpicture}[scale=1,rotate=90]
\GraphInit[vstyle=Simple]
\SetVertexSimple[MinSize=6pt]
\Vertices{circle}{a,b,c,d,e}
\Edges(d,e,a,b,c,e,b,d,a,c)
\end{tikzpicture}}}
\end{minipage}\\
\\
$K_{1,3}$ &
$4P_1$ &
$2P_2$ &
$C_4$ &
$\overline{2P_1+\nobreak P_2}$ &
$K_4$ &
$C_5$ &
$\overline{3P_1+P_2}$
\\
\end{tabular}
\end{center}
\caption{\label{fig:unbdd-forb}Forbidden induced subgraphs from Theorems~\ref{thm:4P1-co-3P1+P2-C5-unbdd-cw}, \ref{thm:claw-K4-diamond-C5-unbdd-cw} and~\ref{thm:C4-2P_2-F-free-unbdd-cw}.}
\end{figure}
\begin{theorem}\label{thm:4P1-co-3P1+P2-C5-unbdd-cw}
Let~${\cal F}$ be the set of all self-complementary graphs on at least five vertices that are not equal to the bull.
The class of $(\{4P_1,\overline{3P_1+P_2}\} \cup {\cal F})$-free graphs has unbounded clique-width.
\end{theorem}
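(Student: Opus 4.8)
The plan is to exhibit, for every $n$, an explicit graph $G_n$ that lies in the class $(\{4P_1,\overline{3P_1+P_2}\}\cup{\cal F})$-free and to show that the family $\{G_n\}$ has unbounded clique-width. There are two independent things to establish: that the $G_n$ have unbounded clique-width, and that each $G_n$ avoids all the required induced subgraphs. The second part splits into checking $4P_1$-freeness, $\overline{3P_1+P_2}$-freeness, and ${\cal F}$-freeness; since ${\cal F}$ is infinite, the last of these is the only part that cannot be settled by inspecting finitely many graphs, and it is where Lemma~\ref{lem:ramsey-for-self-comp} does the heavy lifting.

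For the lower bound on clique-width I would not try to embed a wall as an induced subgraph, but instead build $G_n$ from a $k$-subdivided wall $W_n$ (for a fixed subdivision parameter $k$ to be chosen) by applying a bounded number of subgraph complementations and bipartite complementations, together with finitely many vertex deletions. The point is that the number of operations is a constant independent of the size of $W_n$, so the resulting family is $(c,\gamma)$-obtained from the class of $k$-subdivided walls for a constant $c$. By Facts~\ref{fact:del-vert}--\ref{fact:bip} these operations preserve boundedness of clique-width, and by Lemma~\ref{lem:wall} the $k$-subdivided walls have unbounded clique-width; hence $\{G_n\}$ has unbounded clique-width as well. Choosing the complementations so that $G_n$ additionally has independence number at most $3$ (which makes $G_n$ $4P_1$-free) is the first design constraint.

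Next I would verify directly that $G_n$ is $\overline{3P_1+P_2}$-free and $(C_4,C_5)$-free. Because $\overline{3P_1+P_2}$ is $K_5$ minus an edge, and because $C_4$ and $C_5$ are short cycles, each of these is ruled out by a finite local case analysis on how the vertices of such a configuration could be distributed relative to the complemented parts and the sparse, high-girth structure of the underlying subdivided wall; choosing the subdivision parameter $k$ large enough is what guarantees that no such short configuration survives. These checks are routine but need care, since too small a value of $k$ could reintroduce one of the forbidden subgraphs.

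The decisive and cleanest step is ${\cal F}$-freeness. Suppose $H$ is a self-complementary induced subgraph of $G_n$ with $|V(H)|\ge 5$. Since $G_n$ is $4P_1$-free, $\alpha(H)\le 3$, and since $H\cong\overline{H}$ we get $\omega(H)=\alpha(\overline{H})=\alpha(H)\le 3$, so $H$ is $K_4$-free. As $G_n$ is $(C_4,C_5)$-free, $H$ is $(C_4,C_5,K_4)$-free, so Lemma~\ref{lem:ramsey-for-self-comp} forces $H$ to be an induced subgraph of the bull; but the bull is the only self-complementary graph on at least five vertices that embeds into the bull, so $H$ is the bull, contradicting the bull $\notin{\cal F}$. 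Hence $G_n$ contains no member of ${\cal F}$. I expect the main obstacle to be the construction itself: the demand that $\alpha(G_n)\le 3$ pushes the graph to be locally dense, whereas $(C_4,C_5)$-freeness and $(K_5-e)$-freeness force the non-clique part of the adjacency to remain short-cycle-free and sparse, and it is reconciling these opposing pressures while still keeping enough of the wall's structure (for unbounded clique-width) that is the real work, far more than any of the individual forbidden-subgraph verifications.
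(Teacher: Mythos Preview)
Your plan has a real gap at the one place you flag as ``routine'': achieving $C_4$-freeness. You want your family $\{G_n\}$ to be simultaneously $4P_1$-free and $C_4$-free so that Lemma~\ref{lem:ramsey-for-self-comp} applies, and you assert that taking a $k$-subdivided wall for large enough $k$ and complementing a bounded number of colour classes will make the short cycles disappear. But the $C_4$s that appear after such complementations do not come from short cycles of the wall at all; they come from the clique structure you create. Concretely, in the paper's construction (the natural one: complement the three colour classes $V_1,V_2,V_3$ of a $1$-subdivided wall), pick any two vertex-disjoint vertical edges $e,e'$ of the wall with an endpoint $a$ of $e$ and an endpoint $a'$ of $e'$, and let $b,b'\in V_2$ be the corresponding subdivision vertices. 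Then $\{a,a',b,b'\}$ induces a $C_4$, since $aa'$ and $bb'$ are clique edges while the cross-pattern is exactly a matching. This persists for any subdivision parameter and for any partition into at most three cliques: whenever the bipartite graph between two of your cliques contains an induced $2K_2$ (which it will, since the underlying subdivided wall has many far-apart edges), you get an induced $C_4$ in $G_n$. Avoiding this would force each cross-bipartite graph to be a chain graph, and it is far from clear that any such family still has unbounded clique-width. In short, the key design constraint you identify in your last paragraph is not merely ``the real work''; as stated, it is an obstruction, and you have not shown it can be met.

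The paper sidesteps this entirely. It uses exactly the $3$-clique construction above, which is \emph{not} $C_4$-free, and never invokes Lemma~\ref{lem:ramsey-for-self-comp}. Instead it handles ${\cal F}$-freeness by a direct structural argument: the construction is $K_{2,3}$-free (because in any induced $K_{1,3}$ the centre must lie in $V_1$, while the two centres of a $K_{2,3}$ are non-adjacent), and any self-complementary induced subgraph on at least eight vertices would, by a short counting argument inside the three cliques, contain an induced $K_3+P_2$ and hence (by self-complementarity) a $K_{2,3}$. So the paper trades your uniform $(C_4,C_5,K_4)$-free route for a construction-specific $K_{2,3}$ argument, which is what actually makes the proof go through.
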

\begin{proof}
Consider a wall~$H$ (see also \figurename~\ref{fig:wall}).
Let~$H'$ be the graph obtained from~$H$ by subdividing every edge once, that is, $H'$ is a $1$-subdivided wall.
By Lemma~\ref{lem:wall}, such graphs have unbounded clique-width.
Let~$V_1$ be the set of vertices in~$H'$ that are also present in~$H$.
Let~$V_2$ be the set of vertices obtained from subdividing vertical edges in~$H$, and let~$V_3$ be the set of vertices obtained from subdividing horizontal edges.
Note that~$V_1$, $V_2$ and~$V_3$ are independent sets.
Furthermore, every vertex in~$V_1$ has at most one neighbour in~$V_2$ and at most two neighbours in~$V_3$, while every vertex in $V_2 \cup V_3$ has at most two neighbours, each of which is in~$V_1$.
Let~$H''$ be the graph obtained from~$H'$ by applying complementations on~$V_1$, $V_2$ and~$V_3$.
By Fact~\ref{fact:comp}, such graphs have unbounded clique-width.

It remains to show that~$H''$ is $(\{4P_1,\overline{3P_1+P_2}\} \cup {\cal F})$-free.
Since the vertex set of~$H''$ is the disjoint union of the three cliques $V_1$, $V_2$, $V_3$, we find that~$H''$ is $4P_1$-free.
We now show that~$H''$ is $\overline{3P_1+P_2}$-free.
Suppose, for contradiction, that~$H''$ contains an induced $\overline{3P_1+P_2}$ with vertex set~$S$.
Note that $\overline{3P_1+P_2}$ is the graph obtained from a~$K_5$ after deleting an edge.
Since~$V_1$, $V_2$ and~$V_3$ are cliques in~$H''$, this means that~$S$ must have vertices in at least two of these sets.
As~$V_2$ is anti-complete to~$V_3$, we find that~$S$ contains at least one vertex of~$V_1$.
Suppose~$S$ has only vertices in~$V_1$ and~$V_i$ for some $i\in \{2,3\}$.
Then, since $|S|=5$ and~$S$ contains vertices of both~$V_1$ and~$V_i$, it follows that one of~$V_1$,~$V_i$ contains either three or four vertices.
As each vertex of~$V_1$ has at most two neighbours in~$V_i$ and vice versa, this means that in both cases~$H''[S]$ is missing at least two edges, which is not possible.
Hence~$S$ must have at least one vertex in each of $V_1$, $V_2$, $V_3$.
As~$V_2$ is anti-complete to~$V_3$ and~$H''[S]$ is missing only one edge, $S$ must have exactly one vertex in each of~$V_2$ and~$V_3$, and the remaining three vertices of~$S$ must be in~$V_1$.
We recall that every vertex in~$V_2$ has at most two neighbours in~$V_1$ and no neighbours in~$V_3$.
Hence, the vertex of~$S$ that is in~$V_2$ has degree at most~$2$ in~$H''[S]$.
This is a contradiction, as $\overline{3P_1+P_2}$ has minimum degree~$3$.
We conclude that~$H''$ is $\overline{3P_1+P_2}$-free.

Next, we show that~$H''$ is $X$-free for any self-complementary graph~$X$ on at least five vertices that is not equal to the bull.
First suppose that~$X$ has at most seven vertices.
Then~$X$ must be the~$C_5$.
Since the vertex set of~$H''$ is the disjoint union of the three cliques $V_1$, $V_2$, $V_3$, and $V_2$ is anti-complete to~$V_3$, we find that $H''$ is $C_5$-free.

It remains to show that if~$X$ is a self-complementary graph on at least eight vertices, then~$H''$ is $X$-free.
Suppose, for contradiction, that~$H''$ contains such an induced subgraph~$X$.
Since~$H''$ is $4P_1$-free, $X$ must be~$4P_1$-free and therefore $K_4$-free.
Let $U_i=V_i\cap V(F)$ for $i=1,2,3$.
Since each set~$V_i$ is a clique and~$X$ is $K_4$-free, each set~$U_i$ must be a clique on at most three vertices.
Since~$X$ contains at least eight vertices, two sets of $\{U_1,U_2,U_3\}$ consist of three vertices and the other set consists of either two or three vertices.
This means that at least one of $U_2$, $U_3$ contains three vertices (while the other set may contain two vertices). 
Now~$U_2$ is anti-complete to~$U_3$, so~$X$ contains an induced~$K_3+\nobreak P_2$.
Since~$X$ is self-complementary, $X$ must contain an induced~$K_{2,3}$.
Consider an induced~$K_{1,3}$ in~$H''$.
The three degree-$1$ vertices of the~$K_{1,3}$ must be in different sets~$V_i$.
As~$V_2$ is anti-complete to~$V_3$, the central vertex of the~$K_{1,3}$ must be in~$V_1$.
Since two non-adjacent vertices in a~$K_{2,3}$ are the centres of an induced~$K_{1,3}$, it follows that both these vertices must be contained in~$H''[V_1]$.
This is a contradiction, since~$V_1$ is a clique.
This completes the proof.\qedllncs
\end{proof}

Brandst{\"a}dt et al.~\cite{BELL06} proved that $(C_4,K_{1,3}, K_4, \overline{2P_1+P_2})$-free graphs have unbounded clique-width.
In fact, their construction is also~$C_5$-free (see also \figurename~\ref{fig:unbdd-forb}).
Note that by Lemma~\ref{lem:ramsey-for-self-comp}, any self-complementary graph on at least five vertices that is not equal to the bull contains an induced subgraph isomorphic to~$C_4$, $C_5$ or~$K_4$, so such graphs are automatically excluded from the class specified in the following theorem.
\begin{sloppypar}
\begin{theorem}\label{thm:claw-K4-diamond-C5-unbdd-cw}
The class of $(C_4, C_5, K_{1,3}, K_4, \overline{2P_1+P_2})$-free graphs has unbounded clique-width.
\end{theorem}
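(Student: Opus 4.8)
The plan is to exhibit the family of unbounded clique-width already constructed by Brandst\"adt et al.~\cite{BELL06} and to observe that it sits inside the smaller class named in the theorem. Since~\cite{BELL06} shows that $(C_4,K_{1,3},K_4,\overline{2P_1+P_2})$-free graphs have unbounded clique-width, the only genuinely new content is verifying that their witnesses are additionally $C_5$-free; once that is established, the class of $(C_4,C_5,K_{1,3},K_4,\overline{2P_1+P_2})$-free graphs contains an unbounded clique-width subfamily and the theorem follows. Concretely, I would take the construction to be the line graphs $L(W)$ of walls $W$ (equivalently, of $1$-subdivided walls, as in \figurename~\ref{fig:wall}); these underlie the argument of~\cite{BELL06}, and by Lemma~\ref{lem:wall} together with the line-graph analysis of~\cite{BELL06} the class $\{L(W) : W \text{ a wall}\}$ has unbounded clique-width.

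The bulk of the (routine) verification is to recheck the forbidden induced subgraphs for $G=L(W)$. Every line graph is $K_{1,3}$-free, so that case is immediate. Because a wall is subcubic and triangle-free, any collection of pairwise-adjacent edges of $W$ must share a common vertex; as each vertex has degree at most three, $L(W)$ can contain no four mutually adjacent vertices and no two triangles sharing an edge, giving $K_4$-freeness and $\overline{2P_1+P_2}$-freeness (the latter being the diamond $K_4-e$). Finally, for $\ell\in\{4,5\}$ an induced $C_\ell$ in a line graph of a triangle-free graph corresponds to an induced $C_\ell$ in the base graph; since walls have girth at least~$6$ (and $1$-subdivision only increases this), $W$ has no induced $4$- or $5$-cycle, so $L(W)$ is both $C_4$-free and $C_5$-free. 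The $C_5$-freeness is precisely the extra property not recorded in~\cite{BELL06}, and it is what makes these graphs avoid every self-complementary graph on at least five vertices other than the bull (cf.\ Lemma~\ref{lem:ramsey-for-self-comp}, as noted in the paragraph preceding the theorem).

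I expect no real obstacle, since the unbounded clique-width is inherited verbatim from~\cite{BELL06}: the sole task beyond that paper is the girth-based check of $C_5$-freeness, which is elementary. The one point requiring mild care is the correspondence between short induced cycles of $L(W)$ and of $W$ in the triangle-free setting; I would spell this out for $C_4$ and $C_5$ rather than appeal to a general principle, to keep the argument self-contained. If one prefers not to rely on the line-graph clique-width computation of~\cite{BELL06}, the fallback is to cite directly the unbounded clique-width of these witnesses as proved there and merely append the $C_5$-free verification, which is the approach I would ultimately adopt for brevity.
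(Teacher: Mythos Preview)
Your proposal is correct and is essentially the same as the paper's proof: the paper's construction (subdivide every edge of a wall~$H$, make the neighbours of each original vertex pairwise adjacent, then delete the original vertices) is precisely the line graph $L(H)$, and the paper's $C_5$-freeness check via the girth of~$H$ matches your argument. The parenthetical ``equivalently, of $1$-subdivided walls'' is inaccurate and should be dropped, but it plays no role in your actual verification.
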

\end{sloppypar}
\begin{proof}
Consider a wall~$H$ (see also \figurename~\ref{fig:wall}).
Let~$H'$ be the graph obtained from~$H$ by subdividing every edge once.
Let~$V_1$ be the set of vertices in~$H'$ that are also present in~$H$ and let~$V_2$ be the set of vertices obtained by subdividing edges of~$H$.
Note that in~$H'$, the neighbourhood of every vertex in~$V_1$ is an independent set.
For every vertex in~$V_1$, we add edges between its neighbours; this will cause its neighbourhood to induce a~$P_2$ or a triangle.
Finally, we delete the vertices in~$V_1$ and let~$H''$ be the resulting graph.
As the smallest induced cycle in~$H$ has length~$6$, the smallest induced cycle in~$H'$ has length~$12$. 
Hence, making vertices that are of distance~$2$ from each other in~$H'$ adjacent to each other in~$H''$ does not create any induced~$C_5$, and we find that~$H''$ is $C_5$-free.
Brandst{\"a}dt et al. proved that such graphs are $(C_4, K_{1,3}, K_4, \overline{2P_1+P_2})$-free and have unbounded clique-width~\cite{BELL06}; the latter fact also follows from~\cite[Theorem~3]{DP15}.
This completes the proof.\qedllncs
\end{proof}

Before proving our third unboundedness result, we will first need to introduce some more terminology and two lemmas.
Given natural numbers~$k,\ell$, let~$Rb(k,\ell)$ denote the smallest number such that if every edge of a~$K_{Rb(k,\ell),Rb(k,\ell)}$ is coloured red or blue then it will contain a monochromatic~$K_{k,\ell}$.
Beineke and Schwenk~\cite{BS75} showed that~$Rb(k,\ell)$ always exists and proved the following result.

\begin{lemma}[\cite{BS75}]\label{lem:Rb22=5}
$Rb(2,2)=5$.
\end{lemma}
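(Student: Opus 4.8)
The plan is to establish $Rb(2,2)\le 5$ and $Rb(2,2)\ge 5$ separately, working with the colourings directly. A red/blue colouring of $K_{n,n}$ with parts $A,B$ is encoded by its red graph $R$ (a bipartite graph on $A\cup B$), and a monochromatic $K_{2,2}$ is exactly a $K_{2,2}$, that is, an induced four-cycle, in $R$ or in the bipartite complement $\overline{R}$ (the blue graph). Thus the upper bound amounts to showing that for $n=5$ at least one of $R,\overline{R}$ contains a $K_{2,2}$, while the lower bound amounts to exhibiting a single bipartite graph $R$ on $4+4$ vertices such that both $R$ and $\overline{R}$ are $K_{2,2}$-free.

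For the upper bound $Rb(2,2)\le 5$, I would fix a colouring of $K_{5,5}$ with parts $A=\{a_1,\dots,a_5\}$ and $B$, and for each $a_i$ let $r_i$ be its number of red edges, so it has $5-r_i$ blue edges. The key idea is a double count of ``cherries'': a red $K_{2,2}$ exists precisely when some pair $\{b,b'\}\subseteq B$ lies in the common red neighbourhood of two distinct vertices $a_i$. Since each of the $\binom{5}{2}=10$ pairs in $B$ can be a common red neighbourhood of at most one $a_i$ without forcing a red $K_{2,2}$, avoiding a red $K_{2,2}$ forces $\sum_i\binom{r_i}{2}\le 10$, and symmetrically avoiding a blue $K_{2,2}$ forces $\sum_i\binom{5-r_i}{2}\le 10$. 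Adding these and using that $\binom{r}{2}+\binom{5-r}{2}\ge 4$ for every $r\in\{0,\dots,5\}$, with equality iff $r\in\{2,3\}$, gives $20\le\sum_i\bigl(\binom{r_i}{2}+\binom{5-r_i}{2}\bigr)\le 20$. Hence every $r_i\in\{2,3\}$ and $\sum_i\binom{r_i}{2}=10$; writing $k$ for the number of indices with $r_i=3$, this reads $3k+(5-k)=10$, i.e.\ $2k=5$, which is impossible. This contradiction shows that a monochromatic $K_{2,2}$ must appear.

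For the lower bound $Rb(2,2)>4$, I would give an explicit colouring of $K_{4,4}$ with no monochromatic $K_{2,2}$. Taking $A=B=\mathbb{Z}_4$ and colouring the edge between $a_i$ and $b_j$ red if $j-i\in\{0,1\}\pmod 4$ and blue otherwise works: the red neighbourhood of $a_i$ is $\{i,i+1\}$ and the blue neighbourhood is $\{i+2,i+3\}$, each of size two, and two distinct rows have distinct red neighbourhoods (and likewise distinct blue neighbourhoods), so they overlap in at most one element. Thus no two rows share two common red neighbours or two common blue neighbours, so neither colour class contains a $K_{2,2}$, giving $Rb(2,2)\ge 5$. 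Combining the two bounds yields $Rb(2,2)=5$.

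I expect the main obstacle to be the upper bound, specifically the fact that the convexity estimate is \emph{exactly} tight: the bound $\binom{r}{2}+\binom{5-r}{2}\ge 4$ forces the summed total to equal $20$ rather than strictly exceed it, so no contradiction follows from the inequalities alone. The resolution is the integer-parity step $2k=5$, which is what actually rules out the boundary configuration; the value $n=5$ is precisely where the summed counting bound meets this parity obstruction, which explains why $5$ is the threshold.
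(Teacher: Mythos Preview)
Your proof is correct. The paper does not prove this lemma at all; it simply cites the result from Beineke and Schwenk~\cite{BS75} and uses it as a black box in the proof of Theorem~\ref{thm:C4-2P_2-F-free-unbdd-cw}. Your argument is a standard and clean self-contained proof: the upper bound via the double count of monochromatic cherries, exploiting the tight convexity bound $\binom{r}{2}+\binom{5-r}{2}\ge 4$ together with the parity obstruction $2k=5$, and the lower bound via the explicit cyclic $2$-regular red graph on $K_{4,4}$. Both halves are correct as written, and the parity step you flagged as the crux is indeed what makes the upper bound go through.
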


The next lemma was independently proved by Ringel and Sachs.
\begin{lemma}[\cite{Ri63,Sa62}]\label{lem:odd-self-comp-fixed-vertex}
Let~$G$ be a self-complementary graph on an odd number of vertices and let $f:\nobreak V(G) \to V(G)$ be an isomorphism from~$G$ to~$\overline{G}$.
Then there is a unique vertex $v \in V(G)$ such that $f(v)=v$.
\end{lemma}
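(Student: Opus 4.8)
The plan is to prove the two halves of the statement---that $f$ has at most one fixed point, and that it has at least one---separately, using only the defining equivalence of $f$. Since $f$ is an isomorphism from $G$ to $\overline{G}$, for all distinct $u,v \in V(G)$ we have $uv \in E(G)$ if and only if $f(u)f(v) \in E(\overline{G})$, that is, if and only if $f(u)f(v) \notin E(G)$. I would record this equivalence first, as both halves of the argument flow directly from it.

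Uniqueness is then immediate and I would dispose of it in a single line: if $u \neq v$ were two distinct vertices with $f(u)=u$ and $f(v)=v$, substituting these into the equivalence yields $uv \in E(G)$ if and only if $uv \notin E(G)$, a contradiction. Hence $f$ fixes at most one vertex.

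For existence I would examine the cycle decomposition of the permutation $f$ on $V(G)$ and show that $f$ has no cycle of odd length greater than one. Consider a cycle $(v_0\,v_1\,\cdots\,v_{k-1})$, with $v_{i+1}=f(v_i)$ and indices read modulo~$k$. Applying the equivalence to the consecutive pair $(v_i,v_{i+1})$ gives $v_iv_{i+1}\in E(G)$ if and only if $v_{i+1}v_{i+2}\notin E(G)$; thus the adjacencies of the $k$ consecutive pairs around the cycle form a cyclic sequence in which each term is the negation of the next. For $k \geq 3$ these $k$ pairs are genuinely distinct edges, so such a cyclic chain of negations is consistent only when $k$ is even; an odd $k \geq 3$ forces $v_0v_1\in E(G)$ if and only if $v_0v_1\notin E(G)$, a contradiction. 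Consequently every non-trivial cycle of $f$ has even length, so the number of non-fixed vertices is even. As $|V(G)|$ is odd, the number of fixed points is therefore odd, and in particular at least one; together with the uniqueness bound this gives exactly one fixed vertex.

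The step I expect to require the most care is this parity argument around a cycle: one must check that for odd $k \geq 3$ the consecutive pairs really are $k$ distinct edges (so that the alternation cannot collapse prematurely), and that the wrap-around link---relating the adjacency of $(v_{k-1},v_0)$ back to that of $(v_0,v_1)$ through $f$---closes the cyclic chain of negations correctly. Everything else is routine, and beyond the existence of the isomorphism $f$ no further appeal to the self-complementarity of $G$ is needed.
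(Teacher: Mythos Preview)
Your argument is correct: the uniqueness step is clean, and the parity argument on the cycle decomposition of $f$ does exactly what is needed, since ruling out odd cycles of length at least~$3$ forces the number of non-fixed vertices to be even. (In fact your equivalence also rules out $2$-cycles---if $f(u)=v$ and $f(v)=u$ then $uv\in E(G)\Leftrightarrow uv\notin E(G)$---but you do not need this for the parity count.)

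There is nothing to compare against here: the paper does not give its own proof of this lemma but simply cites it from Ringel and Sachs. Your write-up would serve perfectly well as a self-contained proof, and the only care point you flagged---that the $k$ consecutive pairs around a cycle of length $k\geq 3$ are genuinely distinct unordered pairs---is indeed the one place worth a sentence of justification.
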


Recall that the clique number~$\omega(G)$ of~$G$ is the size of a largest clique in~$G$.
The next lemma was proved by Sridharan and Balaji.

\begin{lemma}[\cite{SB98}]\label{lem:even-self-comp-omega}
Let~$G$ be a self-complementary split graph on~$n$ vertices.
If~$n$ is even then $\omega(G)=\frac{n}{2}$.
\end{lemma}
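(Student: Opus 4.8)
The plan is to deduce the result from two elementary bounds on $\omega(G)+\alpha(G)$, where $\alpha(G)$ denotes the independence number of~$G$, together with a parity argument that exploits the evenness of~$n$. The starting observation is that self-complementarity forces the clique number and the independence number of~$G$ to coincide: a clique of~$G$ is exactly an independent set of~$\overline{G}$, so $\omega(\overline{G})=\alpha(G)$, and since $G\cong\overline{G}$ we have $\omega(G)=\omega(\overline{G})$; hence $\omega(G)=\alpha(G)$. Writing $k=\omega(G)=\alpha(G)$, it therefore suffices to pin down~$2k$.

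For the upper bound I would use the fact that any clique and any independent set of a graph meet in at most one vertex (two common vertices would have to be simultaneously adjacent and non-adjacent). Taking a maximum clique~$K$ and a maximum independent set~$S$ of~$G$, this gives $|K|+|S|=|K\cup S|+|K\cap S|\leq n+1$, that is, $\omega(G)+\alpha(G)\leq n+1$, so $2k\leq n+1$. For the lower bound I would use that~$G$ is split: fixing a partition $V(G)=C\cup I$ into a clique~$C$ and an independent set~$I$, we get $\omega(G)\geq|C|$ and $\alpha(G)\geq|I|$, whence $2k=\omega(G)+\alpha(G)\geq|C|+|I|=n$.

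Combining the two estimates yields $n\leq 2k\leq n+1$, and here the hypothesis that~$n$ is even does the final work: $2k$ is even while $n+1$ is odd, so $2k\leq n+1$ already forces $2k\leq n$, and together with $2k\geq n$ this gives $2k=n$, i.e.\ $\omega(G)=k=\frac{n}{2}$. I do not expect a genuine obstacle in this argument; the only point needing care is the parity step, which is exactly where the evenness of~$n$ is used (for odd~$n$ the same bounds only yield $\omega(G)\in\{\frac{n-1}{2},\frac{n+1}{2}\}$, so the conclusion would have to be weakened). It is worth noting that this proof is self-contained and uses neither Lemma~\ref{lem:odd-self-comp-fixed-vertex} nor Lemma~\ref{lem:Rb22=5}; in particular the split structure enters only through the trivial lower bound, and self-complementarity only through the equality $\omega(G)=\alpha(G)$.
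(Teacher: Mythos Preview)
Your proof is correct. The paper does not supply its own argument for this lemma; it is quoted from~\cite{SB98} without proof, so there is nothing in-paper to compare against directly. Your approach---combining $\omega(G)=\alpha(G)$ from self-complementarity with the split lower bound $\omega(G)+\alpha(G)\geq n$ and the general upper bound $\omega(G)+\alpha(G)\leq n+1$, then finishing by parity---is clean and self-contained, and is essentially the standard way to see this fact. Indeed, the paper itself implicitly reproduces the first two ingredients in the proof of Lemma~\ref{lem:split-self-comp-graph-properties} (it notes $\omega(G)\geq|C|$, $\alpha(G)\geq|I|$, and $\omega(G)=\alpha(G)$), invoking the cited lemma only to conclude $\omega(G)=\frac{n}{2}$; your parity observation shows that this last step requires nothing beyond what is already on the page.
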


Let $G=(V,E)$ be a split graph.
The {\em independence number}~$\alpha(G)$ of~$G$ is the size of a largest independent set in~$G$.
By definition, $G$ has a {\em split partition}, that is, a partition of~$V$ into two (possibly empty) sets~$C$ and~$I$, where~$C$ is a clique and~$I$ is an independent set.
A split graph~$G$ may have multiple split partitions.
For self-complementary split graphs we can show the following.

\begin{lemma}\label{lem:split-self-comp-graph-properties}
Let~$G$ be a self-complementary split graph on~$n$ vertices.
\begin{enumerate}[(i)]
\item If~$n$ is even, then~$G$ has a unique split partition and in this partition the clique and independent set are of equal size.
\item If~$n$ is odd, then there exists a vertex~$v$ such that $G \setminus v$ is also a self-complementary split graph.
\end{enumerate}
\end{lemma}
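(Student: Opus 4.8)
The plan is to treat the two parts separately, relying on the structural results about self-complementary split graphs stated above. First I would record two facts used throughout. Since split graphs are exactly the $(2P_2,C_4,C_5)$-free graphs by Lemma~\ref{lem:split-forb-graphs}, and since $\overline{2P_2}=C_4$, $\overline{C_4}=2P_2$ and $\overline{C_5}=C_5$, the class of split graphs is closed both under taking induced subgraphs (it is hereditary) and under complementation; in particular any vertex-deleted subgraph $G\setminus v$ of a split graph is split. Moreover $\alpha(G)=\omega(\overline{G})$ for every graph, so self-complementarity gives $\alpha(G)=\omega(\overline{G})=\omega(G)$.

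For part~(i), suppose $n$ is even. By Lemma~\ref{lem:even-self-comp-omega} we have $\omega(G)=\tfrac{n}{2}$, and combined with the previous paragraph also $\alpha(G)=\tfrac{n}{2}$. Hence in \emph{any} split partition $(C,I)$ the clique satisfies $|C|\le\omega(G)=\tfrac n2$ and the independent set satisfies $|I|\le\alpha(G)=\tfrac n2$; since $|C|+|I|=n$, both inequalities are tight, so $|C|=|I|=\tfrac n2$. This already gives the size claim, and it remains to prove uniqueness. Given two split partitions $(C_1,I_1)$ and $(C_2,I_2)$, I would consider the four intersection blocks $A=C_1\cap C_2$, $B=C_1\cap I_2$, $D=I_1\cap C_2$ and $E=I_1\cap I_2$. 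Counting and using $|C_1|=|C_2|$ forces $|B|=|D|$, while $B$ (being simultaneously a subset of a clique and of an independent set) and likewise $D$ each have size at most one. The crux is to rule out $|B|=|D|=1$: writing $B=\{b\}$ and $D=\{d\}$, each of $b,d$ is complete to $A$ and anti-complete to $E$. If $bd$ is an edge then $A\cup\{b,d\}$ is a clique of size $\tfrac n2+1>\omega(G)$, and if $bd$ is a non-edge then $E\cup\{b,d\}$ is an independent set of size $\tfrac n2+1>\alpha(G)$; either way we contradict $\omega(G)=\alpha(G)=\tfrac n2$. Hence $|B|=|D|=0$, so $C_1=C_2$ and the partition is unique.

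For part~(ii), suppose $n$ is odd. By Lemma~\ref{lem:odd-self-comp-fixed-vertex} an isomorphism $f\colon V(G)\to V(G)$ from $G$ to $\overline{G}$ has a unique fixed vertex $v$ with $f(v)=v$, and I claim this $v$ works. The graph $G\setminus v$ is split because the split class is hereditary, and it is self-complementary because the restriction of $f$ to $V(G)\setminus\{v\}$ supplies the required isomorphism: since $f$ is a bijection fixing $v$, it maps $V(G)\setminus\{v\}$ onto itself, and for distinct $x,y\ne v$ we have $xy\in E(G\setminus v)\iff f(x)f(y)\in E(\overline{G})$; as $E(\overline{G\setminus v})$ is precisely the restriction of $E(\overline{G})$ to $V(G)\setminus\{v\}$, this means $f$ restricted is an isomorphism from $G\setminus v$ to $\overline{G\setminus v}$.

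The main obstacle is the uniqueness argument in part~(i): the size statement follows almost immediately from Lemma~\ref{lem:even-self-comp-omega}, but excluding a second split partition requires the block analysis of the potential ``swap vertices'' $b$ and $d$, where the key point is that balancedness $\omega(G)=\alpha(G)=\tfrac n2$ leaves no room to move a vertex between the clique and the independent set without creating a larger clique or a larger independent set. Part~(ii), by contrast, is essentially a direct application of the fixed-vertex lemma together with the hereditarity of split graphs.
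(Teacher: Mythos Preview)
Your proof is correct and follows essentially the same route as the paper. In part~(i) you use Lemma~\ref{lem:even-self-comp-omega} to force $|C|=|I|=\tfrac{n}{2}$ and then perform the same ``swap vertex'' analysis (your $b,d$ are the paper's $u,w$), reaching the identical contradiction; in part~(ii) you invoke Lemma~\ref{lem:odd-self-comp-fixed-vertex} for the fixed vertex~$v$ and spell out that the restriction of~$f$ witnesses self-complementarity of $G\setminus v$, which is exactly the paper's argument with a little more detail.
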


\begin{proof}
First consider the case where~$n$ is even and let $(C,I)$ be a split partition of~$G$.
Then $\omega(G) \geq |C|$ and $\alpha(G) \geq |I|$.
Since~$G$ is self-complementary, it follows that $\omega(G) = \alpha(G)$ and, by Lemma~\ref{lem:even-self-comp-omega}, $\omega(G)=\frac{n}{2}$.
Therefore $n=|C|+|I| \leq \omega(G) + \alpha(G) \leq n$ and so $|I|=|C|=\omega(G) = \alpha(G) = \frac{n}{2}$. 
Suppose, for contradiction, that there is another split partition $(C', I')$ with $|C'| = |I'|$.
Now $I \setminus I' \subseteq C' \cap I$, so $|I \setminus I'|=1$.
Similarly $| I' \setminus I | = 1$.
This implies that $I \setminus I' = C' \setminus C = \{ u \}$ and $I' \setminus I = C \setminus C' = \{ w \}$.
Note that both~$u$ and~$w$ are complete to $C \cap C'$ and anti-complete to $I \cap I'$.
Hence if $u,w$ are adjacent, then $\{ u,w \} \cup (C \cap C')$ is a clique of size $\frac{n}{2}+1$, and if $u,w$ are non-adjacent, then $\{ u,w \} \cup (I \cap I')$ is an independent set of size $\frac{n}{2}+1$.
In both cases we get a contradiction with the fact that $\omega(G) = \alpha(G) = \frac{n}{2}$.
This completes the case where~$n$ is even.

Now suppose that~$n$ is odd and let~$f$ be an isomorphism from~$G$ to~$\overline{G}$.
By Lemma~\ref{lem:odd-self-comp-fixed-vertex}, there is a vertex $v \in V(G)$ such that~$f$ maps~$v$ to~$v$ and maps the vertices of $G \setminus \{v\}$ to the vertices of $\overline{G} \setminus \{v\}$.
Therefore $G \setminus \{v\}$ is self-complementary.\qedllncs
\end{proof}

\begin{theorem}\label{thm:C4-2P_2-F-free-unbdd-cw}
Let~${\cal F}$ be the set of all self-complementary graphs on at least five vertices that are not equal to the bull.
The class of $(\{C_4,2P_2\} \cup {\cal F})$-free graphs has unbounded clique-width.
\end{theorem}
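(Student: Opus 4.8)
The plan is to exhibit a single concrete family of graphs of unbounded clique-width that lies inside the class, using the by-now familiar wall recipe. Start from a wall $H$ (Lemma~\ref{lem:wall}), let $H'$ be its $1$-subdivision, write $V_1$ for the original vertices of $H$ and $I=V(H')\setminus V_1$ for the subdivision vertices, and let $G$ be obtained from $H'$ by a subgraph complementation on $G[V_1]$. Since $V_1$ is independent in $H'$, this turns $V_1$ into a clique $C$ while leaving $I$ independent, so $G$ is a split graph with split partition $(C,I)$, and by Fact~\ref{fact:comp} together with Lemma~\ref{lem:wall} the class of such graphs has unbounded clique-width. By Lemma~\ref{lem:split-forb-graphs} every split graph, hence $G$, is $(2P_2,C_4,C_5)$-free, so it remains only to prove that $G$ is $\mathcal F$-free. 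An induced subgraph of a split graph is split, so any induced $X\in\mathcal F$ would be a self-complementary split graph on at least five vertices other than the bull; as $C_5$ is not split, the bull is excluded, and there are no self-complementary graphs on six or seven vertices, such an $X$ must have at least eight vertices.

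First I would record two structural features of $G$ inherited from the wall: each vertex of $I$ has exactly two neighbours in $C$ (the endpoints of the subdivided edge), and distinct vertices of $I$ have distinct neighbour-pairs because $H$ is simple; moreover $H$ has girth at least six, so no short cycle of $C$-vertices is spanned by edges arising from $I$. Now suppose $X\in\mathcal F$ embeds in $G$. Using Lemma~\ref{lem:split-self-comp-graph-properties}(ii) to delete a vertex in the odd case, I may assume $X$ has an even number $n$ of vertices, so by Lemma~\ref{lem:split-self-comp-graph-properties}(i) it has a unique split partition $(C_X,I_X)$ with $|C_X|=|I_X|=m=n/2\ge 4$. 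Because $C_X$ is a clique and $I$ is independent, $|C_X\cap I|\le 1$, and symmetrically $|I_X\cap C|\le 1$; thus all but at most one vertex of $C_X$ lies in $C$ and all but at most one vertex of $I_X$ lies in $I$.

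Counting edges, self-complementarity forces exactly $\tfrac{m(2m-1)}{2}-\binom{m}{2}=\tfrac{m^2}{2}$ edges between $C_X$ and $I_X$. Every vertex of $I_X\cap I$ contributes at most two such edges, so $\tfrac{m^2}{2}\le 2(m-1)+m$, giving $m\le 5$ and hence, as $m$ is even, $m=4$, i.e.\ $n=8$. The cleaner route, and the one the preparatory lemmas are set up for, is to invoke Lemma~\ref{lem:Rb22=5}: once $m$ is large enough that $|C_X\cap C|,|I_X\cap I|\ge 5$, the red/blue colouring of the complete bipartite graph on these two sets (red for an edge of $G$) contains a monochromatic $K_{2,2}$. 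A red $K_{2,2}$ would produce two vertices of $I$ with the same pair of neighbours in $C$, impossible by the injectivity above; so the $K_{2,2}$ is blue, and applying the isomorphism $X\cong\overline{X}$ (which swaps $C_X$ and $I_X$ by uniqueness of the partition) turns it into a red $K_{2,2}$ of $X$. Choosing the monochromatic pattern so as to avoid the at most one impure vertex on each side, all four of its vertices are pure, i.e.\ its two $C_X$-endpoints lie in $C$ and its two $I_X$-endpoints lie in $I$ — again a repeated neighbour-pair, a contradiction. This rules out every $m\ge 6$ and leaves only $n=8$.

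For $n=8$ we have $m=4$ and, by Lemma~\ref{lem:even-self-comp-omega}, $\omega(X)=4$, so $X$ is $K_5$-free with $|C_X|=|I_X|=4$. I would finish with the same local analysis: the $\tfrac{m^2}{2}=8$ required edges between $C_X$ and $I_X$ cannot be realised, since at most one vertex of each side is impure and the pure incidences correspond to distinct edges of $H$ spanning at most four clique-vertices, which by girth at least six form a forest and hence number at most three; a short degree-sequence check (a self-complementary $8$-vertex graph has a sequence symmetric about $3.5$, while all vertices of $I_X\cap I$ have degree at most two) then eliminates the remaining configurations that would use the impure vertices. \emph{The main obstacle is precisely this boundary case $n=8$}: there the edge-count and girth inequalities are tight, so the argument must descend to the individual self-complementary split graphs on eight vertices (equivalently, the split members among $X_1,\dots,X_{10}$) and show that each one forces a $C_4$, a repeated neighbour-pair, or a vertex degree that the wall simply cannot supply.
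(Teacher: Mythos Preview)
Your construction and overall strategy are correct and very close to the paper's: build a split graph from a (subdivided) wall by complementing one side, then use Lemma~\ref{lem:split-self-comp-graph-properties} together with the bipartite Ramsey bound $Rb(2,2)=5$ to exclude all large self-complementary split graphs. The paper takes a $2$-subdivided wall, complements one bipartition class, and then handles $n\ge 10$ by the same Ramsey argument while treating the three self-complementary \emph{split} graphs on eight vertices $X_1,X_2,X_3$ individually (using $C_8$-freeness for $X_1$ and an explicit $K_{2,2}$ in $X_2,X_3$).

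There is one genuine inefficiency in your write-up, and fixing it removes exactly the ``main obstacle'' you flag. You invoke Lemma~\ref{lem:split-self-comp-graph-properties}(i) to say $X$ has a \emph{unique} split partition $(C_X,I_X)$, but you never use its full strength: since $(V(X)\cap C,\,V(X)\cap I)$ is itself a split partition of $X$, uniqueness forces $C_X=V(X)\cap C$ and $I_X=V(X)\cap I$. So there are \emph{no} impure vertices at all. This immediately repairs the Ramsey step for $m\ge 5$ (the self-complementation isomorphism produces a ``red'' $K_{2,2}$ between $C_X\subseteq C$ and $I_X\subseteq I$, which is directly a repeated neighbour-pair in $I$), and it makes the $n=8$ case clean without any case analysis: you need $m^2/2=8$ cross edges, each of the four vertices of $I_X\subseteq I$ contributes at most two, so each contributes exactly two, giving four distinct wall edges spanning the four vertices of $C_X$. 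Four edges on four vertices force a cycle of length at most~$4$, contradicting the girth~$6$ of the wall. With this observation your argument is in fact slightly tidier than the paper's for $n=8$, since it avoids inspecting $X_1,X_2,X_3$ separately; and the $1$-subdivided wall already suffices (girth~$12$, so certainly $C_4$- and $C_8$-free in the bipartite sense), whereas the paper's choice of a $2$-subdivision is tailored to its $X_1$-via-$C_8$ check.
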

\begin{proof}
First note that the only self-complementary graph on five vertices apart from the bull is the~$C_5$.
Since $C_5 \in {\cal F}$, the class of $(\{C_4,2P_2\} \cup {\cal F})$-free graphs is a subclass of the class of split graphs by Lemma~\ref{lem:split-forb-graphs}.
Hence we may remove all graphs that are not split graphs from~${\cal F}$, apart from~$C_5$; in particular, this means that we remove $X_4,\ldots,X_{10}$ from~${\cal F}$ (see also \figurename~\ref{fig:self-complementary-8-vertices}).
By Lemma~\ref{lem:split-self-comp-graph-properties}, if $G \in {\cal F}$ has an odd number of vertices, but is not equal to~$C_5$, then $G\setminus v \in {\cal F}$ for some vertex $v \in V(G)$.
Let~${\cal F'}$ be the set of self-complementary split graphs on at least eight vertices that have an even number of vertices.
It follows that the class of ${\cal F'}$-free split graphs is equal to the class of $(\{C_4,2P_2\} \cup {\cal F})$-free graphs.

Consider a $2$-subdivided wall~$H$ and note that it is $(C_4,C_8)$-free; recall that $2$-subdivided walls have unbounded clique-width by Lemma~\ref{lem:wall}.
Note that~$H$ is a bipartite graph, and fix a bipartition $(A,B)$ of~$H$.
Let~$H'$ be the graph obtained from~$H$ by applying a complementation to~$A$ and note that~$H'$ is a split graph with split partition $(A,B)$.

Note that in~$H'$, every vertex in~$B$ has a non-neighbour in~$A$ and every vertex in~$A$ has a neighbour in~$B$.
We claim that $(A,B)$ is the unique split partition of~$H'$.
Indeed, suppose $(A',B')$ is a split partition of~$H'$; we will show that $A'=A$ and $B'=B$.
First suppose, for contradiction, that~$B'$ contains a vertex $a\in A$.
There exists a vertex $b\in B$ that is adjacent to~$a$ and therefore $b \notin B'$.
Similarly, every vertex of $(A\setminus \{a\})$ is adjacent to~$a$, so $(A\setminus \{a\}) \cap B' = \emptyset$.
Therefore every vertex of $(A\setminus \{a\})\cup \{b\}$ must lie in~$A'$, which is a clique.
It follows that~$b$ is complete to~$A$, a contradiction.
Therefore~$B'$ cannot contain a vertex of~$A$.
Similarly, $A'$ cannot contain a vertex of~$B$.
We conclude that $(A',B')=(A,B)$, so the split partition of~$H'$ is indeed unique.

Now, by Fact~\ref{fact:comp}, the class of graphs~$H'$ produced in the above way also has unbounded clique-width.
It remains to show that~$H'$ is ${\cal F'}$-free.

First note that~$X_1$ (see also \figurename~\ref{fig:self-complementary-8-vertices}) is the graph obtained from the bipartite graph~$C_8$ by complementing one of the independent sets in the bipartition.
Since~$H$ is $C_8$-free and~$X_1$ has a unique split partition (by Lemma~\ref{lem:split-self-comp-graph-properties}), it follows that~$H'$ is $X_1$-free.

Note that~$H$ is $C_4$-free and so~$H'$ does not contain two vertices $x,x'$ in the clique~$A$ and two vertices $y,y'$ in the independent set~$B$ such that $\{x,x'\}$ is complete to~$\{y,y'\}$.
Now suppose $G \in {\cal F'} \setminus \{X_1\}$.
Recall that by Lemma~\ref{lem:split-self-comp-graph-properties}, $G$ has a unique split partition~$(C,I)$, and this partition has the property that $|C|=|I|$.
Therefore, if we can show that~$G$ contains two vertices $x,x' \in C$ and two vertices $y,y' \in I$ with $\{x,x'\}$ complete to~$\{y,y'\}$ then~$H'$ must be $G$-free and the proof is complete.
This is easy to verify that this is the case if $G \in \{X_2,X_3\}$ (see also \figurename~\ref{fig:self-complementary-8-vertices} and recall that $X_4,\ldots,X_{10} \notin {\cal F'}$).
Otherwise, $G$ has at least ten vertices so $|C|,|I| \geq 5$.
By Lemma~\ref{lem:Rb22=5}, there must be two vertices $x,x' \in C$ and two vertices $y,y' \in I$ with $\{x,x'\}$ either complete or anti-complete to~$\{y,y'\}$.
In the first case we are done.
In the second case we note that complementing~$G$ will swap the sets~$C$ and~$I$ and make $\{x,x'\}$ complete to~$\{y,y'\}$, returning us to the previous case.

We conclude that~$H'$ is indeed ${\cal F'}$-free.
This completes the proof.\qedllncs
\end{proof}

We are now ready to prove Theorem~\ref{t-main2}.
Note that this theorem holds even if~${\cal F}$ is infinite.

\medskip
\noindent
\faketheorem{Theorem~\ref{t-main2} (restated).}
{\it Let~${\cal F}$ be a set of self-complementary graphs on at least five vertices not equal to the bull.
For a graph~$H$, the class of $(\{H,\overline{H}\} \cup {\cal F})$-free graphs has bounded clique-width if and only if~$H$ or~$\overline{H}$ is an induced subgraph of $K_{1,3}$, $P_1+P_4$, $2P_1+P_3$ or~$sP_1$ for some $s\geq 1$.}
\begin{proof}
Let~$H$ be a graph.
By Theorem~\ref{t-main}, if~$H$ or~$\overline{H}$ is an induced subgraph of $K_{1,3}$, $P_1+P_4$, $2P_1+P_3$ or~$sP_1$ for some $s\geq 1$, then the class of $(\{H,\overline{H}\} \cup {\cal F})$-free graphs has bounded clique-width.

Consider a graph $F \in {\cal F}$.
Since~$F$ contains at least five vertices and is not isomorphic to the bull, Lemma~\ref{lem:ramsey-for-self-comp} implies that~$F$ contains an induced subgraph isomorphic to $C_4$, $C_5$ or~$K_4$, and so $F \notin {\cal S}$.
Therefore the class of $(\{H,\overline{H}\} \cup {\cal F})$-free graphs contains the class of $(H,\overline{H},C_4,C_5,K_4)$-free graphs.
If $H\notin {\cal S}$ and $\overline{H}\notin {\cal S}$, then the class of $(H,\overline{H},C_4,C_5,K_4)$-free graphs has unbounded clique-width by Lemma~\ref{lem:classS}.
By Fact~\ref{fact:comp}, we may therefore assume that $H\in {\cal S}$.
By Lemma~\ref{lem:useful}, we may assume~$H$ contains $K_{1,3}+\nobreak P_1$, $2P_2$, $3P_1+\nobreak P_2$ or~$S_{1,1,2}$ as an induced subgraph, otherwise we are done.
In this case, the class of $(\{H,\overline{H}\} \cup {\cal F})$-free graphs contains the class of
$(K_{1,3},K_4,C_4,C_5)$-free, $(\{2P_2,C_4\} \cup {\cal F})$-free, $(\{4P_1,\overline{3P_1+P_2}\} \cup {\cal F})$-free or $(K_{1,3},\overline{2P_1+P_2},C_4,C_5,K_4)$-free graphs, respectively.
These classes have unbounded clique-width by Theorems~\ref{thm:claw-K4-diamond-C5-unbdd-cw}, \ref{thm:C4-2P_2-F-free-unbdd-cw}, \ref{thm:4P1-co-3P1+P2-C5-unbdd-cw} and~\ref{thm:claw-K4-diamond-C5-unbdd-cw}, respectively.
This completes the proof.\qedllncs
\end{proof}

\bibliography{mybib}

\end{document}